\newif\ifprocs
\newtheorem{theorem}{Theorem}[section]
\newtheorem{lemma}[theorem]{Lemma}
\newtheorem{corollary}[theorem]{Corollary}
\newtheorem{definition}[theorem]{Definition}
\newtheorem{problem}[theorem]{Problem}
\newtheorem{property}[theorem]{Property}
\theoremstyle{plain}
\newtheorem{claim}[theorem]{Claim}
\newcommand{\ProblemName}[1]{\textsf{#1}}
\newcommand{\kSTMVC}{\ProblemName{kSTMVC}\xspace}
\newcommand{\kAPMVC}{\ProblemName{kAPMVC}\xspace}
\newcommand{\APMVC}{\ProblemName{APMVC}\xspace}
\newcommand{\tO}{\ensuremath{\tilde O}}
\newcommand{\bigO}{\ensuremath{O}}
\providecommand{\card}[1]{\lvert#1\rvert}
\DeclareMathOperator{\rank}{rank}
\newcommand{\stmincut}{minimum $s$-$t$ cut\xspace}
\newcommand{\stmincuts}{minimum $s$-$t$ cuts\xspace}
\let\poly\undefined
\DeclareMathOperator{\poly}{poly}
\let\polylog\undefined
\DeclareMathOperator{\polylog}{polylog}
\let\deg\undefined
\DeclareMathOperator{\deg}{deg}
\newcommand\eps{\varepsilon}
\newcommand{\witness}{\textsc{Witness}\ \textsc{Superset}\xspace}
\newcommand{\lbar}{\overline}
\newcommand{\ignore}[1]{}
\newcommand{\reach}[2]{\ensuremath{#1 {\leadsto} #2}}
\newcommand{\nreach}[2]{\ensuremath{#1 {\not\leadsto} #2}}
\title{Faster Algorithms for All-Pairs Bounded Min-Cuts}
\author{
Amir Abboud\footnote{IBM Almaden Research Center, US. Email: \texttt{amir.abboud@ibm.com}}
\qquad
Loukas Georgiadis\footnote{University of Ioannina, Greece. Email: \texttt{loukas@cs.uoi.gr}}
\qquad
Giuseppe F. Italiano\footnote{LUISS University, Rome, Italy. Email: \texttt{gitaliano@luiss.it}}
\qquad
Robert Krauthgamer\footnote{Weizmann Institute of Science, Israel. Email: \texttt{robert.krauthgamer@weizmann.ac.il}.\newline Work supported in part by
  ONR Award N00014-18-1-2364,
  Israel Science Foundation grant \#1086/18,
  a Minerva Foundation grant,
  and a Google Faculty Research Award.}
\qquad\\
Nikos Parotsidis\footnote{University of Copenhagen, Denmark. Email: \texttt{nipa@di.ku.dk}}
\qquad
Ohad Trabelsi\footnote{Weizmann Institute of Science, Israel. Email: \texttt{ohad.trabelsi@weizmann.ac.il}. Partly done at IBM Almaden Research Center, US.}
\qquad
Przemys\l{}aw~Uzna\'nski\footnote{University of Wroc\l{}aw, Poland. Email: \texttt{puznanski@cs.uni.wroc.pl}}
\qquad
Daniel Wolleb-Graf\footnote{ETH Z\"urich, Switzerland. Email: \texttt{daniel.graf@inf.ethz.ch}}
}
\begin{document}

\maketitle

\begin{abstract}
The All-Pairs Min-Cut problem (aka All-Pairs Max-Flow) asks to compute
a minimum $s$-$t$ cut (or just its value) for all pairs of vertices $s,t$.
We study this problem in \emph{directed graphs} with
\emph{unit} edge/vertex capacities (corresponding to edge/vertex connectivity).
Our focus is on the \emph{$k$-bounded} case, where the algorithm has to find
all pairs with min-cut value less than $k$, and report only those.
The most basic case $k=1$ is the Transitive Closure (TC) problem,
which can be solved in graphs with $n$ vertices and $m$ edges 
in time $O(mn)$ combinatorially, and in time $O(n^{\omega})$
where $\omega<2.38$ is the matrix-multiplication exponent.
These time bounds are conjectured to be optimal. 

We present new algorithms and conditional lower bounds
that advance the frontier for larger $k$, as follows:
\begin{itemize}
\item
A randomized algorithm for \emph{vertex capacities} 
that runs in time $\bigO((nk)^{\omega})$.
This is only a factor $k^\omega$ away from the TC bound,
and nearly matches it for all $k=n^{o(1)}$. 
\item
Two deterministic algorithms for \emph{edge capacities} (which is more general)
that work in DAGs and further reports a minimum cut for each pair. 
The first algorithm is combinatorial (does not involve matrix multiplication)
and runs in time $\bigO(2^{\bigO(k^2)}\cdot mn)$. 
The second algorithm can be faster on dense DAGs and runs in time $\bigO((k\log n)^{4^k+o(k)}\cdot n^{\omega})$. 
Previously, Georgiadis et al.\ [ICALP 2017], could match the TC bound
(up to $n^{o(1)}$ factors) only when $k=2$, and now our two algorithms 
match it for all $k=o(\sqrt{\log n})$ and $k=o(\log\log n)$.
\item
The first super-cubic lower bound of $n^{\omega-1-o(1)} k^2$ time
under the $4$-Clique conjecture,
which holds even in the simplest case of DAGs with unit vertex capacities.
It improves on the previous (SETH-based) lower bounds
even in the unbounded setting $k=n$. 
For combinatorial algorithms, our reduction implies an $n^{2-o(1)} k^2$ conditional lower bound.
Thus, we identify new settings where the complexity of the problem is (conditionally) higher than that of TC.
\end{itemize}

Our three sets of results are obtained via different techniques. 
The first one adapts the network coding method of Cheung, Lau, and Leung [SICOMP 2013] to vertex-capacitated digraphs. 
The second set exploits new insights on the structure of latest cuts
together with suitable algebraic tools. 
The lower bounds arise from a novel reduction of a different structure
than the SETH-based constructions. 

\end{abstract}

\section{Introduction}
Connectivity-related problems are some of the most well-studied problems in graph theory and algorithms, and have been thoroughly investigated in the literature.
Given a directed graph $G=(V,E)$ with $n=|V|$ vertices and $m=|E|$ edges,%
\footnote{We sometimes use arcs when referring to directed edges, 
  or use nodes instead of vertices.} 
perhaps the most fundamental such problem is to compute a \emph{\stmincut}, i.e., a set of edges $E'$ of minimum-cardinality such that $t$ is not reachable from $s$ in $G \setminus E'$. 
This \stmincut problem is well-known to be equivalent to maximum $s$-$t$ flow, as they have the exact same value \cite{ford1962flows}. 
Currently, the fastest algorithms for this problem
run in time $\tO(m \sqrt{n} \log^{\bigO(1)} U)$ \cite{Lee2014Path}
and $\tO(m^{10/7} U^{1/7})$ (faster for sparse graphs) \cite{Madry2016Computing},
where $U$ is the maximum edge capacity (aka weight).%
\footnote{The notation $\tO(\cdot)$ hides polylogarithmic factors.}

The central problem of study in this paper is All-Pairs Min-Cut
(also known as All-Pairs Max-Flow),
where the input is a digraph $G=(V,E)$
and the goal is to compute the \stmincut value for all $s,t\in V$. 
All our graphs will have \emph{unit} edge/vertex capacities (aka uncapacitated),
in which case the value of the \stmincut is just
the maximum number of disjoint paths from $s$ to $t$
(aka edge/vertex connectivity), by~\cite{menger}. 
We will consider a few variants:
vertex capacities vs.\ edge capacities,%
\footnote{The folklore reduction where each vertex $v$ is replaced by two
  vertices connected by an edge $v_{in} \to v_{out}$ shows that
  in all our problems,
  vertex capacities are no harder (and perhaps easier) than edge capacities.
  Notice that this is only true for directed graphs. 
}
reporting only the value vs.\ the cut itself (a witness), 
or a general digraph vs.\ a directed acyclic graph (DAG).
For all these variants, we will be interested in the \emph{$k$-bounded} version
(aka \emph{bounded min-cuts}, hence the title of the paper)
where the algorithm needs to find which \stmincuts
have value less than a given parameter $k<n$, and report only those.  
Put differently, the goal is to compute, for every $s,t\in V$, 
the minimum between $k$ and the actual \stmincut value.
Nonetheless, some of our results (the lower bounds) are of interest even without this restriction.

The time complexity of these problems should be compared against the fundamental special case that lies at their core ---
the Transitive Closure problem (aka All-Pairs Reachability),
which is known to be time-equivalent to Boolean Matrix Multiplication,
and in some sense, to Triangle Detection \cite{VW18}.
This is the case $k=1$, and it can be solved in time
$O(\min\{ mn , n^\omega \})$, where $\omega<2.38$ is the matrix-multiplication exponent~\cite{matrix_mult:cw,LeGall:2014,Williams:2012}; the latter term is asymptotically better for dense graphs, but it is not \emph{combinatorial}.%
\footnote{Combinatorial is an informal term to describe algorithms
  that do not rely on fast matrix-multiplication algorithms,
  which are infamous for being impractical.
  See~\cite{Abboud2014,ABV15b_parsing} for further discussions.
}
This time bound is conjectured to be optimal for Transitive Closure, 
which can be viewed as a conditional lower bound for All-Pairs Min-Cut;
but can we achieve this time bound algorithmically,
or is All-Pairs Min-Cut a harder problem?

The naive strategy for solving All-Pairs Min-Cut is to execute a \stmincut 
algorithm $O(n^2)$ times, 
with total running time $\tO(n^2 m^{10/7})$ \cite{Madry2016Computing}
or $\tO(n^{2.5}m)$ \cite{Lee2014Path}.
For not-too-dense graphs, there is
a faster randomized algorithm of Cheung, Lau, and Leung \cite{CheungLL13}
that runs in time $\bigO(m^\omega)$. 
For smaller $k$, some better bounds are known.
First, observe that a \stmincut can be found via $k$ iterations of the Ford-Fulkerson algorithm \cite{ford1962flows} in time $\bigO(km)$,
which gives a total bound of $\bigO(n^2mk)$. 
Another randomized algorithm of~\cite{CheungLL13}
runs in better time $O(mnk^{\omega-1})$ but it works only in DAGs. 
Notice that the latter bound matches the running time of Transitive Closure
if the graphs are sparse enough. 
For the case $k=2$, Georgiadis et al.~\cite{icalp2017GGIPU} achieved the same running time as Transitive Closure up to sub-polynomial factor $n^{o(1)}$
in all settings,
by devising two deterministic algorithms,
whose running times are $\tO(mn)$ and $\tO(n^\omega)$.

Other than the lower bound from Transitive Closure,
the main previously known result is from \cite{KrauthgamerT18},
which showed that under the Strong Exponential Time Hypothesis (SETH),%
\footnote{These lower bounds hold even under the weaker assumption
  that the $3$-Orthogonal Vectors problem requires $n^{3-o(1)}$ time.
}
All-Pairs Min-Cut requires, up to sub-polynomial factors, time $\Omega(mn)$
in uncapacitated digraphs of any edge density,
and even in the simpler case of (unit) vertex capacities and of DAGs.
As a function of $k$ their lower bound becomes $\Omega(n^{2-o(1)}k)$ \cite{KrauthgamerT18}. 
Combining the two, we have a conditional lower bound of
$(n^2k+n^{\omega})^{1-o(1)}$.

\subparagraph{Related Work.}
There are many other results related to our problem, let us mention a few. 
Other than DAGs, the problem has also been considered in the special cases of planar digraphs \cite{Arikati1998All,Lacki2012Single}, sparse digraphs and digraphs with bounded treewidth \cite{Arikati1998All}.

In \emph{undirected} graphs, the problem was studied extensively 
following the seminal work of Gomory and Hu \cite{Gomory1961Multi} in 1961,
which introduced a representation of All-Pairs Min-Cuts via a weighted tree,
commonly called a Gomory-Hu tree, 
and further showed how to compute it using $n-1$ executions of maximum $s$-$t$ flow.
Bhalgat et al.~\cite{BHKP07} designed an algorithm that computes a Gomory-Hu tree in uncapacitated undirected graphs in $\tO(mn)$ time,
and this upper bound was recently improved \cite{AKT}. 
The case of bounded min-cuts (small $k$) in undirected graphs was studied by Hariharan et al.~\cite{Hariharan2007Efficient}, motivated in part by applications in practical scenarios.
The fastest running time for this problem is $\tO(mk)$~\cite{Panigrahi16},
achieved by combining results 
from~\cite{Hariharan2007Efficient} and~\cite{BHKP07}.
On the negative side, there is an $n^{3-o(1)}$ lower bound for All-Pairs Min-Cut in sparse \emph{capacitated} digraphs \cite{KrauthgamerT18}, and very recently, a similar lower bound was shown for \emph{undirected} graphs with vertex capacities \cite{AKT}.

\subsection{Our Contribution}

The goal of this work is to reduce the gaps in our understanding of the All-Pairs Min-Cut problem
(see Table~\ref{tab:results} for a list of known and new results).
In particular, we are motivated by three high-level questions. 
First, how large can $k$ be
while keeping the time complexity the same as Transitive Closure?
Second, could the problem be solved in cubic time (or faster) in all settings? Currently no $\Omega(n^{3+\eps})$ lower bound is known even
in the hardest settings of the problem (capacitated, dense, general graphs).
And third, can the actual cuts (witnesses) be reported
in the same amount of time it takes to only report their values?
Some of the previous techniques, such as those of \cite{CheungLL13}, cannot do that.

\subparagraph{New Algorithms.}
Our first result is a randomized algorithm 
that solves the $k$-bounded version of All-Pairs Min-Cut 
in a digraph with \emph{unit vertex capacities} 
in time $\bigO((nk)^{\omega})$.
This upper bound is only a factor $k^\omega$ away from that of Transitive Closure, and thus matches it up to polynomial factors for any $k=n^{o(1)}$.
Moreover, any $\poly(n)$-factor improvement over our upper bound 
would imply a breakthrough for Transitive Closure (and many other problems). 
Our algorithm builds on the network-coding method of~\cite{CheungLL13},
and in effect adapts this method to the easier setting of vertex capacities,
to achieve a better running time than what is known for unit edge capacities.
This algorithm is actually more general: 
Given a digraph $G=(V,E)$ with unit vertex capacities,
two subsets $S, T \subseteq V$ and $k>0$,
it computes for all $s \in S, t \in T$ the \stmincut value
if this value is less than $k$, all in time  $\bigO((n+(\card{S}+\card{T})k)^{\omega}+\card{S}\card{T}k^{\omega})$.
We overview these results in Section~\ref{sec:general_overview},
with full details in Section~\ref{app:general}.

Three weaknesses of this algorithm and the ones by Cheung et al.~\cite{CheungLL13} are that they do not return the actual cuts, 
they are randomized, and they are not combinatorial. 
Our next set of algorithmic results deals with these issues.
More specifically, we present two deterministic algorithms for DAGs with unit edge (or vertex) capacities that compute, for every $s,t\in V$,
an actual \stmincut if its value is less than $k$.
The first algorithm is \emph{combinatorial} (i.e., it does not involve matrix multiplication) and runs in time $\bigO(2^{\bigO(k^2)}\cdot mn)$.
The second algorithm can be faster on dense DAGs and runs in time $\bigO((k\log n)^{4^k+o(k)}\cdot n^{\omega})$.
These algorithms extend the results of Georgiadis et al.~\cite{icalp2017GGIPU},
which matched the running time of Transitive Closure up to $n^{o(1)}$ factors, from just $k=2$ to any $k=o(\sqrt{\log n})$ (in the first case) and $k=o(\log\log n)$ (in the second case).
We give an overview of these algorithms in Section~\ref{sec:latest},
and the formal results are Theorems~\ref{th:dynprog} and~\ref{thm:latestDense}.

\subparagraph{New Lower Bounds.}
Finally, we present conditional lower bounds for our problem, 
the $k$-bounded version of All-Pairs Min-Cut.
As a result,
we identify new settings where the problem is harder than Transitive Closure,
and provide the first evidence that the problem cannot be solved in cubic time.
Technically, the main novelty here is a reduction from the $4$-Clique problem.
It implies lower bounds that
apply to the basic setting of DAGs with unit vertex capacities, and therefore immediately apply also to more general settings,
such as edge capacities, capacitated inputs, and general digraphs,
and they in fact improve over previous lower bounds \cite{AVY18,KrauthgamerT18} in all these settings.%
\footnote{It is unclear if our new reduction can be combined with the ideas in \cite{AKT} to improve the lower bounds in the seemingly easier case of undirected graphs with vertex capacities.} 
We prove the following theorem in Section~\ref{sec:CLB}. 

\begin{theorem}
\label{thm:CLB}
If for some fixed $\eps>0$ and any $k \in [n^{1/2},n]$,
the $k$-bounded version of All-Pairs Min-Cut can be solved
on DAGs with unit vertex capacities
in time $O( (n^{\omega-1}k^2)^{1-\eps} )$,
then $4$-Clique can be solved in time $O(n^{\omega+1 - \delta})$
for some $\delta=\delta(\eps)>0$.

Moreover, if for some fixed $\eps>0$ and any $k \in [n^{1/2},n]$
that version of All-Pairs Min-Cut can be solved combinatorially 
in time $O( (n^{2}k^2)^{1-\eps} )$,
then $4$-Clique can be solved combinatorially 
in time $O(n^{4- \delta})$ 
for some $\delta=\delta(\eps)>0$.
\end{theorem}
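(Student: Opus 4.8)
The plan is to reduce $4$-Clique to the $k$-bounded All-Pairs Min-Cut problem on a DAG with unit vertex capacities. Given a graph $H$ on $n$ vertices, partition its vertex set into four groups $V_1, V_2, V_3, V_4$ (we may assume $H$ is $4$-partite by a standard reduction, losing only constant factors). The idea is to build a layered DAG that has five ``columns'' of vertices, where the middle three columns correspond to $V_2, V_3$ and an auxiliary copy, and where a pair $(s,t)$ with $s$ encoding a vertex $a \in V_1$ and $t$ encoding a vertex $d \in V_4$ will have many vertex-disjoint $s$-$t$ paths exactly when there is a triangle $b \in V_2, c \in V_3$ (with $b,c$ adjacent to $a$, to $d$, and to each other) completing $a,d$ to a clique. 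More precisely, I would route, for each candidate pair $(b,c)$ with $bc \in E(H)$, a gadget path through dedicated vertices; the path is ``active'' (usable) for the pair $(a,d)$ only if $ab, ac, bd, cd \in E(H)$. The min-cut value between $s_a$ and $t_d$ then counts (up to a threshold) the number of such active pairs, i.e., the number of ways to extend $\{a,d\}$ to a $4$-clique, and in particular it crosses the threshold $k$ iff a $4$-clique through $a,d$ exists. Choosing $k$ of order $\sqrt{n}$ (or a suitable polynomial in $n$) and padding so that the bound is met requires balancing the number of gadget vertices; this is where the parameter regime $k \in [n^{1/2},n]$ in the statement comes from.

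The key steps, in order, are: (1) describe the $4$-partite reduction and fix notation for $H$ and its parts; (2) construct the DAG $D$ explicitly, specifying the source-side vertices (one per $a\in V_1$), the sink-side vertices (one per $d\in V_4$), the gadget vertices for each edge $bc$ with $b\in V_2$, $c\in V_3$, and the arcs, making sure the only way capacity flows from $s_a$ to $t_d$ is through a gadget whose four ``compatibility'' vertices survive — encode the adjacency tests as the presence/absence of intermediate vertices, so that with unit vertex capacities a path is blocked exactly when one of the four required edges is missing; (3) prove the correctness claim: $\mathrm{mincut}_D(s_a,t_d) = \min(k, \#\{(b,c) : abcd \text{ is a clique}\})$, using Menger's theorem, so that $H$ has a $4$-clique iff some pair $(s_a,t_d)$ has min-cut value $\ge$ the threshold (we can set the threshold to, say, $1$ after scaling, or keep it as a count and check whether any pair reaches $k$); (4) bound the size of $D$: it has $N = O(n + |E(H)|) = O(n^2)$ vertices, and we arrange the construction so the relevant cut values live in the range $[n^{1/2}, n]$, matching the hypothesis's range of $k$; (5) do the running-time arithmetic: a $T(N,k)$-time algorithm for $k$-bounded All-Pairs Min-Cut on $D$ solves $4$-Clique, so $T(N,k) = O((N^{\omega-1}k^2)^{1-\eps})$ with $N=\Theta(n^2)$ and $k$ in the right range yields $O(n^{(\omega-1)\cdot 2 \cdot(1-\eps)} \cdot n^{2(1-\eps)}) = O(n^{\omega+1 - \delta})$ for $\delta = \delta(\eps) > 0$, contradicting the $4$-Clique conjecture; for the combinatorial statement, substitute the combinatorial $4$-Clique bound $n^{4}$ and the algorithm bound $(N^2 k^2)^{1-\eps} = (n^4 k^2)^{1-\eps}$, again in the range $k\in[n^{1/2},n]$, to get $O(n^{4-\delta})$ combinatorially.

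I expect the main obstacle to be step (2)–(3): designing the gadget so that the four adjacency conditions $ab, ac, bd, cd$ are tested \emph{simultaneously} by vertex-disjointness, while still making the min-cut a \emph{clean count} of extensions rather than something contaminated by spurious short paths or by interference between gadgets of different $(b,c)$ pairs. Concretely, one must ensure that distinct gadget paths are internally vertex-disjoint (so that $\ell$ active pairs really give $\ell$ disjoint paths and a min-cut of size $\min(k,\ell)$), and that the ``adjacency vertices'' are shared in exactly the right pattern — a vertex representing the edge $ab$ should be on the gadget path for $(b,c)$ for every $c$, which forces careful bookkeeping to avoid accidentally creating a vertex cut of size smaller than the true extension count. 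Getting the bounded parameter $k$ to fall in $[n^{1/2},n]$ simultaneously with $N=\Theta(n^2)$ may also require a small amount of padding (adding dummy disjoint paths of fixed length to raise the floor), which is routine but must be stated. Once the gadget is right, correctness is a direct application of Menger's theorem and the time bookkeeping is the elementary arithmetic sketched above.
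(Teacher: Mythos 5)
There is a genuine gap, and it is fatal to the time accounting. Your construction introduces a gadget (at least one vertex) for every edge $(b,c)$ with $b\in V_2$, $c\in V_3$, so the All-Pairs Min-Cut instance has $N=\Theta(n^2)$ vertices, and the theorem's hypothesis must then be applied with $N$, not $n$, as the instance size (and with $k$ in the range $[N^{1/2},N]$, i.e.\ $k\ge n$). Your own arithmetic already breaks: $\bigl(N^{\omega-1}k^2\bigr)^{1-\varepsilon}$ with $N=\Theta(n^2)$ and $k=\Theta(n)$ is $n^{(2\omega-2+2)(1-\varepsilon)}=n^{2\omega(1-\varepsilon)}$, and $2\omega\approx 4.75$ is much larger than $\omega+1\approx 3.37$, so no $\delta(\varepsilon)>0$ comes out; the line ``$O(n^{2(\omega-1)(1-\varepsilon)}\cdot n^{2(1-\varepsilon)})=O(n^{\omega+1-\delta})$'' is simply false, and the combinatorial case fails the same way ($(N^2k^2)^{1-\varepsilon}\ge n^{5(1-\varepsilon)}\not\le n^{4-\delta}$). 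Moreover, with your ``clean count'' gadget the existence of a $4$-clique through $(a,d)$ is already witnessed by a single path, so detection only needs $k=2$; to make the construction genuinely exploit large $k$ you resort to padding, but that does not repair the size blowup, which is the real obstruction.

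The paper's proof avoids exactly this blowup, and it does so with two ideas that are absent from your proposal. First, the reduced graph $H$ has only $O(n)$ vertices: it keeps the original vertices of the $4$-partite graph and, besides the arcs $A\to B\to C\to D$ encoding edges of $G$, adds \emph{complement} arcs $a\to c$ when $\{a,c\}\notin E(G)$ and $b\to d$ when $\{b,d\}\notin E(G)$. The vertex connectivity $NC(a,d)$ is then \emph{not} a clean count of clique extensions; instead, one proves (Claim~\ref{cl:technical}) that any $3$-hop path $a\to b\to c\to d$ that does not come from a $4$-clique can be shortcut to a $2$-hop path, so $NC(a,d)$ equals the baseline $|B'_{a,d}|+|C'_{a,d}|$ exactly when $\{a,d\}$ is in no $4$-clique, and is strictly larger otherwise; the baselines are computed separately by two Boolean matrix products and compared in post-processing. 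Second, to get the $k^2$ factor for bounded $k$, the paper partitions $A$ and $D$ into blocks of size $k$ and builds $O(n^2/k^2)$ instances, each still on $O(n)$ vertices, in which copies $a_x$ and $d_y$ are wired only to blocks $B_x$ and $C_y$ of size $k$, so every relevant connectivity is $O(k)$ and the $k$-bounded algorithm applies; summing $\frac{n^2}{k^2}\cdot T(n,k)$ plus the matrix-multiplication cost gives $O(n^{\omega+1-\delta})$ (resp.\ $O(n^{4-\delta})$ combinatorially). Without an $O(n)$-vertex construction and this block decomposition, the stated bounds cannot be reached by your route.
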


To appreciate the new bounds, consider first the case $k=n$,
which is equivalent to not restricting $k$. 
The previous lower bound, under SETH, is $n^{3-o(1)}$
and ours is larger by a factor of $n^{\omega-2}$.
For combinatorial algorithms, our lower bound is $n^{4-o(1)}$,
which is essentially the largest possible lower bound one can prove
without a major breakthrough in fine-grained complexity.
This is because the naive algorithm for All-Pairs Min-Cuts
is to invoke an algorithm for Max-Flow $O(n^2)$ times,
hence a lower bound larger than $\Omega(n^4)$ for our problem
would imply the first non-trivial lower bound for \stmincut. 
The latter is perhaps the biggest open question in fine-grained complexity,
and in fact many experts believe that near-linear time algorithms for \stmincut do exist, and can even be considered ``combinatorial'' in the sense that they do not involve the infamous inefficiencies of fast matrix multiplication.
If such algorithms for \stmincut do exist, then our lower bound is tight.

Our lower bound shows that as $k$ exceeds $n^{1/2 -o(1)}$,
the time complexity of $k$-bounded of All-Pairs Min-Cut
exceeds that of Transitive Closure by polynomial factors. 
The lower bound is super-cubic whenever $k\geq n^{2-\omega/2 + \eps}$.

\begin{table}[htb!]
\renewcommand{\arraystretch}{1.1}%
\begin{center}
\begin{tabular}{lllll}
\hline\hline
Time & & Input & Output & Reference\\
\hline
  $O(mn),\tO(n^{\omega})$  & deterministic & digraphs & cuts, only $k=2$ & \cite{icalp2017GGIPU} \\
  $O(n^2mk)$  & deterministic & digraphs & cuts & \cite{ford1962flows} \\
  $O(m^{\omega})$  & randomized & digraphs & cut values & \cite{CheungLL13}\\
  $O(mnk^{\omega-1})$  & randomized & digraphs & cut values & \cite{CheungLL13}\\
  $O((nk)^\omega)$ & randomized, vertex capacities  & digraphs & cut values & Theorem~\ref{Thm:Vertices}\\
  $2^{O(k^2)}mn$  & deterministic & DAGs & cuts & Theorem~\ref{th:dynprog}  \\
  $(k\log n)^{4^k+o(k)}\cdot n^{\omega}$  & deterministic & DAGs & cuts & Theorem~\ref{thm:latestDense}\\
  \hline
  $(mn+n^{\omega})^{1-o(1)}$ & based on Transitive Closure & DAGs & cut values & \\
  $n^{2-o(1)}k$ & based on SETH & DAGs & cut values & \cite{KrauthgamerT18}  \\
  $n^{\omega-1-o(1))} k^2$ & based on 4-Clique & DAGs & cut values & Theorem~\ref{thm:CLB} \\
\hline
\end{tabular}
\caption{Summary of new and known results. Unless mentioned otherwise, all upper and lower bounds hold both for unit edge capacities and for unit vertex capacitities. \label{tab:results} }
\end{center}
\end{table}

\section{Preliminaries}

We start with some terminology and well-known results on graphs and cuts. Next we will briefly introduce the main algebraic tools that will be used throughout the paper. 
We note that although we are interested in solving the $k$-bounded All-Pairs Min-Cut problem, where we wish to find  the all-pairs min-cuts of size at most $k-1$, for the sake of using simpler notation we compute the min-cuts of size at most $k$ (instead of less than $k$) solving this way the ($k+1$)-bounded All-Pairs Min-Cut problem. 

\subparagraph{Directed graphs.}
The input of our problem consists of an integer $k \geq 1$ and a \emph{directed graph}, digraph for short, $G = (V,A)$ with $n := |V|$ \emph{vertices} and $m := |A|$ \emph{arcs}.
Every arc $a = (u,v) \in A$ consists of a tail $u \in V$ and a head $v \in V$.
By $G[S]$, we denote the subgraph of $G$ \emph{induced} by the set of vertices $S$, formally $G[S] = (S, A \cap (S \times S))$.
By $N^+(v)$, we denote the \emph{out-neighborhood} of $v$ consisting of all the heads of the arcs leaving $v$. We denote by $\text{outdeg}(v)$ the number of outgoing arcs from $v$.
All our results extend to multi-digraphs, where each pair of vertices can be connected with multiple (\emph{parallel}) arcs.
For parallel arcs, we always refer to each arc individually, as if  each arc had a unique identifier.
So whenever we refer to a set of arcs, we refer to the set of their unique identifiers, i.e., without collapsing parallel arcs, like in a multi-set.

\subparagraph{Flows and cuts.}
We follow the notation used by Ford and Fulkerson~\cite{ford1962flows}.
Let $G=(V,A)$ be a digraph, where each arc $a$ has a nonnegative capacity $c(a)$.
For a pair of vertices $s$ and $t$, an $s$-$t$ flow of $G$ is a function $f$ on $A$ such that $0  \le f(a) \le c(a)$, and for every vertex $v \not= s, t$ the incoming flow
is equal to outgoing flow, i.e., $\sum_{(u,v) \in A} f(u,v) = \sum_{(v,u) \in A} f(v,u)$.
If $G$ has vertex capacities as well, then $f$ must also satisfy $\sum_{(u,v) \in A} f(u,v) \le c(v)$ for every $v \not= s, t$, where $c(v)$ is the capacity of $v$.
The value of the flow is defined as $|f|=\sum_{(s,v) \in A} f(s,v)$.  
We denote the existence of a path from $s$ to $t$ by $\reach{s}{t}$ and by $\nreach{s}{t}$ the lack of such a path.
Any set $M \subseteq A$ is an \emph{$s$-$t$-cut} if $\nreach{s}{t}$ in $G \setminus M$.
$M$ is a \emph{minimal} $s$-$t$-cut if no proper subset of $M$ is $s$-$t$-cut.
For an $s$-$t$-cut $M$, we say that its \emph{source side} is $S_M = \{x \mid \reach{s}{x} \text{ in } G\setminus M\}$ and its \emph{target side} is $T_M = \{x \mid \reach{x}{t} \text{ in } G\setminus M\}$.
We also refer to the source side and the target side as \emph{$s$-reachable} and \emph{$t$-reaching}, respectively.
An $s$-$t$ $k$-cut is a minimal cut of size $k$.
A set $\mathcal{M}$ of $s$-$t$ cuts of size at most $k$ is called a set of $s$-$t$ $\leq$ $k$-cuts. We can define vertex cuts analogously.

\begin{figure}[h]
	\begin{minipage}{\textwidth}
		\centering\includegraphics[scale=0.45]{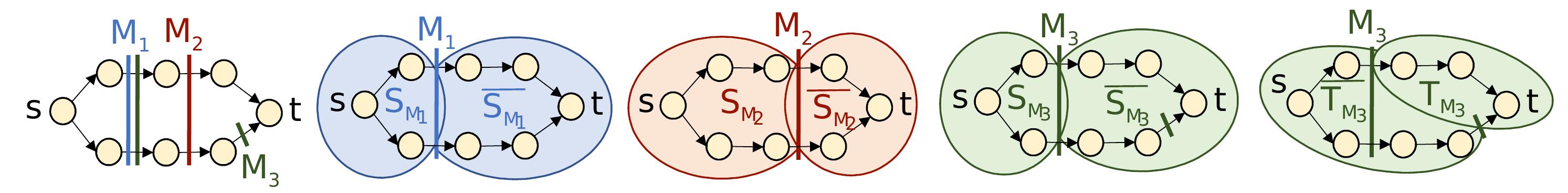}
	\end{minipage}
	\caption{A digraph with three $s$-$t$-cuts $M_1$, $M_2$, $M_3$.
		While $M_1$ and $M_2$ are minimal, 
		$M_3$ is not.
		Hence, the source side and target side differ only for $M_3$.
		This illustrates that the earlier and later orders might not be symmetric for non-minimal cuts.
		We have $M_3 < M_2$ yet $M_2 \ngtr M_3$ (and also $M_3 \leq M_2$ yet $M_2 \ngeq M_3$).
		Additionally, $M_1 \nless M_3$ yet $M_3 > M_1$ (yet both $M_1 \leq M_3$ and $M_3 \geq M_1$).
	}
	\label{fig:simple_cuts_example}
\end{figure}

\begin{figure}[h]
	\begin{minipage}{\textwidth}
		\centering\includegraphics[scale=0.6]{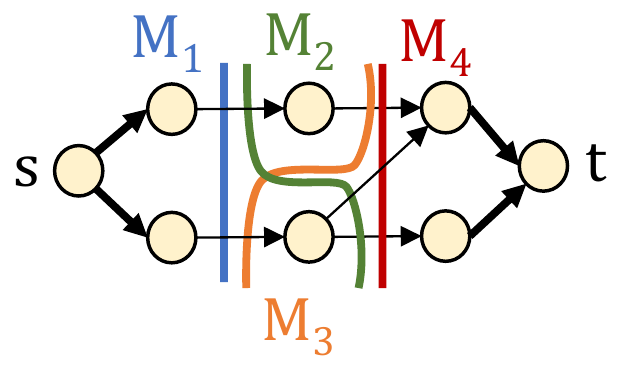}
	\end{minipage}
	\caption{A digraph with several $s$-$t$ cuts. Bold arcs represent parallel arcs which are too expensive to cut.
		$M_1$ is the earliest $s$-$t$ min-cut and $M_3$ is the latest $s$-$t$ min-cut. $M_2$ is later than $M_1$, but $M_2$ is not $s$-$t$-latest, as $M_4$ is later and not larger than $M_2$.
	}
	\label{fig:latest_example}
\end{figure}

\subparagraph{Order of cuts.}
An $s$-$t$ cut $M$ is \emph{later} (respectively \emph{earlier}) than an $s$-$t$ cut $M'$ if and only if $T_M \subseteq T_{M'}$ (resp. $S_M \subseteq S_{M'}$), and we denote it $M \ge M'$ (resp. $M \le M'$). Note that those relations are not necessarily complementary if the cuts are not minimal (see Figure~\ref{fig:simple_cuts_example} for an example).
We make these inequalities strict (i.e., `$>$' or `$<$') whenever the inclusions are proper.
We compare a cut $M$ and an arc $a$ by defining $a > M$ whenever both endpoints of $a$ are in $T_M$. Additionally, $a \geq M$ includes the case where $a \in M$.
Definitions of the relations `$\le$' and `$<$' follow by symmetry.
We refer to Figure~\ref{fig:latest_example} for illustrations.
This partial order of cuts also allows us to define cuts that are extremal with respect to all other $s$-$t$ cuts in the following sense:

\begin{definition}[$s$-$t$-latest cuts {\cite{marx2006parameterized}}]
	An $s$-$t$ cut is \emph{$s$-$t$-latest} (resp. \emph{$s$-$t$-earliest}) if and only if there is no later (resp. earlier) $s$-$t$ cut of smaller or equal size.
\end{definition}

Informally speaking, a cut is $s$-$t$-latest if we would have to cut through more arcs whenever we would like to cut off fewer vertices. This naturally extends the definition of an $s$-$t$-latest \emph{min}-cut as used by Ford and Fulkerson~\cite[Section 5]{ford1962flows}.
The notion of latest cuts has first been introduced by Marx~\cite{marx2006parameterized} (under the name of \emph{important} cuts) in the context of fixed-parameter tractable algorithms for multi(way) cut problems.
Since we need both earliest and latest cuts, we do not refer to latest cuts as important cuts.
Additionally, we use the term $s$-$t$-\emph{extremal} cuts to refer to the union of $s$-$t$-earliest and $s$-$t$-latest cuts.
\ignore{To avoid repetitions below, we state some results only for \emph{latest} cuts. However, all of them naturally extend to earliest cuts.

%
\begin{lemma}[Latest $s$-$t$ min-cut {\cite[Theorem 5.5]{ford1962flows}}]
	\label{lem:mincutinclusion}
	For any directed graph $G = (V,A)$, any maximum $s$-$t$ flow $f$ defines the same set of $t$-reaching vertices
	$T_{s,t}$ and thus defines an $s$-$t$ cut $M=A \cap (\lbar{T_{s,t}} \times T_{s,t})$, with $T_{s,t} = \{x \in V \mid \exists\text{ $x$-$t$ path in residual graph of $G$ under flow $f$}\}$.
	For any $s$-$t$ min-cut $M'$, we have $M' \le M$.
\end{lemma}
Maximum flows are not necessarily unique, but Lemma~\ref{lem:mincutinclusion} shows that the $t$-reaching cut $M$ is.

\begin{corollary}
	\label{lem:latestmincut}
	For any digraph $G$ and vertices $s$ and $t$, the latest $s$-$t$ min-cut is unique.
\end{corollary}
}

We will now briefly recap the framework of Cheung et al.~\cite{CheungLL13} as we will modify them later for our purposes.

\section{Overview of Our Algorithmic Approach}

\subsection{Randomized Algorithms on General Graphs}
\label{sec:general_overview}

In the framework of~\cite{CheungLL13} edges are encoded as vectors, so that the vector of each edge $e=(u,v)$ is a randomized linear combination of the vectors correspond to edges incoming to $u$, the source of $e$. One can compute all these vectors for the whole graph, simultaneously, using some matrix manipulations. The bottleneck is that one has to invert a certain $m \times m$ matrix with an entry for each pair of edges. Just reading the matrix that is output by the inversion requires $\Omega(m^2)$ time, since most entries in the inverted matrix are expected to be nonzero even if the graph is sparse. 

To overcome this barrier, while using the same framework, we define the encoding vectors on the nodes rather than the edges. We show that this is sufficient for the vertex-capacitated setting.
Then, instead of inverting a large matrix, we need to compute the rank of certain submatrices which becomes the new bottleneck.
When $k$ is small enough, this turns out to lead to a significant speed up compared to the running time in~\cite{CheungLL13}.

\subsection{Deterministic Algorithms with Witnesses on DAGs}
\label{sec:latest}


Here we deal with the problem of computing certificates for the $k$-bounded All-Pairs Min-Cut problem. 
Our contribution here is twofold.
We first prove some properties of the structure of the $s$-$t$-latest $k$-cuts and of the $s$-$t$-latest ${\leq}k$-cuts, which might be of independent interest. This gives us some crucial insights on the structure of the cuts, and allows us to develop an algorithmic framework which is used to solve the k-bounded All-Pairs Min-Cut problem. 
As a second contribution, we exploit our new algorithmic framework in two different ways, leading to two new algorithms which run in $\bigO(mn^{1+o(1)})$ time for $k=o(\sqrt{\log n})$ and in $\bigO(n^{\omega +o(1)})$ time for $k=o(\log \log n)$.

\begin{figure}[t]
	\begin{minipage}{\textwidth}
		\centering\includegraphics[scale=0.6]{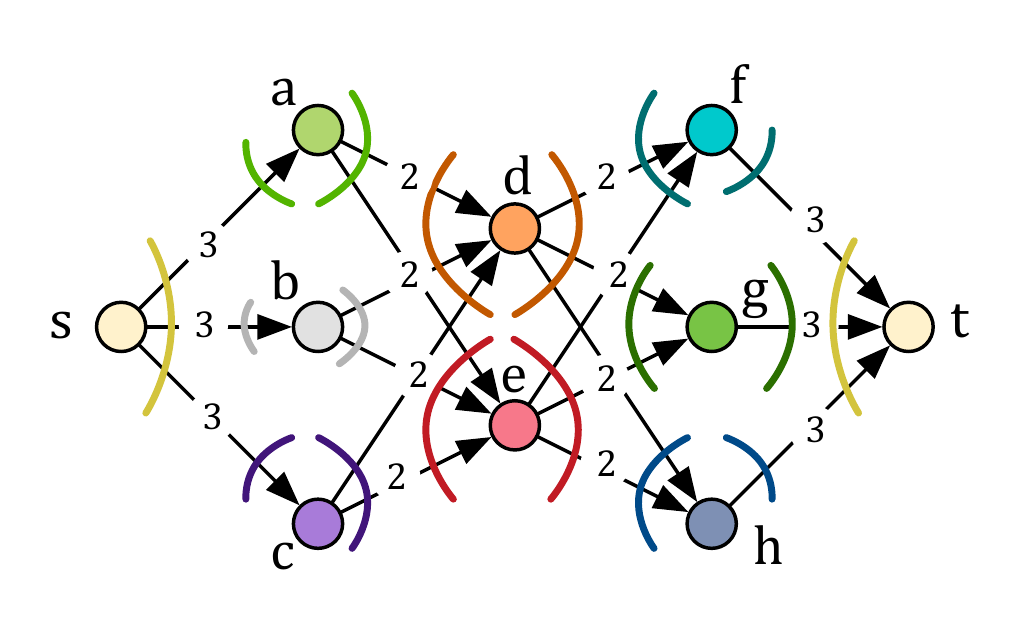}
	\end{minipage}
	\caption{A digraph where each arc appears in at least one $s$-$v$ or one $v$-$t$ min-cut.
		The numbers on the arcs denote the number of parallel arcs.
		Note that neither of the two $s$-$t$ min-cuts of size $9$ (marked in yellow) are contained within the union of any two $s$-$v$ or $v$-$t$ min-cuts. Thus, finding all those min-cuts and trying to combine them in pairs in a divide-and-conquer-style approach is not sufficient to find an $s$-$t$ min-cut.
		\label{fig:min_cut_only_counterexample}
	}
\end{figure}

Let $G=(V,A)$ be a DAG. Consider some arbitrary pair of vertices $s$ and $t$, and any $s$-$t$-cut $M$.
For every intermediate vertex $v$, $M$ must be either a $s$-$v$-cut, or a $v$-$t$-cut.
The knowledge of all $s$-$v$ and all $v$-$t$ min-cuts does not allow us to convey enough information for computing an $s$-$t$ min-cut of size at most $k$ quickly, as illustrated in Figure~\ref{fig:min_cut_only_counterexample}.
However, we are able to compute an $s$-$t$ min-cut by processing all the $s$-$v$-\emph{earliest} cuts and all the $v$-$t$-\emph{latest} cuts, of size at most $k$.
We build our approach around this insight.
We note that the characterization that we develop is particularly useful, as
it has been shown that
the number of all earliest/latest $u$-$v$ $\le$$k$-cuts can be upper bounded by $2^{\bigO(k)}$, independently of the size of the graph.

For a more precise formulation on how to recover a min-cut (or extremal $\le$$k$-cuts) from cuts to and from intermediate vertices, consider the following.
Let $A_1,A_2$ be an \emph{arc split}, that is a partition of the arc
set $A$ with the property that any path in $G$ consists of a (possibly empty) sequence of arcs from $A_1$ followed by a (possibly empty)
sequence of arcs from $A_2$
(see Definition~\ref{def:arc-split}).
Assume that for each vertex $v$ we know all the $s$-$v$-earliest ${\leq}k$-cuts in $G_1=(V,A_1)$ and all the $v$-$t$-latest ${\leq}k$-cuts in $G_2=(V,A_2)$.
We show that a set of arcs $M$ that contains as a subset one $s$-$v$-earliest ${\leq}k$-cut in $G_1$, or one $v$-$t$-latest ${\leq}k$-cut in $G_2$ for every $v$,
is a $s$-$t$-cut.
Moreover, we show that all the $s$-$t$-cuts of arcs with the above property include all the $s$-$t$-latest ${\leq}k$-cuts.
Hence, in order to identify all $s$-$t$-latest ${\leq}k$ cuts, it is sufficient to identify all sets $M$ with that property.
We next describe how we use these structural properties to compute all $s$-$t$-extremal ${\leq}k$-cuts.

We formulate the following combinatorial problem over families of sets, which is independent of graphs and cuts, that we can use to compute all $s$-$t$-extremal ${\leq}k$-cuts. 
The input to our problem is $c$ families of sets $\mathcal{F}_1, \mathcal{F}_2, \dots, \mathcal{F}_c$, where each family $\mathcal{F}_i$ consists of at most $K$ sets, and each set $F\in \mathcal{F}_i$ contains at most $k$ elements from a universe $U$.
The goal is to compute all minimal subsets $F^* \subset U, |F^*|\leq k$, for which there exists a set $F\in \mathcal{F}_i$ such that $F\subseteq F^*$, for all $1\leq i \leq c$.
We refer to this problem as \witness.
To create an instance $(s,t,A_1,A_2)$ of the \witness problem, we set $c=|V|$ and $\mathcal{F}_v$ to be all $s$-$v$-earliest ${\leq}k$-cuts in $G_1$ and all $v$-$t$-latest ${\leq}k$-cuts in $G_2$. 
Informally speaking, the solution to the instance  $(s,t,A_1,A_2)$ of the \witness problem  picks all sets of arcs that cover at least one earliest or one latest cut for every vertex.
In a post-processing step, we filter the solution to the \witness problem on the instance $(s,t,A_1,A_2)$ in order to extract all the $s$-$t$-latest ${\leq}k$-cuts.
We follow an analogous process to compute all the $s$-$t$-earliest ${\leq}k$-cuts.

\subparagraph{Algorithmic framework.}
We next define a common algorithmic framework for solving the k-bounded All-Pairs Min-Cut problem, as follows.
We pick a partition of the vertices $V_1,V_2$, such that there is no arc in $V_2 \times V_1$.
Such a partition can be trivially computed from a topological order of the input DAG.
Let $A_1,A_2,A_{1,2}$ be the sets of arcs in $G[V_1]$, in $G[V_2]$, and in $A_{1,2}=A\cap (V_1 \times V_2)$.
\begin{itemize}
	\item First, we recursively solve the problem in $G[V_1]$ and in $G[V_2]$. The recursion returns without doing any work whenever the graph is a singleton vertex.
	
	\item Second, for each pair of vertices $(s,t)$, such that $s\in V_1$ has an outgoing arc from $A_{1,2}$ and $t\in V_2$, we solve the instance $(s,t,A_{1,2},A_2)$ of \witness.
	Notice that the only non-empty earliest cuts in $(V,A_{1,2})$ for the pair $(x,y)$ are the arcs $(x,y)\in A_{1,2}$.
	
	\item Finally, for each pair of vertices $(s,t)$, such that $s\in V_1,t\in V_2$, we solve the instance $(s,t,A_1,A_{1,2}\cup A_2)$ of $\witness$.
\end{itemize}

The \witness problem can be solved naively as follows.
Let $\mathcal{F}_v$ be the set of all $s$-$v$-earliest ${\leq}k$-cuts and all $v$-$t$-latest ${\leq}k$-cuts.
Assume we have $\mathcal{F}_{v_1},\mathcal{F}_{v_2}, \dots, \mathcal{F}_{v_c}$, for all vertices $v_1,v_2,\dots,v_c$ that are both reachable from $s$ in $(V,A_{1,2})$ and that reach $t$ in $(V,A_2)$.
Each of these sets contains $2^{\bigO(k)}$ cuts.
We can identify all sets $M$ of arcs that contain at least one cut from each $\mathcal{F}_{i}$, in time $\bigO(k\cdot ({2^{\bigO(k)}})^c)$.
This yields an algorithm with  super-polynomial running time. However, we speed up this naive procedure by applying some judicious pruning, achieving a better running time of $\bigO(c \cdot 2^{\bigO(k^2)} \cdot \textrm{poly}(k))$, which is polynomial for $k = o(\sqrt{\log n})$.
In the following, we sketch the two algorithms that we develop for solving efficiently the k-bounded All-Pairs Min-Cut problem.

\subparagraph{Iterative division.}
For the first algorithm, we process the vertices in reverse topological order. When processing a vertex $v$, we define $V_1=\{v\}$ and $V_2$ to be the set of vertices that appear after $v$ in the topological order.
Notice that $V_1$ has a trivial structure, and we already know all $s$-$t$-latest ${\leq}k$-cuts in $G[V_2]$.
In this case, we present an algorithm for solving the instance $(v,t,A_{1,2},A_2)$ of the \witness problem in time $\bigO(2^{\bigO(k^2)}\cdot c \cdot \textrm{poly}(k))$, where $c=|A_{1,2}|$ is the number of arcs leaving $v$.
We invoke this algorithm for each $v$-$w$ pair such that $w\in V_2$.
For $k=o(\sqrt{\log n})$ this gives an algorithm that runs in time $\bigO(\text{outdeg}(v) \cdot n^{1+o(1)})$ for processing $v$, and $\bigO(mn^{1+o(1)})$ in total.

\subparagraph{Recursive division.}
For the second algorithm, we recursively partition the set of vertices evenly into sets $V_1$ and $V_2$ at each level of the recursion.
We first recursively solve the problem in $G[V_1]$ and in $G[V_2]$.
Second, we solve the instances $(s,t,A_{1,2},A_2)$ and $(s,t,A_1, A_{1,2}\cup A_2)$ of \witness for all pairs of vertices from $V_1 \times V_2$.
Notice that the number of vertices that are both reachable from $s$ in $(V,A_1)$ and reach $t$ in $(V,A_{1,2}\cup A_2)$ can be as high as $\bigO(n)$.
This implies that even constructing all $\Theta(n^2)$ instances of the \witness problem, for all $s,t$, takes $\Omega(n^3)$ time.
To overcome this barrier, we take advantage of the power of fast matrix multiplications by applying it into suitably defined matrices of binary codes (codewords).
At a very high-level, this approach was used by
Fischer and Meyer~\cite{fischer1971boolean} in their 
$\bigO(n^\omega)$ time algorithm for transitive closure in DAGs -- there the binary codes where of size 1 indicating whether there exists an arc between two vertices.


\subparagraph{Algebraic framework.}
In order to use coordinate-wise boolean matrix multiplication with the entries of the matrices being codewords
we first encode all $s$-$t$-earliest and all $s$-$t$-latest ${\le}k$-cuts using binary codes.
The bitwise boolean multiplication of such matrices with binary codes in its entries allows a serial combination of both $s$-$v$ cuts and $v$-$t$ cuts based on AND operations, and thus allows us to construct a solution based on the OR operation of pairwise AND operations.
We show that \emph{superimposed codes} are suitable in our case, i.e., binary codes where sets are represented as bitwise-OR of codewords of objects, and small sets are guaranteed to be encoded uniquely.
Superimposed codes provide a unique representation for sets of $k$ elements from a universe of size $\text{poly}(n)$ with codewords of length $\text{poly}(k \log n)$.
In this setting, the union of sets translates naturally to bitwise-OR of their codewords.

\subparagraph{Tensor product of codes.} To achieve our bounds, we compose several identical superimposed codes into a new binary code, so that encoding \emph{set families} with it enables us to solve the corresponding instances of \witness.
Our composition has the cost of an exponential increase in the length of the code.
Let $\mathcal{F}={F_1,\dots, F_c}$ be the set family that we wish to encode, and let $S_1,\dots,S_c$ be their superimposed codes in the form of vectors.
We construct a $c$-dimensional array $M$ where $M[i_1,\dots,i_c]=1$ iff $S_j[i_j]=1$, for each $1\leq j \leq c$.
In other words, the resulting code is the tensor product of all superimposed codes.
This construction creates enough redundancy so that enough information on the structure of the set families is preserved.
Furthermore, we can extract the encoded information from the bitwise-OR of several codewords.
The resulting code is of length $\bigO((k \log n)^{\bigO(K)})$, where $K$ is the upperbound on the allowed number of sets in each encoded set family. In our case  $K\approx 4^k$,
which results to only a logarithmic dependency on $n$ at the price of a  doubly-exponential dependency on $k$, thus making the problem tractable for small values of $k$.

\subparagraph{From slices to \witness.}
Finally, we show how the \witness can be solved using tensor product of superimposed codes.
Consider the notion of cutting the code of dimension $K$ with an axis-parallel hyperplane of dimension $K-1$.
We call this resulting shorter codeword a \emph{slice} of the original codeword.
A slice of a tensor product is a tensor product of one dimension less, or an empty set, and a slice of a bitwise-OR of tensor products is as well a bitwise-OR of tensor products (of one dimension less).
Thus, taking a slice of the bitwise-OR of the encoding of families of sets is equivalent to removing a particular set from some families and to dropping some other families completely and then encoding these remaining, reduced families.
Thus, we can design a non-deterministic algorithm, which at each step of the recursion picks $k$ slices, one slice for each element of the solution we want to output, and then recurses on the bitwise-OR of those slices, reducing the dimension by one in the process.
This is always possible, since each element that belongs to a particular solution of \witness satisfies one of the following: it either has a witnessing slice and thus it is preserved in the solution to the recursive call; or it is dense enough in the input so that it is a member of each solution and we can detect this situation from scanning the diagonal of the input codeword.
This described nondeterministic approach is then made deterministic by simply considering every possible choice of $k$ slices at each of the $K$ steps of the recursion.
This does not increase substantially the complexity of the decoding procedure, since  $\bigO(((K \cdot \text{poly}(k \log n))^k)^K)$ for $K \approx 4^k$ is still only doubly-exponential in $k$.

\section{Reducing $4$-Clique to All-Pairs Min-Cut} 
\label{sec:CLB}

In this section we prove Theorem~\ref{thm:CLB} 
by showing new reductions from the $4$-Clique problem
to $k$-bounded All-Pairs Min-Cut with unit vertex capacities. 
These reductions yield conditional lower bounds
that are much higher than previous ones, which are based on SETH,
in addition to always producing DAGs. 
Throughout this section, we will often use the term nodes for vertices. 

\begin{definition}[The $4$-Clique Problem]
Given a $4$-partite graph $G$, where $V(G)=A\cup B \cup C \cup D$ with $|A|=|B|=|C|=|D|=n$, decide whether there are four nodes $a \in A$,  $b \in B$, $c \in C$, $d \in D$ that form a clique.
\end{definition}

This problem is equivalent to the standard formulation of $4$-Clique
(without the restriction to $4$-partite graphs). 
The currently known running times are $O(n^{\omega+1})$ using matrix multiplication \cite{EisenbrandtGrandoni04}, and $O(n^4/\polylog{n})$ combinatorially \cite{huachengYu18}. The $k$-Clique Conjecture \cite{ABV15b_parsing} hypothesizes that current clique algorithms are optimal.
Usually when the $k$-Clique Conjecture is used, it is enough to assume that the current algorithms are optimal for every $k$ that is a multiple of $3$, where the known running times are $O(n^{\omega k /3})$ \cite{NP85} and $O(n^k/\polylog{n})$ combinatorially \cite{vassilevska09}, see e.g.~\cite{ABBK17, ABV15b_parsing,BW18,Chang15,LWW18}.
However, we will need the stronger assumption that one cannot improve the current algorithms for $k=4$ by any polynomial factor.
This stronger form was previously used
by Bringmann, Gr{\o}nlund, and Larsen~\cite{BringmannGL17}.

\subsection{Reduction to the Unbounded Case}
\label{sec:CLB_unbounded}

We start with a reduction to the unbounded case (equivalent to $k=n$), 
that is, we reduce to All-Pairs Min-Cut with unit node capacities
(abbreviated \APMVC, for All-Pairs Minimum Vertex-Cut).
Later (in Section~\ref{sec:CLB_unbounded})
we will enhance the construction in order to bound $k$.

\begin{lemma}\label{Lemma:technical}
Suppose \APMVC on $n$-node DAGs with unit node capacities
can be solved in time $T(n)$.
Then $4$-Clique on $n$-node graphs
can be solved in time $O(T(n) +MM(n,n))$,
where $MM(n,n)$ is the time to multiply two matrices from $\{0,1\}^{n\times n}$. 
\end{lemma}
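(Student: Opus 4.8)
The plan is to build, from a $4$-partite graph $G$ with parts $A,B,C,D$, a DAG $H$ whose vertex set contains (copies of) the nodes of $G$ arranged in layers, so that for a suitable pair $(s,t)$ --- or rather, for pairs indexed by two of the four parts --- the minimum vertex-cut value encodes whether a clique exists. The natural layering is $A \to B \to C \to D$, with arcs between consecutive layers reflecting the edges $E(A,B)$, $E(B,C)$, $E(C,D)$ of $G$. Since \APMVC gives us, for \emph{every} source--target pair, the min vertex-cut, the idea is to let the source range over $A$ and the target range over $D$: for fixed $a\in A$ and $d\in D$, the $a$--$d$ vertex connectivity in $H$ equals the number of internally-vertex-disjoint $a$--$d$ paths, and a path must pass through exactly one $b\in B$ and one $c\in C$ with $ab, bc, cd \in E(G)$. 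The reduction must be arranged so that a clique $\{a,b,c,d\}$ corresponds to a path and, conversely, a large enough $a$--$d$ connectivity forces the existence of a clique; the role of $MM(n,n)$ in the running time is almost certainly to precompute, via Boolean matrix multiplication, which pairs $a\in A, d\in D$ have a \emph{common} $B$-$C$ ``witness path'' at all (i.e.\ the product of the adjacency matrices along the chain), so that we only need to read off connectivity values for the relevant pairs and compare against a threshold.

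Concretely, first I would fix the construction: duplicate each node of $B$ and $C$ (or use unit node capacities directly so that each middle node can carry only one unit of flow), add arcs $a\to b$ whenever $ab\in E(G)$, $b\to c$ whenever $bc\in E(G)$, and $c\to d$ whenever $cd\in E(G)$, plus possibly auxiliary high-capacity gadgets near $s=a$ and $t=d$ to normalize degrees. Because node capacities are unit, the $a$--$d$ min vertex-cut equals the number of vertex-disjoint $a$--$d$ paths, i.e.\ the number of vertex-disjoint ``compatible'' $(b,c)$ pairs, which is governed by a bipartite matching between $B$-neighbors of $a$ and $C$-neighbors of $d$ under the $E(B,C)$ relation. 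The key observation to extract a clique is: a clique on $\{a,b,c,d\}$ exists iff there is \emph{a single} compatible path through some $b,c$; detecting this for all $(a,d)$ is exactly a Boolean matrix product $\mathbf{1}[A\!\sim\!B]\cdot\mathbf{1}[B\!\sim\!C]\cdot\mathbf{1}[C\!\sim\!D]$, computable in $MM(n,n)$ time (up to a constant number of such products). So the role of \APMVC might instead be subtler --- perhaps encoding the fourth vertex of the clique by making the source or target a composite gadget so that the \emph{connectivity value}, not mere reachability, is what distinguishes a clique from a non-clique. I would set up the gadget so that the min vertex-cut for the designated pair is exactly $n$ (or some clean function of $n$) iff there is no clique, and strictly less iff there is one (a clique creates a "bottleneck" of disjoint paths that can be blocked more cheaply), or vice versa.

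Then the argument has two directions. \emph{Completeness:} given a $4$-clique $\{a,b,c,d\}$, exhibit the corresponding path/flow configuration in $H$ and verify the min vertex-cut crosses the chosen threshold. \emph{Soundness:} if for the designated pair(s) the min vertex-cut value reported by the \APMVC oracle crosses the threshold, use Menger's theorem (cited as \cite{menger}) to get the required number of vertex-disjoint paths, and argue that this many disjoint paths through the four layers forces a clique --- this likely uses a counting/pigeonhole argument exploiting that $|A|=|B|=|C|=|D|=n$. The running time is then $O(T(n))$ for the oracle call on the $O(n)$-node DAG $H$, plus $O(MM(n,n))$ for the preprocessing matrix products, plus $O(n^2)$ to scan the answers. \textbf{The main obstacle} I expect is the soundness direction: ensuring that the min-cut value cannot cross the threshold "accidentally" via disjoint paths that use mismatched triples $ab\in E, bc\in E, cd\in E$ without $ac, bd, ad$ being edges --- i.e.\ distinguishing a genuine $K_4$ from a mere $4$-cycle-through-the-layers. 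Overcoming this is presumably why the construction needs unit \emph{vertex} capacities and a carefully chosen threshold (and why the authors route through all four parts rather than doing the trivial triangle-style product), and it is the part of the plan I would spend the most care on.
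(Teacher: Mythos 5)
There is a genuine gap. Your construction (layers $A \to B \to C \to D$ with arcs only for the edges of $G$ between consecutive layers) cannot work, and you yourself flag the reason as unresolved: in that DAG the $a$--$d$ vertex connectivity counts vertex-disjoint triples with $ab,bc,cd \in E(G)$, which says nothing about whether $\{a,c\}$, $\{b,d\}$ (or $\{a,d\}$) are edges, so no threshold on the connectivity value can separate a genuine $4$-clique from a large family of ``mismatched'' paths. Your fallback speculations (composite source/target gadgets, a clean threshold like $n$, using $MM(n,n)$ to detect which pairs $(a,d)$ admit any compatible $b,c$) do not supply the missing mechanism either, and the last guess misidentifies what the matrix products are for.

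The paper's key idea, absent from your plan, is to add \emph{complement} shortcut edges: an arc $a \to c$ whenever $\{a,c\} \notin E(G)$ and an arc $b \to d$ whenever $\{b,d\} \notin E(G)$ (the sixth edge $\{a,d\}$ is handled simply by only testing pairs with $\{a,d\} \in E(G)$). With these, every $a$--$d$ path has two or three hops, and the number of $2$-hop paths is exactly $|B'_{a,d}|+|C'_{a,d}|$, where $B'_{a,d}=\{b: \{a,b\}\in E, \{b,d\}\notin E\}$ and $C'_{a,d}=\{c: \{a,c\}\notin E, \{c,d\}\in E\}$; these per-pair baselines are what the two Boolean matrix products compute (this is the real role of $MM(n,n)$), so the algorithm compares $NC(a,d)$ against this pair-dependent baseline rather than a global threshold. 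Soundness then follows from a replacement argument: if $(a,b,c,d)$ is not a clique but $a\to b\to c\to d$ is a path, then $\{a,c\}\notin E$ or $\{b,d\}\notin E$, so the $3$-hop path can be rerouted to a $2$-hop path via a shortcut edge, and hence the connectivity never exceeds $|B'_{a,d}|+|C'_{a,d}|$; conversely a clique $(a,b^*,c^*,d)$ yields one extra disjoint $3$-hop path precisely because $b^*\notin B'_{a,d}$ and $c^*\notin C'_{a,d}$. Without this complement-edge construction and the baseline comparison, the soundness direction you correctly identified as the crux remains unproved, so the proposal does not constitute a proof of the lemma.
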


To illustrate the usage of this lemma,
observe that an $O(n^{3.99})$-time combinatorial algorithm for \APMVC
would imply a combinatorial algorithm with similar running time for $4$-Clique.

\begin{proof}
Given a $4$-partite graph $G$ as input for the $4$-Clique problem, the graph $H$ is constructed as follows.
The node set of $H$ is the same as $G$, and we abuse notation and refer also to $V(H)$ as if it is partitioned into $A$,$B$,$C$, and $D$.
Thinking of $A$ as the set of sources and $D$ as the set of sinks, the proof will focus on the number of node-disjoint paths from nodes $a \in A$ to nodes $d \in D$.
The edges of $H$ are defined in a more special way,
see also Figure~\ref{fig:reduction} for illustration. 

\begin{figure}[h]
	\begin{minipage}{\textwidth}
		\centering\includegraphics[scale=0.4]{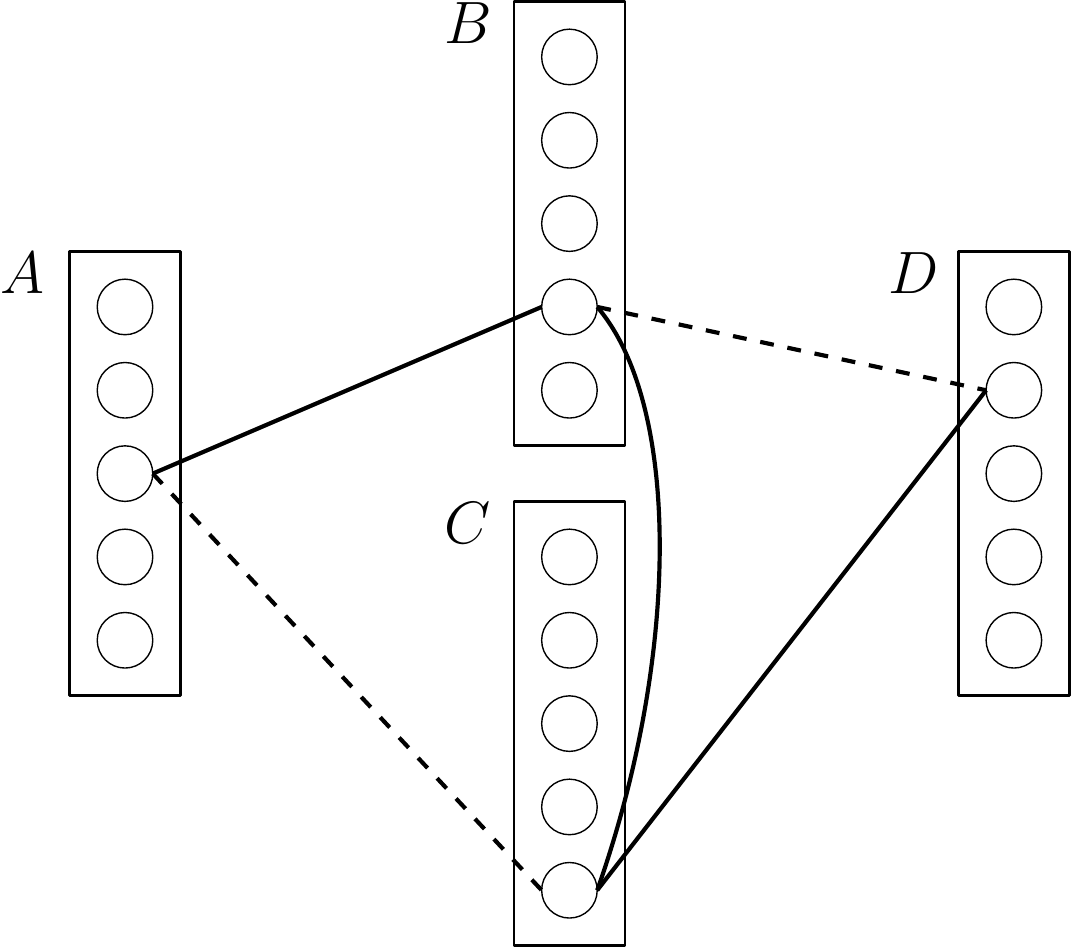}
	\end{minipage}
	\caption{An illustration of $H$ in the reduction. Solid lines between nodes represent the existence of an edge in the input graph $G$, and dashed lines represent the lack thereof. 
	}
	\label{fig:reduction}
\end{figure}

\begin{itemize}
\item (A to B) For every $a \in A, b \in B$ such that $\{a,b\} \in E(G)$, add to $E(H)$ a directed edge $(a,b)$.
\item (B to C) For every $b \in B, c \in C$ such that $\{b,c\} \in E(G)$, add to $E(H)$ a directed edge $(b,c)$.
\item (C to D) For every $c \in C, d \in D$ such that $\{c,d\} \in E(G)$, add to $E(H)$ a directed edge $(c,d)$.
\end{itemize}

The definition of the edges of $H$ will continue shortly. So far, edges in $H$ correspond to edges in $G$, and there is a (directed) path $a \to b \to c \to d$ if and only if the three (undirected) edges $\{a,b\},\{b,c\},\{c,d\}$ exist in $G$. In the rest of the construction, our goal is to make this $3$-hop path contribute to the final $a \to d$ flow \emph{if and only if} $(a,b,c,d)$ is a $4$-clique in $G$ (i.e., all six edges exist, not only those three).
Towards this end, additional edges are introduced, that make this $3$-hop path useless in case $\{a,c\}$ or $\{ b,d\}$ are not also edges in $G$. This allows ``checking'' for five of the six edges in the clique, rather than just three. The sixth edge is easy to ``check''.

\begin{itemize}
\item (A to C) For every $a \in A, c \in C$ such that $\{a,c\} \notin E(G)$, add to $E(H)$ a directed edge $(a,c)$.
\item (B to D) For every $b \in B, d \in D$ such that $\{b,d\} \notin E(G)$ in $G$, add to $E(H)$ a directed edge $(b,d)$.
\end{itemize}

This completes the construction of $H$.
Note that these additional edges imply that there is a path $a \to b \to d$ in $H$ iff $\{a,b\} \in E(G)$ and $\{b,d\} \notin E(G)$, and similarly, there is a path $a \to c \to d$ in $H$ iff $\{a,c\} \notin E(G)$ and $\{c,d\} \in E(G)$.
Let us introduce notations to capture these paths.
For nodes $a \in A, d \in D$ denote:
\begin{align*}
  B'_{a,d}
  &= \left\{ b \in B \  \mid \  \text{$\{a,b\} \in E(G)$ and $\{b,d\} \notin E(G)$} \  \right\},
  \\
  C'_{a,d}
  &= \left\{ c \in C \  \mid \  \text{$\{a,c\} \notin E(G)$ and $\{c,d\} \in E(G)$} \  \right\}.
\end{align*}

We now argue that if an \APMVC algorithm is run on $H$, enough information is received to be able to solve $4$-Clique on $G$ by spending only an additional post-processing stage of $O(n^3)$ time.

\begin{claim} \label{cl:technical}
  Let $a \in A, d \in D$ be nodes with $\{a,d\} \in E(G)$. 
  If the edge $\{a,d\}$ does not participate in a $4$-clique in $G$,
  then the node connectivity from $a$ to $d$ in $H$, denoted $NC(a,d)$,
  is exactly 
$$
NC(a,d) = | B'_{a,d} | + |C'_{a,d}|, 
$$
and otherwise $NC(a,d)$ is strictly larger.
\end{claim}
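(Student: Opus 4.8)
The plan is to analyze the structure of $a$--$d$ paths in $H$ and apply Menger's theorem. First I would observe that $H$ is a DAG layered as $A \to B \to C \to D$, with the only ``extra'' arcs being $A\to C$ and $B\to D$. Hence every directed $a$--$d$ path has one of exactly three shapes: (i) a $3$-hop path $a\to b\to c\to d$, (ii) a $2$-hop path $a\to b\to d$, or (iii) a $2$-hop path $a\to c\to d$. By the construction, type-(ii) paths exist precisely for $b\in B'_{a,d}$, and type-(iii) paths exist precisely for $c\in C'_{a,d}$; these use \emph{distinct} intermediate nodes, so one immediately gets $|B'_{a,d}|+|C'_{a,d}|$ node-disjoint paths, giving $NC(a,d)\ge |B'_{a,d}|+|C'_{a,d}|$ always.

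Next I would prove the matching upper bound $NC(a,d)\le |B'_{a,d}|+|C'_{a,d}|$ under the hypothesis that $\{a,d\}$ is in no $4$-clique, by exhibiting a vertex cut of that size. The natural candidate cut is $S = B'_{a,d}\cup C'_{a,d}$. I must check that removing $S$ destroys all $a$--$d$ paths. A type-(ii) path through $b$ is killed iff $b\in B'_{a,d}$, which is exactly the set of $b$ for which such a path exists; similarly for type-(iii). The crux is the type-(i) paths $a\to b\to c\to d$: such a path exists iff $\{a,b\},\{b,c\},\{c,d\}\in E(G)$. If it survives the removal of $S$, then $b\notin B'_{a,d}$ and $c\notin C'_{a,d}$; since $\{a,b\}\in E(G)$, $b\notin B'_{a,d}$ forces $\{b,d\}\in E(G)$, and since $\{c,d\}\in E(G)$, $c\notin C'_{a,d}$ forces $\{a,c\}\in E(G)$. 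Combined with $\{a,b\},\{b,c\},\{c,d\}\in E(G)$ and the assumed edge $\{a,d\}\in E(G)$, all six pairs among $\{a,b,c,d\}$ are edges of $G$, so $(a,b,c,d)$ is a $4$-clique containing the edge $\{a,d\}$ --- contradicting the hypothesis. Hence $S$ is an $a$--$d$ vertex cut of size $|B'_{a,d}|+|C'_{a,d}|$, and by Menger's theorem $NC(a,d)$ equals this value.

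Finally, for the ``otherwise'' case, suppose the edge $\{a,d\}$ \emph{does} participate in a $4$-clique, witnessed by $b^\star\in B$, $c^\star\in C$. Then $\{a,b^\star\},\{b^\star,c^\star\},\{c^\star,d\}\in E(G)$, so the type-(i) path $P^\star = a\to b^\star\to c^\star\to d$ exists in $H$; moreover $b^\star\notin B'_{a,d}$ (since $\{b^\star,d\}\in E(G)$) and $c^\star\notin C'_{a,d}$ (since $\{a,c^\star\}\in E(G)$), so $P^\star$ is internally disjoint from \emph{every} type-(ii) and type-(iii) path. Adding $P^\star$ to the $|B'_{a,d}|+|C'_{a,d}|$ node-disjoint paths found earlier yields a collection of $|B'_{a,d}|+|C'_{a,d}|+1$ node-disjoint $a$--$d$ paths, so $NC(a,d) > |B'_{a,d}|+|C'_{a,d}|$, as claimed.

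I expect the main obstacle to be the careful enumeration of path types in $H$ and verifying that the proposed cut $S$ intersects every surviving path; the ``$4$-clique forced'' deduction is the heart of the argument, but once the layered structure is pinned down it is a short case check. A minor point to handle cleanly is that $B'_{a,d}$ and $C'_{a,d}$ live in disjoint layers, so the disjointness claims for the path families are automatic.
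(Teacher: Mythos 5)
Your proof is correct, and it matches the paper's argument in two of its three components: the observation that every $a$--$d$ path in $H$ is one of the three shapes $a\to b\to d$, $a\to c\to d$, $a\to b\to c\to d$, and the construction of $|B'_{a,d}|+|C'_{a,d}|+1$ node-disjoint paths when a $4$-clique through $\{a,d\}$ exists (including the disjointness reasons $b^*\notin B'_{a,d}$, $c^*\notin C'_{a,d}$). Where you diverge is the upper bound in the clique-free case: the paper argues on the \emph{primal} side, showing that any family of node-disjoint $a$--$d$ paths can be rerouted into an equal-size family of $2$-hop paths (each surviving $3$-hop path can be shortcut through the missing edge $\{a,c\}$ or $\{b,d\}$), and then counts that there are only $|B'_{a,d}|+|C'_{a,d}|$ possible $2$-hop paths; you instead argue on the \emph{dual} side, exhibiting $B'_{a,d}\cup C'_{a,d}$ as an explicit vertex cut and showing any $3$-hop path avoiding it forces all six edges of a $4$-clique. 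The core deduction (``a useful $3$-hop path forces $\{a,c\},\{b,d\}\in E(G)$'') is identical in both; your cut formulation is arguably cleaner since it needs only weak duality (a vertex cut upper-bounds the number of internally disjoint paths), not full Menger, and it avoids the slight care needed in the paper's rerouting step about the shortcut paths remaining pairwise disjoint. Both yield the same bound, and your handling of the strict-inequality case is the same as the paper's.
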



\begin{proof}[Proof of Claim~\ref{cl:technical}]
We start by observing that all paths from $a$ to $d$ in $H$ have either two or three hops.

Assume now that there is a $4$-clique $(a,b^*,c^*,d)$ in $G$,
and let us exhibit a set $P$ of node-disjoint paths from $a$ to $d$
of size $| B'_{a,d} | + |C'_{a,d}|+1$.
For all nodes $b \in B'_{a,d}$, add to $P$ the $2$-hop path $a \to b \to d$.
For all nodes $c \in C'_{a,d}$, add to $P$ the $2$-hop path $a \to c \to d$.
So far, all these paths are clearly node-disjoint.
Then, add the $3$-hop path $a \to b^* \to c^* \to d$ to $P$.
This path is node-disjoint from the rest because $b^* \notin B'_{a,d}$ (because $\{ b^*,d\} \in E(G)$) and $c^* \notin C'_{a,d}$ (because $\{a,c^*\} \in E(G)$).

Next, assume that no nodes $b \in B, c \in C$ complete a $4$-clique with $a,d$. 
Then for every set $P$ of node-disjoint paths from $a$ to $d$, there is a set $P'$ of $2$-hop node-disjoint paths from $a$ to $d$ that has the same size.
To see this, let $a \to b \to c \to d$ be some $3$-hop path in $P$. Since $(a,b,c,d)$ is not a $4$-clique in $G$ and $\{a,d\},\{a,b\},\{b,c\},\{c,d\}$ are edges in $G$, we conclude that either $\{ a,c\} \notin E(G)$ or  $\{ b,d\} \notin E(G)$.
If $\{a,c\} \notin E(G)$ then $a \to c$ is an edge in $H$ and the $3$-hop path can be replaced with the $2$-hop path $a \to c \to d$ (by skipping $b$) and one is remained with a set of node-disjoint paths of the same size.
Similarly, if $\{ b,d\} \notin E(G)$ then $b \to d$ is an edge in $H$ and the $3$-hop path can be replaced with the $2$-hop path $a \to b \to d$.
This can be done for all $3$-hop paths and result in $P'$.
Finally, note that the number of $2$-hop paths from $a$ to $d$ is exactly $| B'_{a,d} | + |C'_{a,d}|$, and this completes the proof of Claim~\ref{cl:technical}.
\end{proof}

\subparagraph{Computing the estimates.}
To complete the reduction, observe that the values $| B'_{a,d} | + |C'_{a,d}|$ can be computed for all pairs $a \in A, d \in D$ using two matrix multiplications. To compute the $| B'_{a,d} |$ values,
multiply the two matrices $M,M'$ which have entries from $\{0,1\}$,
with $M_{a,b}= 1$ iff $\{a,b\} \in E(G) \cap A \times B$
and $M'_{b,d}= 1$ iff $\{b,d\} \notin E(G) \cap B \times D$.
Observe that $| B'_{a,d} |$ is exactly $(M \cdot M')_{a,d}$.
To compute $|C'_{a,d}|$, multiply $M,M'$ over $\{0,1\}$ where $M_{a,c}= 1$ iff $\{a,c\} \notin E(G) \cap A \times C$ and $M'_{c,d}= 1$ iff $\{c,d\} \in E(G) \cap C \times D$.

After having these estimates and computing \APMVC on $H$, it can be decided whether $G$ contains a $4$-clique in $O(n^2)$ time as follows. Go through all edges $\{a,d\} \in E(G) \cap A \times D$ and decide whether the edge participates in a $4$-clique by comparing $| B'_{a,d} | + |C'_{a,d}|$ to the node connectivity $NC(a,d)$ in $H$. By the above claim, an edge $\{a,d\}$ with $NC(a,d) > | B'_{a,d} | + |C'_{a,d}|$ is found if and only if there is a $4$-clique in $G$.
The total running time is $O(T(n) +MM(n))$,
which completes the proof of Lemma~\ref{Lemma:technical}.
\end{proof}

\subsection{Reduction to the $k$-Bounded Case}
\label{sec:CLB_bounded}

Next, we exploit a certain versatility of the reduction
and adapt it to ask only about min-cut values (aka node connectivities)
that are smaller than $k$.
In other words, we will reduce to
the $k$-bounded version of All-Pairs Min-Cut with unit node capacities 
(abbreviated \kAPMVC, for $k$-bounded All-Pairs Minimum Vertex-Cut). 
Our lower bound improves on the $\Omega(n^\omega)$ conjectured lower bound
for Transitive Closure as long as $k=\omega(n^{1/2})$.

\begin{lemma}\label{Lemma:technicalLow}
Suppose \kAPMVC on $n$-node DAGs with unit node capacities
can be solved in time $T(n,k)$.
Then $4$-Clique on $n$-node graphs
can be solved in time $O(\frac{n^2}{k^2} \cdot T(n,k) +MM(n) )$,
where $MM(n,n)$ is the time to multiply two matrices from $\{0,1\}^{n\times n}$. 
\end{lemma}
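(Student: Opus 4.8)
The plan is to reduce $4$-Clique to many instances of \kAPMVC by re-using the construction of Lemma~\ref{Lemma:technical}, but restricting attention to small connectivities. First I would recall the key identity from Claim~\ref{cl:technical}: for every edge $\{a,d\}\in E(G)\cap A\times D$ we have $NC(a,d)=|B'_{a,d}|+|C'_{a,d}|$ unless $\{a,d\}$ lies in a $4$-clique, in which case $NC(a,d)$ is strictly larger. The only obstacle to applying this directly in the $k$-bounded regime is that $NC(a,d)$ can be as large as $\Theta(n)$, so a $k$-bounded oracle cannot see it when $k=o(n)$. The remedy is to \emph{sparsify the number of relevant $2$-hop paths}: if we can arrange that $|B'_{a,d}|+|C'_{a,d}|<k$ for a given pair, then the oracle returns the exact value of $\min\{k,NC(a,d)\}$, which already distinguishes the ``clique'' case ($NC(a,d)\ge |B'_{a,d}|+|C'_{a,d}|+1$) from the ``no clique'' case ($NC(a,d)=|B'_{a,d}|+|C'_{a,d}|$), provided $|B'_{a,d}|+|C'_{a,d}|+1\le k$ as well, i.e.\ we need the target value to be at most $k-1$.

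The way to force small $2$-hop counts is to split $B$ and $C$ into blocks of size $\Theta(k)$. Concretely, partition $B=B_1\cup\cdots\cup B_{n/\Theta(k)}$ and $C=C_1\cup\cdots\cup C_{n/\Theta(k)}$ into $\Theta(n/k)$ parts each of size $\Theta(k)$. For each pair of block indices $(i,j)$ build the induced subgraph $H_{i,j}$ of $H$ on $A\cup B_i\cup C_j\cup D$; this is still a DAG with unit node capacities on $\le 4n$ nodes, and in it the number of $2$-hop $a$-to-$d$ paths is at most $|B_i|+|C_j|=\Theta(k)$. Choosing the block size as a suitable constant fraction of $k$, we guarantee that in $H_{i,j}$ the quantity $|B'_{a,d}\cap B_i|+|C'_{a,d}\cap C_j|$ is at most $k-2$, so a \kAPMVC oracle on $H_{i,j}$ returns the exact connectivity whenever it is less than $k$, and in particular reveals whether there is a $3$-hop $a\to b\to c\to d$ path with $b\in B_i,c\in C_j$ that is node-disjoint from all the $2$-hop paths — which by the Claim's argument happens iff $(a,b,c,d)$ is a $4$-clique for some $b\in B_i$, $c\in C_j$. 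There are $\Theta(n^2/k^2)$ pairs $(i,j)$, each giving one \kAPMVC call on an $O(n)$-node DAG, for a total of $O(\tfrac{n^2}{k^2}T(n,k))$; a $4$-clique through $\{a,d\}$ exists iff some block pair witnesses it.

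The remaining pieces are the same bookkeeping as in Lemma~\ref{Lemma:technical}, adapted per block: for each $(i,j)$ I would compute the ``expected'' $2$-hop counts $|B'_{a,d}\cap B_i|+|C'_{a,d}\cap C_j|$ for all $a\in A,d\in D$ via four Boolean/integer matrix products of dimension $n$ (restricting one factor to the columns/rows of the relevant block); summed over all $\Theta(n^2/k^2)$ block pairs this is $O(\tfrac{n^2}{k^2}\cdot MM(n))$ in the naive accounting, but since each block contributes only $\Theta(k)$ columns, the total work is $O(n\cdot n\cdot n)=O(MM(n))$ after aggregating, matching the lemma's stated $MM(n)$ term (one may also simply recompute the global $|B'_{a,d}|$ and $|C'_{a,d}|$ once and then, for each block pair, obtain the block-restricted counts incrementally; the details are routine). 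Then, for each edge $\{a,d\}\in E(G)\cap A\times D$ and each block pair $(i,j)$, compare the oracle's answer on $H_{i,j}$ against $|B'_{a,d}\cap B_i|+|C'_{a,d}\cap C_j|$; output ``yes'' iff some comparison shows a strict increase. Correctness follows by applying Claim~\ref{cl:technical} inside each $H_{i,j}$, noting that a $4$-clique $(a,b,c,d)$ of $G$ puts $b$ in a unique block $B_i$ and $c$ in a unique block $C_j$, so it is detected exactly in the instance $H_{i,j}$.

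I expect the main obstacle to be making the block-size choice tight enough: one needs $|B_i|+|C_j|\le k-2$ (so that the clique case stays strictly below $k$ and is therefore reported by the $k$-bounded oracle) while keeping the number of blocks $\Theta(n/k)$, so the constant in the block size must be chosen carefully — in particular this argument needs $k\ge 2$ and is stated for $k\in[n^{1/2},n]$, where the blocks are genuinely nonempty and the overhead factor $n^2/k^2$ is at most $n$. A secondary subtlety is confirming that $3$-hop paths within a single block $H_{i,j}$ cannot ``accidentally'' be rerouted into $2$-hop paths using a $B$-node or $C$-node from a \emph{different} block — but this cannot happen precisely because $H_{i,j}$ is an induced subgraph, so the only available intermediate vertices are those of $B_i\cup C_j$, and the replacement argument of Claim~\ref{cl:technical} goes through verbatim inside $H_{i,j}$.
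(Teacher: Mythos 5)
Your core reduction is correct, and it takes a genuinely different (and arguably cleaner) route than the paper. The paper partitions the \emph{source and sink} sides $A$ and $D$ into $n/k$ blocks of size $k$, builds one instance per block pair, and inside each instance creates $n/k$ copies $a_x,d_y$ of every source and sink, wiring $a_x$ only to the block $B_x$ and $d_y$ only to $C_y$ so that each copy-pair sees only $O(k)$ two-hop paths. You instead partition the \emph{middle layers} $B$ and $C$ into blocks of size $\Theta(k)$ and take induced subgraphs $A\cup B_i\cup C_j\cup D$, keeping all sources and sinks intact. Both constructions yield $\Theta(n^2/k^2)$ instances on $O(n)$ nodes, and in both the proof of Claim~\ref{cl:technical} transfers verbatim (your observation that the $3$-hop-to-$2$-hop rerouting only uses vertices already on the path, hence survives the induced restriction, is exactly the right point). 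Your version has the pleasant feature that every $a$-$d$ connectivity is automatically bounded by $|B_i|+|C_j|<k$, so the $k$-bounded oracle is exact on all pairs; you are also more careful than the paper about the block-size constant needed to keep the ``clique'' value strictly below the threshold.

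There is, however, a real gap in your accounting of the ``expected counts'' step. Your decision rule compares, per block pair $(i,j)$, the oracle's answer with $|B'_{a,d}\cap B_i|+|C'_{a,d}\cap C_j|$, which requires the block-restricted counts for all triples $(a,d,i)$ and $(a,d,j)$: that is $\Theta(n^3/k)$ numbers, which already exceeds $MM(n)=n^{\omega}$ for $k$ near $n^{1/2}$, so the claim that this aggregates to $O(n\cdot n\cdot n)=O(MM(n))$ cannot be right ($n^3\neq O(n^{\omega})$, and even the output size is too large), and the parenthetical ``obtain the block-restricted counts incrementally from the global counts'' is not an argument --- the global values $|B'_{a,d}|,|C'_{a,d}|$ do not determine their per-block refinements. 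The fix is easy and is essentially the device the paper uses: since for every $\{a,d\}\in E(G)$ and every $(i,j)$ the oracle value in $H_{i,j}$ is \emph{at least} $|B'_{a,d}\cap B_i|+|C'_{a,d}\cap C_j|$, with equality iff no $4$-clique through $a,d$ uses $B_i\times C_j$, you should compare the \emph{sums}: test whether $\sum_{i,j} NC_{H_{i,j}}(a,d)$ exceeds $\frac{n}{k}\bigl(|B'_{a,d}|+|C'_{a,d}|\bigr)$ (with the appropriate block-count normalization). This needs only the two global matrix products of Lemma~\ref{Lemma:technical}, i.e.\ $O(MM(n))$, plus $O(n^2\cdot n^2/k^2)$-free aggregation of the oracle outputs, and the lemma's bound $O(\frac{n^2}{k^2}T(n,k)+MM(n))$ follows. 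With that replacement your proof is complete and parallels Lemma~\ref{Lemma:technicalLow}'s proof in the paper, just with the roles of the partitioned layers swapped.
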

\begin{proof}[Proof of Lemma~\ref{Lemma:technicalLow}]
Given a $4$-partite graph $G$ as in the definition of the $4$-Clique problem, $O(n^2/k^2)$ graphs $H$ are constructed in a way that is similar to the previous reduction, and an algorithm for \kAPMVC is called on each of these graphs.
Assume w.l.o.g. that $k$ divides $n$ and partition the sets $A,D$ arbitrarily to sets $A_1,\ldots,A_{n/k}$ and $D_1,\ldots, D_{n/k}$ of size $k$ each. For each pair of integers $i,j \in [n/k]$, generate one graph $H_{ij}$ by restricting the attention to the nodes of $G$ in $A_i, B,C,D_j$ and looking for a $4$-clique only there.

Let us fix a pair $i,j\in [n/k]$ and describe the construction of $H_{ij}$.
To simplify the description, let us omit the subscripts $i,j$, referring to this graph as $H$, and think of $G$ as having four parts $A,B,C,D$,
where $A$ and $D$ are in fact $A_i,D_j$ and are therefore smaller:
$|A|=|D|=k$ and $|B|=|C|=n$.

The nodes in $H$ are partitioned into four sets $A',B,C,D'$, where the sets $B,C$ are the same as in $G$.
For the nodes in $A,D$ in $G$, multiple copies are created in $H$.
For all integers $x \in [n/k]$ and node $a \in A$ in $G$, add a node $a_x$ to $A'$ in $H$.
Similarly, for all $x \in [n/k]$ and node $d \in D$, add a node $a_x$ to $A'$.
Note that $H$ contains $O(n)$ nodes.

To define the edges, partition the nodes in $B$ and $C$ arbitrarily to sets $B_1,\ldots,B_{n/k}$ and $C_1,\ldots,C_{n/k}$ of size $k$.
Now, the edges are defined in a similar way to the previous proof, except each $a_x$ is connected only to nodes in $B_{x}$, and each $d_y$ is connected only to nodes in $C_y$. More formally:

\begin{itemize}
\item (A to B) For every $a_x \in A', b \in B_x$ such that $\{a,b\} \in E(G)$, add to $E(H)$ a directed edge $(a_x,b)$.
\item (B to C) For every $b \in B, c \in C$ such that $\{b,c\} \in E(G)$, add to $E(H)$ a directed edge $(b,c)$.
\item (C to D) For every $c \in C_y, d_y \in D'$ such that $\{c,d\} \in E(G)$, add to $E(H)$ a directed edge $(c,d_y)$.
\item (A to C) For every $a_x \in A', c \in C$ such that $\{a,c\} \notin E(G)$, add to $E(H)$ a directed edge $(a_x,c)$.
\item (B to D) For every $b \in B, d_y \in D'$ such that $\{b,d\} \notin E(G)$, add to $E(H)$ a directed edge $(b,d_y)$.
\end{itemize}

This completes the construction of $H$.
The arguments for correctness follow the same lines as in the previous proof.
For nodes $a_x \in A', d_y \in D'$ denote:
\begin{align*}
  B'_{a_x,d_y}
  &= \left\{ b \in B_x \  \mid \  \text{$\{a,b\} \in E(G)$ and $\{b,d\} \notin E(G)$} \  \right\},
  \\
  C'_{a_x,d_y}
  &= \left\{ c \in C_y \  \mid \  \text{$\{a,c\} \notin E(G)$ and $\{c,d\} \in E(G)$} \  \right\}.
\end{align*}

\begin{claim} \label{cl:technicalLow}
  Let $a_x \in A', d_y \in D'$ be nodes with $\{a,d\} \in E(G)$.
  If the edge $\{a,d\}$ does not participate in a $4$-clique in $G$
  together with any nodes in $B_x \cup C_y$,
  then the node connectivity from $a_x$ to $d_y$ in $H$, denoted $NC(a_x,d_y)$,
  is exactly
$$
NC(a_x,d_y) = | B'_{a_x,d_y} | + |C'_{a_x,d_y}|
$$
and otherwise $NC(a_x,d_y)$ is strictly larger. 
\end{claim}

\begin{proof}[Proof of Claim~\ref{cl:technicalLow}]
The proof is very similar to the one in the previous reduction.

We start by observing that all paths from $a_x$ to $d_y$ in $H$ can have either two or three hops.

For the first direction, assuming that there is a $4$-clique $(a,b^*,c^*,d)$ in $G$ with $b^* \in B_x, c^* \in C_y$, we show a set $P$ of node-disjoint paths from $a_x$ to $d_y$ of size $| B'_{a_x,d_y} | + |C'_{a_x,d_y}|+1$.
For all nodes $b \in B'_{a_x,d_y}$, add the $2$-hop path $a_x \to b \to d_y$ to $P$.
For all nodes $c \in C'_{a_x,d_y}$, add the $2$-hop path $a_x \to c \to d_y$ to $P$.
So far, all these paths are clearly node-disjoint.
Then, add the $3$-hop path $a_x \to b^* \to c^* \to d_y$ to $P$.
This path is node-disjoint from the rest because $b^* \notin B'_{a_x,d_y}$ (because $\{ b^*,d\} \in E(G)$) and $c^* \notin C'_{a_x,d_y}$ (because $\{a,c^*\} \in E(G)$).

For the second direction, assume that there do not exist nodes $b \in B_x, c \in C_y$ that complete a $4$-clique with $a,d$.
In this case, for every set $P$ of node-disjoint paths from $a_x$ to $d_y$, there is a set $P'$ of $2$-hop node-disjoint paths from $a_x$ to $d_y$ that has the same size.
To see this, let $a_x \to b \to c \to d_y$ be some $3$-hop path in $P$. Since $(a,b,c,d)$ is not a $4$-clique in $G$ and $\{a,d\},\{a,b\},\{b,c\},\{c,d\}$ are edges in $G$, it follows that either $\{ a,c\} \notin E(G)$ or  $\{ b,d\} \notin E(G)$.
If $\{a,c\} \notin E(G)$ then $a_x \to c$ is an edge in $H$ and the $3$-hop path can be replaced with the $2$-hop path $a_x \to c \to d_y$ (by skipping $b$) and one is remained with a set of node-disjoint paths of the same size.
Similarly, if $\{ b,d\} \notin E(G)$ then $b \to d_y$ is an edge in $H$ and the $3$-hop path can be replaced with the $2$-hop path $a_x \to b \to d_y$.
This can be done for all $3$-hop paths and result in $P'$.
Finally, note that the number of $2$-hop paths from $a_x$ to $d_y$ is exactly $| B'_{a_x,d_y} | + |C'_{a_x,d_y}|$,
and this completes the proof of Claim~\ref{cl:technicalLow}.
\end{proof}

This claim implies that in order to determine whether a pair $a \in A, d \in D$ participate in a $4$-clique in $G$ is it is enough to check whether $ \sum_{x,y \in [n/k]} NC(a_x,d_y)$ is equal to $\sum_{x,y \in [n/k]} | B'_{a_x,d_y} | + |C'_{a_x,d_y}|$. Note that the latter is equal to $|B'_{a,d}|+|C'_{a,d}|$ according to the notation in the previous reduction:
\begin{align*}
  B'_{a,d}
  &= \left\{ b \in B \  \mid \  \text{$\{a,b\} \in E(G)$ and $\{b,d\} \notin E(G)$} \  \right\},
  \\
  C'_{a,d}
  &= \left\{ c \in C \  \mid \  \text{$\{a,c\} \notin E(G)$ and $\{c,d\} \in E(G)$} \  \right\}.
\end{align*}

\subparagraph{Computing the estimates}
To complete the reduction, observe that the values $| B'_{a,d} | + |C'_{a,d}|$ can be computed for all pairs $a \in A, d \in D$ (for all sub-instances $i,j$) using two matrix products, just like in the previous reduction.

After having these estimates and computing \APMVC on $H$, it can be decided whether $G$ contains a $4$-clique in $O(k^2\cdot n/k)$ time, for each sub-instance $i,j$, as follows. Go through all edges $\{a,d\} \in E(G) \cap A_i \times D_j$ and check whether the edge participates in a $4$-clique by comparing this value $| B'_{a,d} | + |C'_{a,d}|$ to the node connectivities $\sum_{x,y \in [n/k]} NC(a_x,d_y)$ in $H$. By the above claim, one can find an edge $\{a,d\}$ with $\sum_{x,y \in [n/k]} NC(a_x,d_y) > | B'_{a,d} | + |C'_{a,d}|$ if and only if there is a $4$-clique in $G$.
The total running time is $O(\frac{n^2}{k^2} \cdot T(n,k) + MM(n))$,
which completes the proof of Lemma~\ref{Lemma:technicalLow}. 
\end{proof}

\begin{proof}[Proof of Theorem~\ref{thm:CLB}] 
Assume there is an algorithm that solves \kAPMVC
in time $O((n^{\omega-1}k^2)^{1-\varepsilon})$.
Then by Lemma~\ref{Lemma:technicalLow} there is an algorithm
that solves $4$-Clique
in time $=O(\frac{n^2}{k^2} \cdot (n^{\omega-1}k^2)^{1-\varepsilon}+MM(n)) \leq
O(n^{\omega+1-\varepsilon'})$, for some $\varepsilon'>0$.
The bound for combinatorial algorithms is achieved similarly.
\end{proof}

\section{Randomized Algorithms for General Digraphs}
\label{app:general}

In this section we develop faster randomized algorithms for the following problems.
Given a digraph $G=(V,E)$ with unit vertex capacities,
two subsets $S, T \subseteq V$ and parameter $k>0$,
find all $s \in S, t \in T$ for which the \stmincut value is less than $k$
and report their min-cut value. 
This problem is called \kSTMVC, and if $S=T=V$, it is called \kAPMVC.
This is done by showing that the framework of Cheung et al.~\cite{CheungLL13} can be applied faster to unit vertex-capacitated graphs.
Before providing our new algorithmic results
(in Theorem~\ref{Thm:Vertices} and Corollary~\ref{Cor:Vertices}),
we first give some background on network coding
(see~\cite{CheungLL13} for a more comprehensive treatment).

\paragraph*{The Network-Coding Approach.} 
Network coding is a novel method for transmitting information in a network. As shown in a fundamental result~\cite{AhlswedeCLY00}, if the edge connectivity from the source $s$ to each sink $t_i$ is $\geq k$, then $k$ units of information can be shipped to all sinks simultaneously by performing encoding and decoding at the vertices. This can be seen as a max-information-flow min-cut theorem for multicasting, for which an elegant algebraic
framework has been developed for constructing efficient network coding schemes~\cite{li2003linear, KoetterM03}.
These techniques were used in~\cite{CheungLL13} to compute edge connectivities, and below we briefly recap their method and notation.

Given a vertex $s$ from which we need to compute the maximum flow to all other vertices in $G$, define the following matrices over a field $\mathbb{F}$.
\begin{itemize}
\item $F_{d\times m}$ is a matrix whose $m$ columns are $d$-dimensional global encoding vectors of the edges, with $d=\deg_G^{out}(s)$.
\item $K_{m\times m}$ is a matrix whose entry $(e_1,e_2)$ corresponds to the local encoding coefficient $k_{e_1,e_2}$ which is set to a random value from the field $\card{\mathbb{F}}=O(m^c)$ if $e_1$'s head is $e_2$'s tail, and to zero otherwise.
\item $H_{d\times m}$ is a matrix whose columns are $(\overrightarrow{e_1},\dots, \overrightarrow{e_d},\overrightarrow{0},\dots,\overrightarrow{0})$ ($\overrightarrow{e_i}$ is in the column corresponding to $e_i$) where the column vector $\overrightarrow{e_i}$ is the $i$th standard basis vector and $e_1,\dots,e_d$ are the edges outgoing of $s$.
\end{itemize}

The global encoding vectors $F$ are defined such that $F=FK+H$, and then by simple manipulations the equation $F=H(I-K)^{-1}$ is achieved (so multiplying by $H$ simply picks rows of $(I-K)^{-1}$ that correspond to the edges outgoing of $s$). The algorithm utilizes this by first computing $(I-K)^{-1}$ in time $O(m^{\omega})$, and then for every source $s$ and sink $t$ computing the rank of the submatrix corresponding to rows $\delta^{out}(s)$ and columns $\delta^{in}(t)$ in time $O(m^2n^{\omega-2})$, and the overall time is $O(m^{\omega})$ since $m\geq n$. Notice that even if we only care about the maximum flow between given sets of sources and targets $S,T$,
the bottleneck is that the matrix $(I-K)$ has $m^2$ entries, potentially most are non-zeroes, which must be read to compute $(I-K)^{-1}$.

\paragraph*{Our Algorithmic Results.} 
\begin{lemma}\label{Lemma:Technical}
  \kSTMVC can be solved in randomized time
  $O\Big( n^{\omega}
  + \sum_{s\in S} \sum_{t\in T} \deg_G^{out}(s)
\\ \deg_G^{in}(t)^{\omega-1} \Big)$,
  where $\deg_G^{out}(u)$ and $\deg_G^{in}(u)$ denote the out-degree and the in-degree, respectively, of vertex $u$ in the input graph $G$.
\end{lemma}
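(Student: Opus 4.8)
The plan is to adapt the network-coding framework of Cheung et al.~to the vertex-capacitated setting by placing the encoding vectors on \emph{vertices} rather than on edges, which shrinks the central matrix from $m\times m$ to $n\times n$ and thus avoids the $\Omega(m^2)$ bottleneck of reading $(I-K)^{-1}$. Concretely, for a fixed source $s\in S$ I would define a vertex-indexed local encoding matrix $K\in\mathbb{F}^{n\times n}$ where the entry $(u,v)$ is a uniformly random element of a field $\mathbb{F}$ of size $\poly(n)$ whenever $(u,v)\in E$ (and $u\neq s$), and $0$ otherwise; the global encoding vector of vertex $v$ is then a random linear combination of the global encoding vectors of its in-neighbors, which algebraically is the system $F=FK+H$ with $H$ picking out the out-neighbors of $s$ as standard basis vectors in dimension $d=\deg_G^{out}(s)$. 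Solving gives $F=H(I-K)^{-1}$. The single matrix inversion $(I-K)^{-1}$ costs $O(n^\omega)$ and is done \emph{once for each source} — but one must be careful: naively this is $O(|S|n^\omega)$, so the first step is to argue that a single inversion (or a small number of them) suffices for all sources simultaneously, e.g.\ by using a common $K$ on $V\setminus\{s\}$ and handling the source rows separately, or by absorbing the per-source cost into the stated $n^\omega$ term only when $|S|=O(1)$-type bounds hold; more likely the intended reading is that the $n^\omega$ term already dominates because the per-pair work below is the real content.

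The key correctness claim to establish is the vertex analogue of the Cheung–Lau–Leung rank identity: for $s\in S$ and $t\in T$, the vertex connectivity $\kappa(s,t)$ equals, with high probability, the rank of the submatrix of $(I-K)^{-1}$ with rows indexed by $N^+(s)$ and columns indexed by $N^-(t)$. The forward inequality (rank $\le \kappa(s,t)$) follows because any set of common vertices separating $s$ from $t$ forces a corresponding rank bound on the global encoding vectors that reach $t$ — information must flow through the separator, so at most $\kappa(s,t)$ dimensions survive. For the reverse inequality (rank $\ge \kappa(s,t)$ w.h.p.), I would exhibit, via Menger's theorem, $\kappa(s,t)$ internally vertex-disjoint $s$–$t$ paths and argue that for a generic (random) choice of the coefficients in $K$ the corresponding $\kappa(s,t)$ global encoding vectors arriving at $t$ are linearly independent; this is the standard Schwartz–Zippel argument applied to the determinant of the relevant $\kappa(s,t)\times\kappa(s,t)$ minor, which is a nonzero polynomial (witnessed by the transfer along the disjoint paths) of degree $\poly(n)$, so choosing $|\mathbb{F}|=\poly(n)$ large enough makes the failure probability polynomially small; a union bound over the $O(n^2)$ pairs then handles all of $S\times T$ at once.

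Given the identity, the running time is: $O(n^\omega)$ to form and invert $I-K$, then for each pair $(s,t)\in S\times T$ we extract the $\deg_G^{out}(s)\times\deg_G^{in}(t)$ submatrix and compute its rank. Since rank of an $a\times b$ matrix with $a\le b$ can be computed in time $O(ab^{\omega-1})$ (by fast rectangular elimination, splitting the wide matrix into $\lceil b/a\rceil$ square blocks), taking $a=\min\{\deg_G^{out}(s),\deg_G^{in}(t)\}$ and $b$ the max gives $O(\deg_G^{out}(s)\deg_G^{in}(t)^{\omega-1})$ per pair (assuming the out-degree is the smaller side — the other case is symmetric and bounded by the same expression up to swapping, which the statement absorbs). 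Summing over all $(s,t)$ yields the claimed $O\big(n^\omega+\sum_{s\in S}\sum_{t\in T}\deg_G^{out}(s)\deg_G^{in}(t)^{\omega-1}\big)$. The main obstacle I anticipate is twofold: first, making the vertex-based encoding genuinely capture vertex connectivity (edge encodings force nothing at a vertex, so one must check that routing capacity through each vertex is exactly $1$ in this model — essentially that the single outgoing combined vector per vertex is the right constraint), and second, handling the per-source inversion cost cleanly so that it does not blow up to $|S|\cdot n^\omega$; I expect the paper resolves the latter by noting the matrix $I-K$ can be taken source-independent except for a rank-$d$ update, so a single $O(n^\omega)$ inversion plus cheap per-source updates suffices.
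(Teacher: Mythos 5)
Your proposal follows essentially the same route as the paper: place the encoding vectors on vertices so that $K$ becomes $n\times n$, invert $I-K$ once in $O(n^{\omega})$, and for each pair $(s,t)$ compute the rank of the submatrix of $(I-K)^{-1}$ with rows $N^{out}(s)$ and columns $N^{in}(t)$, proving rank $\le \kappa_{s,t}$ via a vertex separator and rank $\ge \kappa_{s,t}$ via Menger plus a Schwartz--Zippel argument on vertex-disjoint paths; this is exactly how the paper adapts Theorem~2.1 of Cheung--Lau--Leung, and your rectangular-rank accounting matches the stated per-pair cost.

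The one point to fix is your treatment of the source: you build the source-dependence into $K$ (the condition ``$u\neq s$''), which is what creates the per-source inversion problem you then hedge about (per-source $n^{\omega}$, or a rank-$d$ update). In the Cheung--Lau--Leung framework, and in the paper, $K$ is defined independently of the source --- a random field element for every arc $(u,v)$, zero otherwise --- and the source enters only through $H$, which merely \emph{selects} the rows of $(I-K)^{-1}$ indexed by $N^{out}(s)$. Hence a single inversion serves all sources simultaneously, no update is needed, and the bound $O\big(n^{\omega}+\sum_{s\in S}\sum_{t\in T}\deg_G^{out}(s)\deg_G^{in}(t)^{\omega-1}\big)$ follows directly; with that correction your argument is the paper's.
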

\begin{proof}
We consider global encoding vectors in the vertices rather than in the edges in the natural way, namely, the coefficients are non-zero for every pair of adjacent vertices (rather than adjacent edges), and for a source $s$ and a sink $t$ we compute the rank of the submatrix of $(I-K)^{-1}$ whose rows correspond to the vertices $N^{out}(s)$ and columns correspond to $N^{in}(t)$.
The running time is dominated by inverting the matrix $(I-K)_{n\times n}$ and computing the rank of the relevant submatrices, that is $O(n^{\omega}+\sum_{s\in S}\sum_{t\in T}\deg_G(s)\deg_G(t)^{\omega-1})$, as required. Notice that by considering vertices rather than edges, the bottleneck moves from computing $(I-K)^{-1}$ to computing the rank of the relevant submatrices.

To prove the correctness, we argue that Theorem $2.1$ from~\cite{CheungLL13} holds also here (adjusted to node-capacities). Part $1$ in their proof clearly holds also here, so we focus on the second part, which in~\cite{CheungLL13} shows that the edge connectivity from $s$ to $t$, denoted $\lambda_{s,t}$, is equal to the rank of the matrix $M_{s,t}$ of size $\deg_G(s)\times \deg_G(t)$ comprising of the global encoding vectors on the edges incoming to $t$ as its columns. Here, we denote the vertex connectivity from $s$ to $t$ by $\kappa_{s,t}$, and the corresponding matrix $M_{s,t}^{vertices}$, and we show that their proof can be adjusted to show $\rank(M_{s,t}^{vertices})=\kappa_{s,t}$, as required. First, $\rank(M_{s,t}^{vertices})\leq \kappa_{s,t}$ as instead of considering an edge-cut $(S,T)$ and claiming that the global encoding vector on each incoming edge of $t$ is a linear combination of the global encoding vectors of the edges in $(S,T)$, we consider a node-cut $(S^{vertices},C^{vertices},T^{vertices})$, and similarly claim that the global encoding vectors on each vertex with an edge to $t$ is a linear combination of the global encoding vectors in $C^{vertices}$, and the rest of the proof follows.
For the second part, we argue that $\rank(M_{s,t}^{vertices}) \geq \kappa_{s,t}$. The main proof idea from~\cite{CheungLL13} that the rank does not increase if we restrict our attention to a subgraph holds here too, only that we use vertex disjoint paths as the subgraph to establish the rank. 
\end{proof}

\begin{theorem} \label{Thm:Vertices}
  \kSTMVC can be solved in randomized time
  $O\Big( \big(n+((\card{S}+\card{T})k)\big)^{\omega}+\card{S}\card{T}k^{\omega} \Big)$.
\end{theorem}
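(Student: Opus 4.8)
The plan is to reduce Theorem~\ref{Thm:Vertices} to Lemma~\ref{Lemma:Technical} by a standard ``degree-reduction'' preprocessing: we cannot control $\deg_G^{out}(s)$ and $\deg_G^{in}(t)$ directly, so we first transform $G$ into an equivalent graph $G'$ in which every source in $S$ has out-degree $O(k)$ and every sink in $T$ has in-degree $O(k)$, while keeping $n(G') = n + O((\card{S}+\card{T})k)$. The key observation is that for the $k$-bounded problem only the first $k$ units of flow matter: if $\deg_G^{out}(s) > k$, we may ``funnel'' the out-edges of $s$ through a balanced binary tree (or a small gadget) of new unit-capacity vertices so that $s$ itself has out-degree at most, say, $2k$ or even $k$ by routing through $k$ internal nodes — any flow of value $\le k$ can be rerouted through such a gadget, and any larger flow is certified as $\ge k$, which is all the $k$-bounded problem needs to report. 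Symmetrically, we funnel the in-edges of each $t \in T$ through a gadget ending at $t$. Vertices outside $S \cup T$ are left untouched.

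First I would make the gadget construction precise: for each $s \in S$ with $\deg_G^{out}(s) > k$, replace $s$'s outgoing star by a layered structure on $O(k)$ fresh unit-capacity vertices whose total min-cut separating $s$ from the old out-neighborhood is exactly $\min\{k, \deg_G^{out}(s)\}$ but in which $s$ now points to only $k$ fresh vertices; dually for $t \in T$. One must check two things: (i) for every $s \in S, t \in T$ the vertex connectivity $\kappa_{s,t}$ in $G'$ equals $\min\{k, \kappa_{s,t}(G)\}$ (or at least: equals $\kappa_{s,t}(G)$ whenever the latter is $< k$, and is $\ge k$ otherwise), which follows because a min vertex-cut of value $< k$ in $G$ survives as a cut in $G'$ and vice versa, the gadget internal vertices never forming a bottleneck of size $< k$; and (ii) $G'$ has $n + O((\card{S}+\card{T})k)$ vertices, $\deg_{G'}^{out}(s) \le k$ for $s\in S$, and $\deg_{G'}^{in}(t) \le k$ for $t\in T$. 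Then I would apply Lemma~\ref{Lemma:Technical} to $G'$ to solve \kSTMVC on it, and translate the answers back (they are literally the same, capped at $k$).

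The running time is then immediate arithmetic: Lemma~\ref{Lemma:Technical} gives
\begin{equation*}
O\!\Big( n(G')^{\omega} + \sum_{s\in S}\sum_{t\in T} \deg_{G'}^{out}(s)\,\deg_{G'}^{in}(t)^{\omega-1} \Big)
\le O\!\Big( \big(n + (\card{S}+\card{T})k\big)^{\omega} + \sum_{s\in S}\sum_{t\in T} k \cdot k^{\omega-1} \Big),
\end{equation*}
and the double sum is at most $\card{S}\card{T}k^{\omega}$, which is exactly the claimed bound. Note also that the $\card{S}\card{T}k^{\omega}$ term dominates $\card{S}\card{T}$ copies of inverting/rank-computing on $O(k)\times O(k)$ blocks, so nothing is lost in the gadgeting step itself.

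The main obstacle I expect is item (i): proving that the funneling gadget does not accidentally create or destroy small vertex-cuts. The delicate point is that the internal gadget vertices have unit capacity, so one must design the gadget (e.g.\ $k$ parallel ``relay'' vertices, each old out-neighbor of $s$ reachable from all $k$ relays, or a more careful layered expander-like gadget) so that it is genuinely transparent to flows of value up to $k$ — i.e.\ the relays themselves admit $k$ vertex-disjoint $s$-to-(old neighborhood) routes — while still reducing $s$'s out-degree. A clean choice is: introduce $k$ fresh vertices $r_1,\dots,r_k$, put edges $s \to r_i$ for all $i$, and $r_i \to w$ for every old out-neighbor $w$ of $s$; this has out-degree $k$ at $s$, uses $k$ new vertices, and supports exactly $\min\{k,\deg_G^{out}(s)\}$ disjoint paths out of $s$, so any cut of value $<k$ in the original is preserved and no new one is introduced. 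Verifying this Menger-type equivalence carefully — including that cuts may now legitimately use the $r_i$'s but never gain from doing so when the target value is $<k$ — is the crux; the rest is bookkeeping.
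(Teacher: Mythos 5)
Your proposal is correct and matches the paper's own proof: the paper reduces the degrees by inserting, for each $s\in S$ (and symmetrically each $t\in T$), a layer of $k$ fresh unit-capacity relay vertices joined by a complete bipartite graph to the old out-neighborhood (resp.\ in-neighborhood) — exactly your ``clean choice'' gadget — which preserves all cut values below $k$ and caps larger ones at $k$, and then applies Lemma~\ref{Lemma:Technical} with the same arithmetic you give. The only difference is presentational: you deliberate over alternative gadgets before arriving at the one the paper uses directly.
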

\begin{proof}
In order to use Lemma~\ref{Lemma:Technical} to prove Theorem~\ref{Thm:Vertices}, we need to decrease the degree of sources $S$ and sinks $T$. Thus, for every source $s$ we add a layer of $k$ vertices $L_s$ and connect $s$ to all the vertices in $L_s$ which in turn are connected by a complete directed bipartite graph to the set of vertices $N^{out}(s)$, directed away from $L_s$. Similarly, for every sink $t\in T$ we add a layer of $k$ vertices $L_t$ and connect to $t$ all the vertices in $L_t$, which in turn are connected by a complete directed bipartite graph from the set of vertices $N^{in}(t)$, directed away from $N^{in}(t)$.
Note that all flows of size $\leq k-1$ are preserved, and flows of size $\geq k$ become $k$. 
This incurs an additive term $(\card{S}+\card{T})k$ in the dimension of the matrix inverted, 
and altogether we achieve a running time of $\bigO((n+(\card{S}+\card{T})k)^{\omega}+\card{S}\card{T}k^{\omega})$, as required.
\end{proof}

As an immediate corollary we have the following.
\begin{corollary}\label{Cor:Vertices}
\kAPMVC can be solved in randomized time $\bigO((n k)^{\omega})$.
\end{corollary}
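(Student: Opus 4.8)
The plan is to derive Corollary~\ref{Cor:Vertices} directly from Theorem~\ref{Thm:Vertices} by the special case $S=T=V$. First I would substitute $\card{S}=\card{T}=n$ into the running-time bound of Theorem~\ref{Thm:Vertices}, which gives
\[
O\Big( \big(n+(\card{S}+\card{T})k\big)^{\omega}+\card{S}\card{T}k^{\omega} \Big)
= O\big( (n+2nk)^{\omega}+n^2 k^{\omega} \big).
\]
Then I would simplify each term: since $k\ge 1$ we have $n+2nk = O(nk)$, so the first term is $O((nk)^{\omega})$; and since $\omega\ge 2$, the second term satisfies $n^2 k^{\omega}\le n^{\omega} k^{\omega} = (nk)^{\omega}$. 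Hence the whole bound collapses to $O((nk)^{\omega})$.

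The only genuine point to check is that \kAPMVC is indeed the case $S=T=V$ of \kSTMVC. This is exactly how the problems were defined at the start of Section~\ref{app:general} (``this problem is called \kSTMVC, and if $S=T=V$, it is called \kAPMVC''), so no extra argument is needed: the reduction is definitional. I would simply invoke Theorem~\ref{Thm:Vertices} with these parameters.

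There is essentially no obstacle here; this is a one-line corollary. The only thing to be slightly careful about is the arithmetic in the second step — making sure that $\omega\ge 2$ is used to absorb the $n^2 k^{\omega}$ term into $(nk)^{\omega}$, rather than leaving a spurious $k^{\omega}$ factor. A compact write-up would read: \emph{Apply Theorem~\ref{Thm:Vertices} with $S=T=V$, so $\card{S}=\card{T}=n$. Using $k\ge 1$ and $\omega\ge 2$, we get $\big(n+2nk\big)^{\omega}+n^2k^{\omega} = O\big((nk)^{\omega}\big)$.}\qed
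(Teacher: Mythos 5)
Your proposal is correct and matches the paper's approach: the paper derives Corollary~\ref{Cor:Vertices} as an immediate consequence of Theorem~\ref{Thm:Vertices} with $S=T=V$, exactly as you do. Your explicit arithmetic check that $(n+2nk)^{\omega}+n^2k^{\omega}=O((nk)^{\omega})$ is the only content needed and is carried out correctly.
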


\section{Structure of Cuts}
\label{sec:cuts-structure}

In this section, we study the dependence of the latest $s$-$t$ cuts on the $s$-$v$ cuts and the $v$-$t$ cuts, for all vertices  $v\notin\{s,t\}$.
None of the results contained in this section rely on the input graph $G$ being acyclic.

First, we present some basic relations between flow and extremal $s$-$t$ min-cuts.
To avoid repetitions below, we state some results only for \emph{latest} cuts. However, all of them naturally extend to earliest cuts.

%
\begin{lemma}[Latest $s$-$t$ min-cut {\cite[Theorem 5.5]{ford1962flows}}]
	\label{lem:mincutinclusion}
	For any directed graph $G = (V,A)$, any maximum $s$-$t$ flow $f$ defines the same set of $t$-reaching vertices
	$T_{s,t}$ and thus defines an $s$-$t$ cut $M=A \cap (\lbar{T_{s,t}} \times T_{s,t})$, with $T_{s,t} = \{x \in V \mid \exists\text{ $x$-$t$ path in residual graph of $G$ under flow $f$}\}$.
	For any $s$-$t$ min-cut $M'$, we have $M' \le M$.
\end{lemma}
Maximum flows are not necessarily unique, but Lemma~\ref{lem:mincutinclusion} shows that the $t$-reaching cut $M$ is.

\begin{corollary}
	\label{lem:latestmincut}
	For any digraph $G$ and vertices $s$ and $t$, the latest $s$-$t$ min-cut is unique.
\end{corollary}

Next,we introduce some notation for sets of extremal cuts and their transitive order.
Then, building on the uniqueness of the latest min-cut (Corollary~\ref{lem:latestmincut}), we constructively define an operation which we call \emph{arc replacement} in an $s$-$t$-latest cut.
We refer to Figures~\ref{fig:arc_replacement_examples} and \ref{fig:arc_replacement_definition} for illustrations.

\subparagraph{Transitive reduction.}
By $\mathcal{F}_{s,t}$ we denote the set of $s$-$t$-latest cuts, by $\mathcal{F}_{s,t}^k$ the set of $s$-$t$-latest $k$-cuts. Sets of earliest cuts are denoted by $\mathcal{E}_{s,t}$ and $\mathcal{E}_{s,t}^k$ respectively. We also denote $\mathcal{F}_{s,t}^{\le k} = \bigcup_{i=1}^k \mathcal{F}_{s,t}^i$.
Since `$>$', the partial order on cuts, is a transitive relation, we can consider its transitive reduction. We say that $M' \in \mathcal{F}_{s,t}$ is \emph{immediately later} than $M \in \mathcal{F}_{s,t}$, if $M' > M$ and there is no $M'' \in \mathcal{F}_{s,t}$ such that $M' > M'' > M$.

\begin{definition}[Arc replacement]
	Given $M \in \mathcal{F}_{s,t}$ and $a = (u,v) \in M$, let $G' = (V', A')$ be a copy of $G$, where all vertices in $S_M \cup \{v\}$ are contracted to vertex $s$.
	We call the unique latest $s$-$t$ min-cut $M^*$ in $G'$ the \emph{arc replacement of $M$ and $a$ in $G$} (or say that it does not exist if $s$ and $t$ got contracted into the same vertex in $G'$ whenever $v=t$).
\end{definition}

Note that the arcs $A'$ in $G'$ correspond to a subset of the arcs $A$ in $G$ as we think of the contraction as a relabeling of some of the endpoints without changing any identifiers.
We note that a similar operation with respect to latest min-cuts was used by Baswana, Choudhary, and Roditty~\cite{BaswanaCR16}, but in a different way. Given a digraph $G$ and two vertices $s,t$, Baswana et al.~\cite{BaswanaCR16} use an operation to compute a set $A_t$ of incoming arcs to $t$ with the following property: Let $G'$ be the subgraph of $G$ where the only arcs entering $t$ are the arcs in $A_t$. 
Then, there exist $k$ arc-disjoint paths from $s$ to $t$ in $G$ iff there exist $k$ arc-disjoint paths from $s$ to $t$ in $G'$.
Here, we use the arc replacement operation to relate all $s$-$t$-latest cuts, as we next show.

\begin{figure}[tb]
	\begin{minipage}{.5\textwidth}
		\centering\includegraphics[scale=0.6]{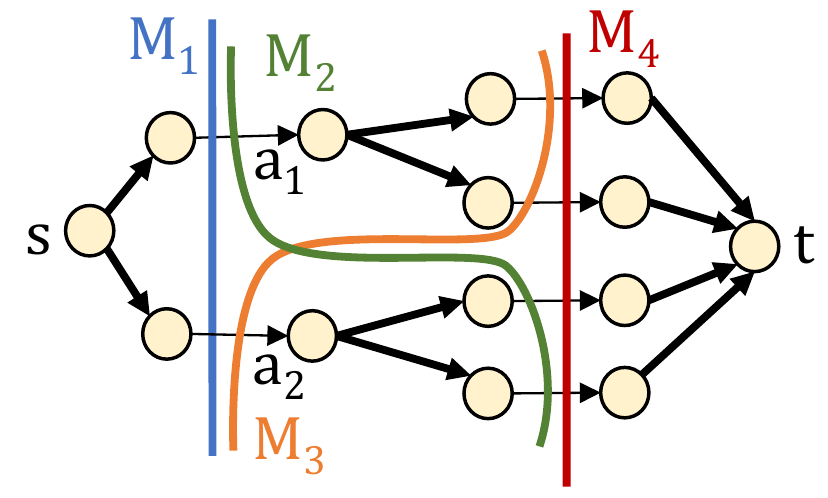}
	\end{minipage}
	\begin{minipage}{.5\textwidth}
		\centering\includegraphics[scale=0.6]{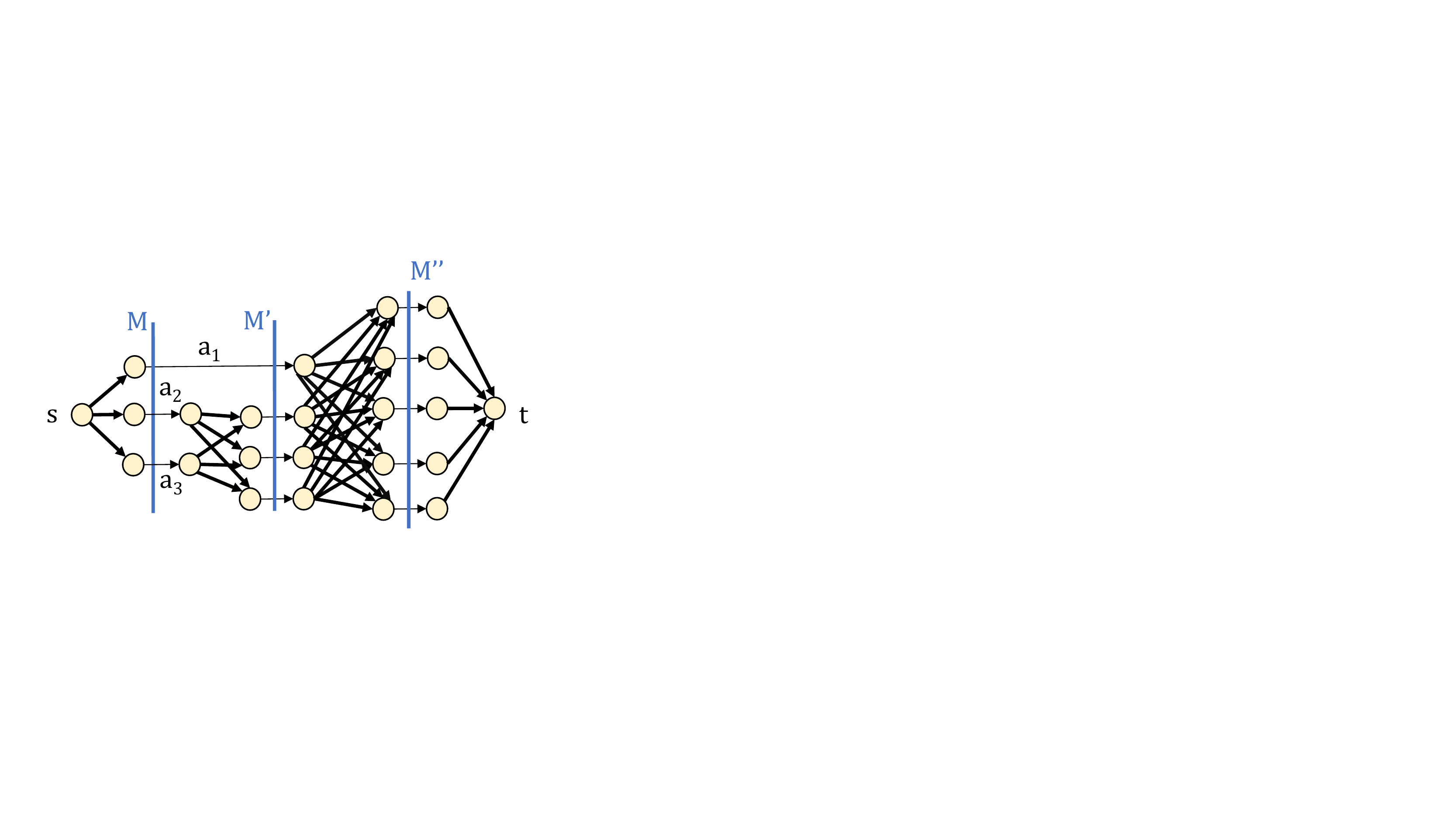}
	\end{minipage}
	\caption{Examples of digraphs with several $s$-$t$ cuts. Bold arcs represent parallel arcs which would be too expensive to cut.
		(left) We have $M_2 > M_1$, $M_3 > M_1$, $M_4 > M_2$ and $M_4 > M_3$ and all four cuts are $s$-$t$-latest. $M_2$ and $M_3$ are incomparable, neither of them is later than the other.
		$M_2$ is the arc replacement of $M_1$ and $a_2$, $M_3$ is the arc replacement of $M_1$ and $a_1$. $M_4$ is both the arc replacement of $M_2$ and $a_1$ and of $M_3$ and $a_2$.
		(right) The cut $M = \{a_1, a_2, a_3\}$ is the only $s$-$t$ min-cut.
		$M'$ is the only $s$-$t$-latest cut of size $4$ and $M''$ is the only $s$-$t$-latest cut of size $5$.
		Note that we have $M'' > M' > M$.
		$M''$ is the arc replacement of $M$ and $a_1$, while
		$M'$ is the arc replacement of $M$ and $a_2$ and also of $M$ and $a_3$.
	}
	\label{fig:arc_replacement_examples}
\end{figure}

\begin{figure}
	\begin{minipage}{\textwidth}
		\centering\includegraphics[scale=0.6]{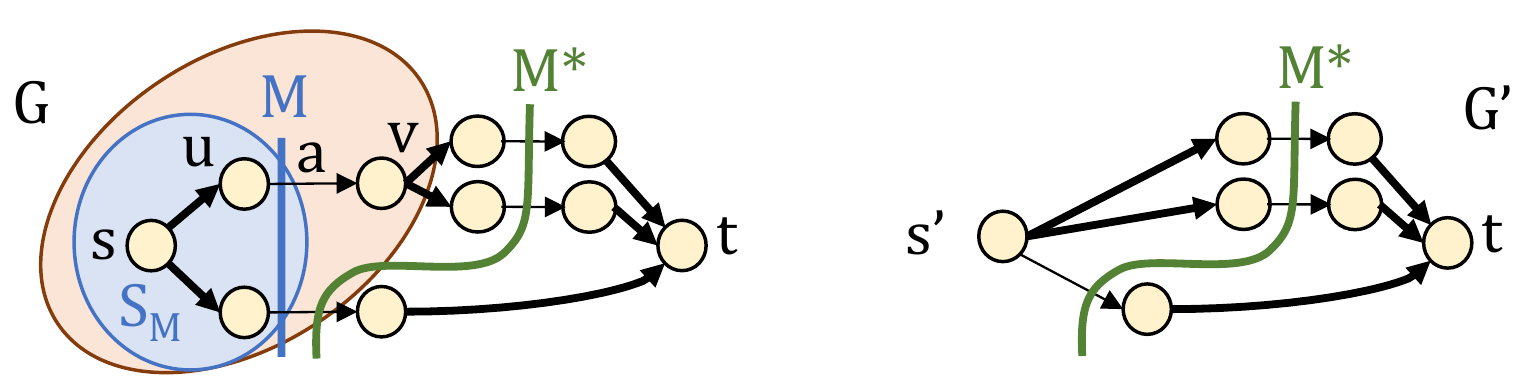}
	\end{minipage}
	\caption{(left) The original graph $G$, with bold arcs representing parallel arcs and the $s$-$t$-latest (min-)cut $M$.
		To replace $a = (u,v) \in M$ in $G$, the source side $S_M$ (blue set) together $v$ (red set) gets contracted into a new source vertex $s'$.
		(right) The graph $G'$ after the contraction.
		The latest $s'$-$t$ min-cut $M^*$ in $G'$ corresponds to the arc replacement of $M$ and $a$ in $G$.
	\label{fig:arc_replacement_definition}
	}
\end{figure}

\begin{lemma}
For any $s$-$t$-latest cuts $M$ and $M'$, if $M'$ is immediately later than $M$ then $M'$ is the arc replacement of $M$ and arc $a$, where $a$ can be any arc in $M \setminus M'$.
\label{lem:arcreplacement}
\end{lemma}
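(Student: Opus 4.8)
The plan is to prove that, inside the contracted graph $G'$ underlying the arc replacement, $M'$ is itself a minimum $s$-$t$ cut and in fact coincides with the latest one, namely $M^*$; then the lemma follows. I would first fix an arbitrary arc $a=(u,v)\in M\setminus M'$ (this set is nonempty since $M\ne M'$ are minimal cuts with $M'>M$). Because $M'>M$ gives $S_M\subseteq S_{M'}$ (the two partial orders on minimal cuts being complementary) and $a\notin M'$, we get $S_M\cup\{v\}\subseteq S_{M'}$; in particular $v\ne t$, so $M^*$, the unique (Corollary~\ref{lem:latestmincut}) latest $s$-$t$ min cut of $G'$, is well defined, where $G'$ contracts $S_M\cup\{v\}$ to a vertex $s$ (cf.\ Figure~\ref{fig:arc_replacement_definition}). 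Next I would record three routine ``lifting'' facts, each proved by transporting $s$-$t$ paths between $G$ and $G'$ and using that no $s$-$t$ path in $G'$ minus a cut can pass through $s$: (b1) any arc set that is an $s$-$t$ cut of $G'$ is also one of $G$; (b2) an arc set that is a minimal cut of $G'$, read in $G$, is the set $\partial R$ of arcs leaving a vertex set $R\supseteq S_M\cup\{v\}$, with $T_{\partial R}$ the same in $G$ and in $G'$; (b3) conversely, any $s$-$t$ cut $N$ of $G$ with $T_N\subseteq T_M\setminus\{v\}$ is an $s$-$t$ cut of $G'$, since then $v\notin T_N$ and $S_M\cap T_N\subseteq S_M\cap T_M=\varnothing$. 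Applying (b3) to $M'$ shows $M'$ is a cut of $G'$, so $|M^*|\le|M'|$; applying (b1),(b2) to $M^*$ shows $M^*$ is a cut of $G$ with $T_{M^*}\subseteq T_M\setminus\{v\}$, hence $M^*>M$ in $G$.

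I would then check $M^*\in\mathcal F_{s,t}$: if some $s$-$t$ cut $N$ had $N>M^*$ in $G$ with $|N|\le|M^*|$, then $T_N\subseteq T_{M^*}\subseteq T_M\setminus\{v\}$, so by (b3) $N$ is a cut of $G'$; hence $|N|\ge|M^*|$ (the minimum cut value of $G'$), so $N$ is a minimum cut of $G'$ that is strictly later than $M^*$ there (its $G'$-target side is contained in $T_N\subsetneq T_{M^*}$), contradicting Lemma~\ref{lem:mincutinclusion} applied in $G'$. Now I would conclude $M^*=M'$ by comparing $S_{M^*}$ and $S_{M'}$. If $S_{M^*}\subseteq S_{M'}$, then $M<M^*\le M'$ with $M^*\in\mathcal F_{s,t}$, and since $M'$ is immediately later than $M$ nothing lies strictly between, so $M^*=M'$. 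If $S_{M'}\subsetneq S_{M^*}$, then $M^*>M'$ strictly, forcing $|M^*|>|M'|$ (as $M'$ is $s$-$t$-latest), contradicting $|M^*|\le|M'|$. In the remaining case $S_{M^*}$ and $S_{M'}$ are incomparable, and I would uncross: writing $\partial R$ for the arcs leaving $R$, minimality of $M^*$ and $M'$ gives $M^*=\partial S_{M^*}$, $M'=\partial S_{M'}$, and submodularity of $R\mapsto|\partial R|$ yields
\[
|M^*|+|M'|\ \ge\ \bigl|\partial(S_{M^*}\cup S_{M'})\bigr|+\bigl|\partial(S_{M^*}\cap S_{M'})\bigr|.
\]
On the right, $\partial(S_{M^*}\cup S_{M'})$ is an $s$-$t$ cut whose source side properly contains $S_{M'}$, hence strictly later than the $s$-$t$-latest cut $M'$, so its size is at least $|M'|+1$; and $S_{M^*}\cap S_{M'}\supseteq S_M\cup\{v\}$ (by (b2) and the first paragraph), so $\partial(S_{M^*}\cap S_{M'})$ is an $s$-$t$ cut of $G'$ and has size at least $|M^*|$. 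Substituting gives $|M^*|+|M'|\ge(|M'|+1)+|M^*|$, a contradiction, so only the first case can occur and $M^*=M'$.

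The main obstacle is this last, incomparable case: a direct comparison of $M'$ with a putative cheaper $s$-$t$ cut of $G'$ leads nowhere, and what makes the argument work is (i) promoting everything to $s$-$t$-latest cuts, so that being ``later'' is penalized by a strict increase in size, and (ii) observing that $S_{M^*}\cap S_{M'}$ still contains the contracted set $S_M\cup\{v\}$, so its boundary remains an $s$-$t$ cut of $G'$ and therefore cannot be cheaper than $M^*$. Everything else — the lifting facts (b1)–(b3) and the two properties of $M^*$ above — is routine bookkeeping about how source and target sides behave under the contraction, and does not use acyclicity of $G$.
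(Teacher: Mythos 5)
Your overall strategy is the same as the paper's: contract $S_M\cup\{v\}$, let $M^*$ be the unique latest min-cut of $G'$, show $\lvert M^*\rvert\le\lvert M'\rvert$, $M^*>M$ and $M^*\in\mathcal{F}_{s,t}$, dispose of the comparable cases via immediacy/latest-ness, and uncross in the incomparable case. Your write-up is in fact more careful than the paper's in several places (well-definedness via $v\ne t$, the lifting facts, the argument that $M^*\in\mathcal{F}_{s,t}$), and the intersection half of your uncrossing ($S_{M^*}\cap S_{M'}\supseteq S_M\cup\{v\}$, hence $\partial(S_{M^*}\cap S_{M'})$ is a cut of $G'$ of size at least $\lvert M^*\rvert$) is sound.

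The gap is in the other half of the uncrossing. You claim $\lvert\partial(S_{M^*}\cup S_{M'})\rvert\ge\lvert M'\rvert+1$ because its source side properly contains $S_{M'}$, ``hence strictly later than the $s$-$t$-latest cut $M'$.'' But ``later'' -- the relation in the definition of latest cuts -- is defined via target sides, and $\partial(S_{M^*}\cup S_{M'})$ need not be a minimal cut, so proper growth of the source side does not imply proper shrinkage of the target side (the paper explicitly warns that the two orders are only complementary for minimal cuts). Concretely, if every vertex of $S_{M^*}\setminus S_{M'}$ lies in neither $S_{M'}$ nor $T_{M'}$, the target side of $\partial(S_{M^*}\cup S_{M'})$ can coincide with $T_{M'}$, and then latest-ness of $M'$ gives you no ``$+1$'' and the submodularity inequality yields no contradiction. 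The claim can be repaired -- e.g., show $T_{\partial(S_{M^*}\cup S_{M'})}\subseteq T_{M'}$ (any path avoiding $\partial R$ cannot touch $R$), and in the equality case use $M'=\delta^-(T_{M'})\subseteq\partial(S_{M^*}\cup S_{M'})$ together with $\partial(S_{M^*}\cup S_{M'})\ne M'$ (which follows from $S_{M^*}\not\subseteq S_{M'}$) to get the strict size increase -- or, more simply, uncross on target sides as the paper does: take the later cut to be $\delta^-(T_{M'}\cap T_{M^*})$, whose target side is contained in $T_{M'}\cap T_{M^*}\subsetneq T_{M'}$ so that incomparability immediately yields a strictly later cut, and the earlier cut $\delta^-(T_{M'}\cup T_{M^*})$, whose target side avoids $S_M\cup\{v\}$ so it remains a cut of $G'$. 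As written, though, the inference you rely on is not valid and the incomparable case is not closed.
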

\begin{figure}
\begin{minipage}{.5\textwidth}
\centering\includegraphics[scale=0.6]{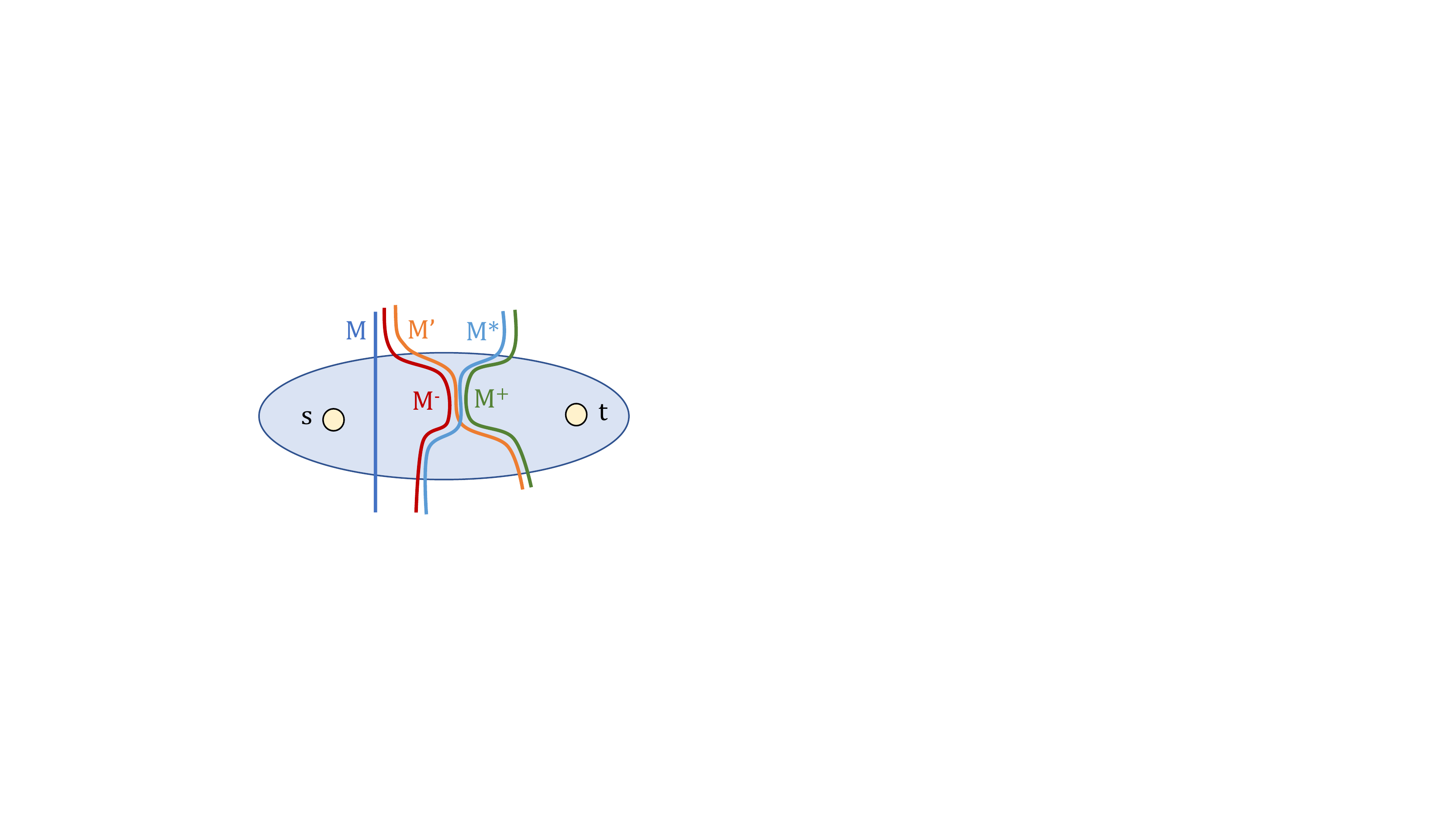}
\end{minipage}
\begin{minipage}{.5\textwidth}
\centering\includegraphics[scale=0.6]{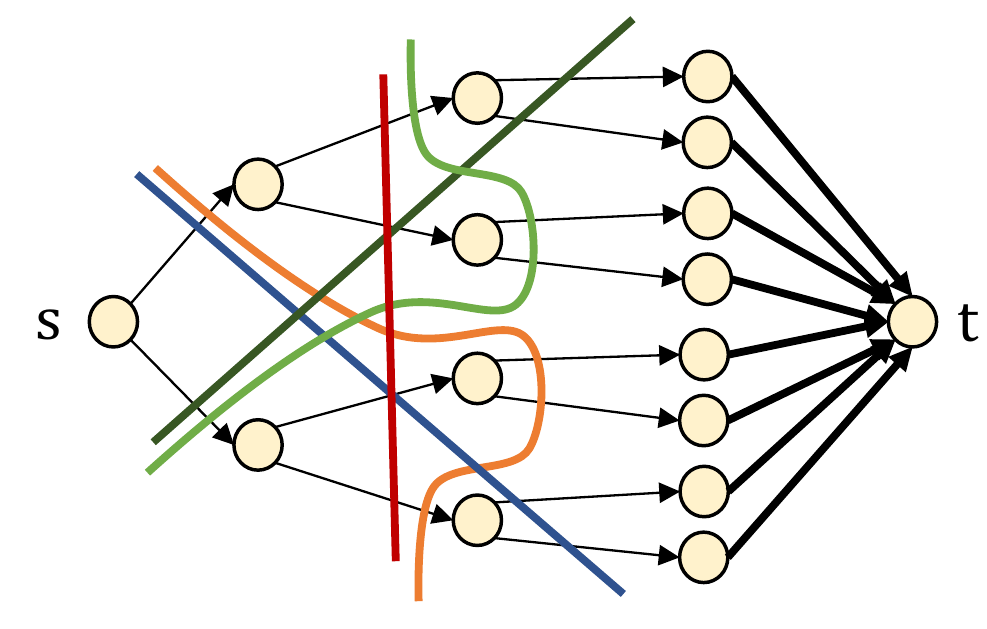}
\end{minipage}
\caption{(left) The cuts involved in the proof of Lemma~\ref{lem:arcreplacement}. If $M'$ and $M^*$ differ, $M^+$ would be both later and not larger than $M'$, hence $M'$ could not be $s$-$t$-latest.
(right) A tight example for Lemma~\ref{lem:catalan} (bold arcs represent parallel arcs): A digraph with all $C_3 = 5$ many $s$-$t$-latest $4$-cuts.}
\label{fig:arc_replacement_proofs}
\end{figure}
\begin{proof}
For a given $s$-$t$-latest cut $M$, take any arc $a^* \in M \setminus M'$ and build the corresponding arc replacement graph $G'$.
Let $M^*$ be the latest $s$-$t$ min-cut in $G'$ (which is unique by Corollary~\ref{lem:latestmincut}).
As we have $a' \geq M$ for all $a' \in A'$, we get $M' > M$ in $G$.
Also, $M^* \in \mathcal{F}_{s,t}$ (all cuts later than $M^*$ have the same cardinality in $G$ and in $G'$ and hence cannot be smaller).
The cut $M'$ has size $|M'|$ in $G'$ as all arcs $a' \in M'$ satisfy $a' \geq M$ (as $M' \geq M$) and $a' \neq a^*$ and thus are not contracted in $G'$.
By minimality of $M^*$, we get $|M^*| \leq |M'|$.

For the sake of reaching a contradiction, let us assume that $M^* \neq M'$.
$M^*$ and $M'$ can not be comparable: $M' > M^*$ would contradict $M'$ being immediately later than $M$, and $M^* > M'$ would contradict $M'$ being in $\mathcal{F}_{s,t}$.
We now define two auxiliary cuts (see Figure~\ref{fig:arc_replacement_proofs} (left) for an illustration):
\begin{align*}
M^+ &= \{a \in M' \cup M^* \mid a \geq M' \text{ and } a \geq M^*\}\\
M^- &= \{a \in M' \cup M^* \mid M' \geq a \text{ and } M^* \geq a\}
\end{align*}
As $M^+$ corresponds to $(X, \lbar{X})$ with $\lbar{X} = \lbar{T_{M'}} \cap \lbar{T_{M^*}}$
and $M^-$ to $(Y, \lbar{Y})$ with $\lbar{Y} = \lbar{T_{M'}} \cup \lbar{T_{M^*}}$,
these two arc sets really correspond to minimal $s$-$t$-cuts.
We have $M^+ \geq M', M^* \geq M^-$ and $|M^-| + |M^+| = |M'| + |M^*|$, which combined with $|M^*| \leq |M^+|, |M^-|, |M'|$ gives $|M^+| \leq |M'|$.
However, this contradicts $M' \in \mathcal{F}_{s,t}$ as $M^+$ would both be later and not larger than $M'$.
\end{proof}
Note that the reverse direction does not hold: Some arc replacements result in cuts that are not immediately later, as illustrated by $M''$ in Figure~\ref{fig:arc_replacement_examples} (right). The following lemma extends the uniqueness of min-cuts (Lemma~\ref{lem:latestmincut}) to bounding the number of $s$-$t$-latest cuts in general.
\begin{lemma}[Theorem 8.11 in \cite{cygan2015parameterized}]
\label{lem:catalan}
For any $k \ge 1$ there are at most
$C_{k-1} = \frac{1}{k}\binom{2k-2}{k-1}$ $s$-$t$-latest $k$-cuts, and at most $4^k$ $s$-$t$-latest $\le$$k$-cuts.
\end{lemma}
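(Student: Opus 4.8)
The plan is to prove the bound on $s$-$t$-latest $k$-cuts by induction on $k$, using the arc replacement operation (Lemma~\ref{lem:arcreplacement}) to set up the recursion, and then to derive the $\le$$k$ bound as a corollary. The base case $k=1$ is immediate: by Corollary~\ref{lem:latestmincut} the latest $s$-$t$ min-cut is unique, so there is at most $C_0 = 1$ latest $1$-cut. For the inductive step, fix $k \ge 2$ and consider the ``earliest'' latest $k$-cut, i.e.\ the $s$-$t$-latest min-cut $M_0$ among all cuts of size at most $k$ (unique by Corollary~\ref{lem:latestmincut}, with $|M_0|\le k$). I would argue that every $s$-$t$-latest $k$-cut $M$ with $M \ne M_0$ satisfies $M > M_0$ (since $M_0$ is the latest min-cut, $M_0 \le M$; and if $|M_0| < |M|$ or $M_0 \subsetneq M$ the inequality is strict, while $|M_0| = |M| = k$ with $M_0 \ne M$ would contradict uniqueness of the latest min-cut when $|M_0|=k$, and if $|M_0|<k$ then automatically $M_0 < M$). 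Then, walking down the transitive reduction from $M$ toward $M_0$, there is a cut $M_1$ immediately later than $M_0$ with $M \ge M_1$, and by Lemma~\ref{lem:arcreplacement} $M_1$ is the arc replacement of $M_0$ and some arc $a \in M_0 \setminus M_1$.

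The key combinatorial step is to bound how the latest $k$-cuts distribute over the choices of the ``first arc'' $a \in M_0$ that gets replaced. The arc replacement graph $G'$ obtained by contracting $S_{M_0} \cup \{v\}$ into $s$ (where $a = (u,v)$) has the property that its latest $s$-$t$ min-cut has size at most $k$; I would show that the latest $k$-cuts $M$ in $G$ with $M \ge M_1^{(a)}$ (the arc replacement of $M_0$ and $a$) correspond exactly to the latest $(k - j_a)$-cuts in a residual structure, where $j_a$ counts how many arcs of $M_0$ are ``used up'' by moving past $a$ — crucially, at least one arc of $M_0$ (namely $a$ itself) is no longer present or no longer in the cut, so each recursive subproblem has parameter strictly less than $k$. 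Summing over the possible splits and using the Catalan recurrence $C_{k-1} = \sum_{j=1}^{k} C_{j-1} C_{k-1-j}$ — or more precisely its segment form matching the ballot-sequence interpretation of Catalan numbers — yields $\sum C_{k-1-j_a}\cdot(\text{local count}) \le C_{k-1}$. This is the part where I expect the bookkeeping to be delicate: one must set up the recursion so that the arguments across the different $a$'s are disjoint pieces summing correctly, which is exactly what the nested/non-crossing structure of latest cuts (reflected in the partial order and the ``incomparable branches'' phenomenon visible in Figure~\ref{fig:arc_replacement_examples}) should guarantee.

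For the second bound, $|\mathcal{F}_{s,t}^{\le k}| = \sum_{i=1}^{k} |\mathcal{F}_{s,t}^i| \le \sum_{i=1}^{k} C_{i-1}$, and I would finish with the standard estimate $\sum_{i=0}^{k-1} C_i \le 4^k$ (indeed $\sum_{i=0}^{\infty} C_i x^i = \frac{1-\sqrt{1-4x}}{2x}$ has radius of convergence $1/4$, and term-by-term $C_{i} \le 4^i$, so the partial sum is at most $\sum_{i=0}^{k-1} 4^i < 4^k$).

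\textbf{Main obstacle.} The genuinely hard step is the middle one: correctly formalizing the arc-replacement recursion so that the latest $k$-cuts of $G$ are partitioned (not merely covered) according to the first replaced arc $a \in M_0$ and a sub-count in a smaller graph, and then matching this partition to the Catalan recurrence with the right index shifts. Since the reverse direction of Lemma~\ref{lem:arcreplacement} fails (some arc replacements are not immediately-later cuts), care is needed to ensure we neither double-count a latest $k$-cut reachable via two different arcs nor miss the decrease in the parameter; this is where the non-crossing/laminar structure of the family of latest cuts must be invoked explicitly. Given that the paper cites this as Theorem~8.11 of~\cite{cygan2015parameterized}, an acceptable alternative is to reduce directly to the known ``important cuts'' bound there and only verify that our ``latest cuts'' coincide with their ``important cuts'' for the relevant source/sink, but the self-contained inductive proof above is the route I would attempt first.
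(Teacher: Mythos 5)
The paper does not prove this lemma at all: it is imported verbatim from \cite{cygan2015parameterized} (Theorem 8.11), after the preliminaries observe that $s$-$t$-latest cuts are exactly Marx's important cuts \cite{marx2006parameterized}. So your fallback option --- checking that latest cuts coincide with important cuts and citing the known bound --- is precisely the paper's route and would be perfectly acceptable. Your primary, self-contained argument, however, has a genuine gap, and you have correctly located it yourself: the middle step is not a proof but a plan. Two concrete problems. First, the claim that every $s$-$t$-latest $k$-cut $M\neq M_0$ satisfies $M> M_0$ (where $M_0$ is the latest min-cut) does not follow from Corollary~\ref{lem:latestmincut} the way you state it; Lemma~\ref{lem:mincutinclusion} only orders \emph{min}-cuts below $M_0$, and to push an arbitrary latest cut above $M_0$ you need a submodular exchange argument on $T_M\cap T_{M_0}$ and $T_M\cup T_{M_0}$, exactly in the style of the $M^+$/$M^-$ step in the proof of Lemma~\ref{lem:arcreplacement}. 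This is true and fillable, but it is not free.

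Second, and more seriously, the proposed accounting --- ``partition the latest $k$-cuts according to the first replaced arc $a\in M_0$ and recurse, then sum via the Catalan convolution'' --- is not a partition as stated. A latest cut can lie above the arc replacements of $M_0$ for several different arcs simultaneously (in Figure~\ref{fig:arc_replacement_examples} (left), $M_4$ is the arc replacement of $M_2$ and $a_1$ \emph{and} of $M_3$ and $a_2$, and is later than both), so the subfamilies indexed by $a$ overlap and their counts do not simply add; moreover the ``residual structure'' in which the recursion is supposed to count latest $(k-j_a)$-cuts is never defined, and since the converse of Lemma~\ref{lem:arcreplacement} fails, membership in a given branch is not determined by a single replacement step. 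The standard argument (the one behind Theorem 8.11 of \cite{cygan2015parameterized}) instead fixes \emph{one} arc $(u,v)$ of $M_0$ and branches on whether the important cut contains it, showing the measure $2k-\lambda$ (with $\lambda$ the current min-cut value) strictly drops in both branches; the $4^k$ bound, and with a sharper two-variable analysis the Catalan bound $C_{k-1}$, come out of that branching tree. Without that disjoint two-way branching (or an explicit laminarity argument replacing it), your induction does not go through, so as written the proposal establishes only the easy reduction $\sum_{i\le k}C_{i-1}<4^k$ conditioned on the exact-size bound it has not proved.
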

Note that the bound of $C_{k-1}$ is tight: consider a full binary tree with arcs directed away from the root $s$, and an extra vertex $t$ with incoming arcs from every leaf of the tree. If the tree is large enough, any $s$-$t$ $k$-cut is latest and corresponds to binary subtree with $k$ leaves. We refer to Figure~\ref{fig:arc_replacement_proofs} (right) for an example.

\begin{definition}[Arc split \cite{icalp2017GGIPU}]
\label{def:arc-split}
For $G=(V,A)$, let $A_1, A_2$ be a partition of its arc set $A$, $A=A_1\cup A_2$.
We say that a partition is an \emph{arc split} if there is no triplet of vertices $x,y,z$ in $G$ such that $(x,y) \in A_2$ and $(y,z) \in A_1$.
\end{definition}
Informally speaking, under such a split, any path in $G$ from a vertex $u$ to a vertex $v$
consists of a sequence of arcs from $A_1$ followed by a sequence of arcs from $A_2$ (as a special case, any of those sequences can be empty). In acyclic graph arc split is easily obtained from a topological order, e.g. by partitioning $V$ into prefix of order $V_1$ and suffix $V_2$, and assigning $A \cap (V_1 \times V_1)$ to $A_1$, $A \cap (V_2 \times V_2)$ to $A_2$ and arcs from $A \cap (V_1 \times V_2)$ arbitrarily.

\begin{figure}
\begin{minipage}{.4\textwidth}
\centering\includegraphics[scale=0.6]{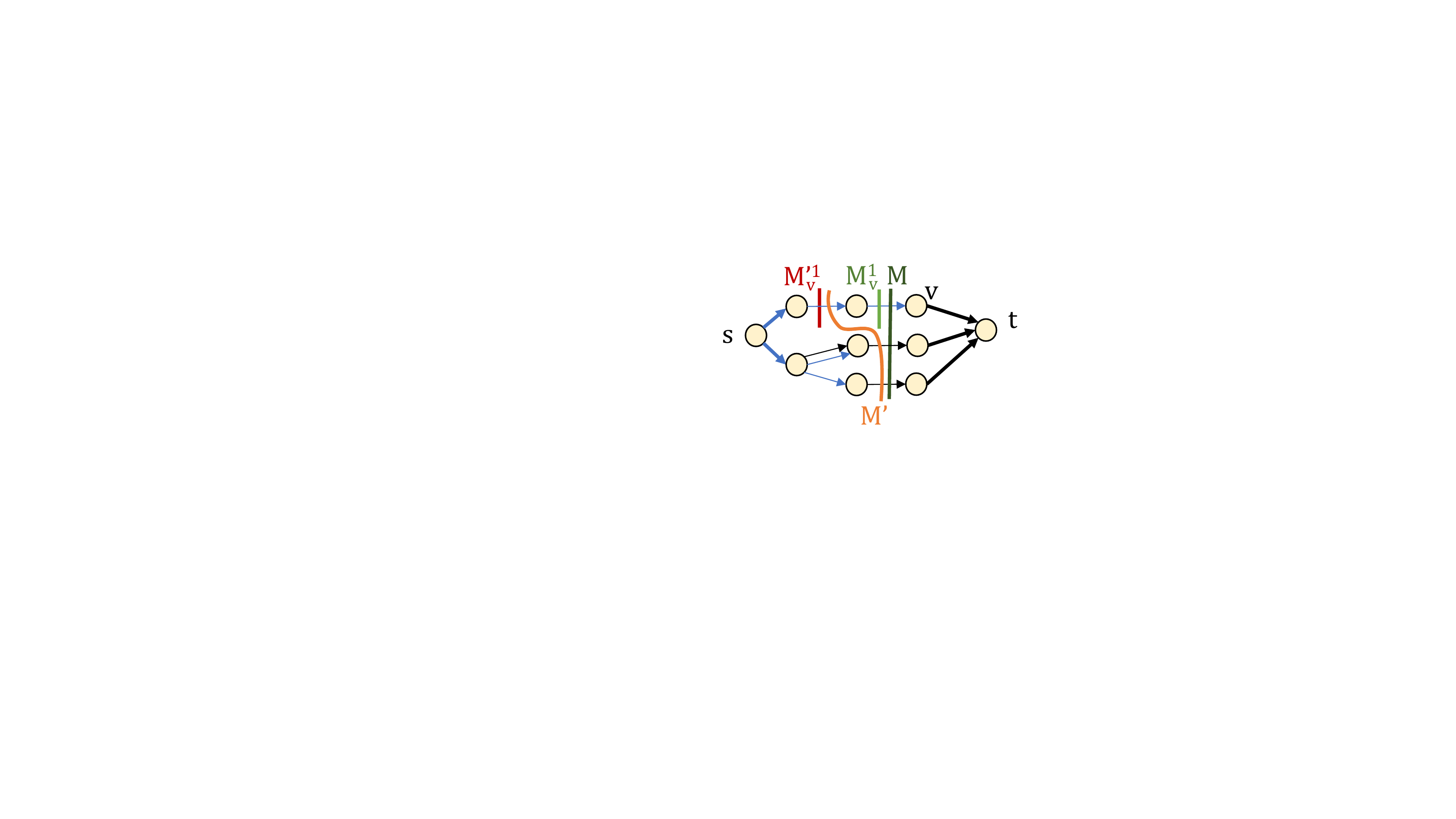}
\end{minipage}
\begin{minipage}{.55\textwidth}
\centering\includegraphics[scale=0.6]{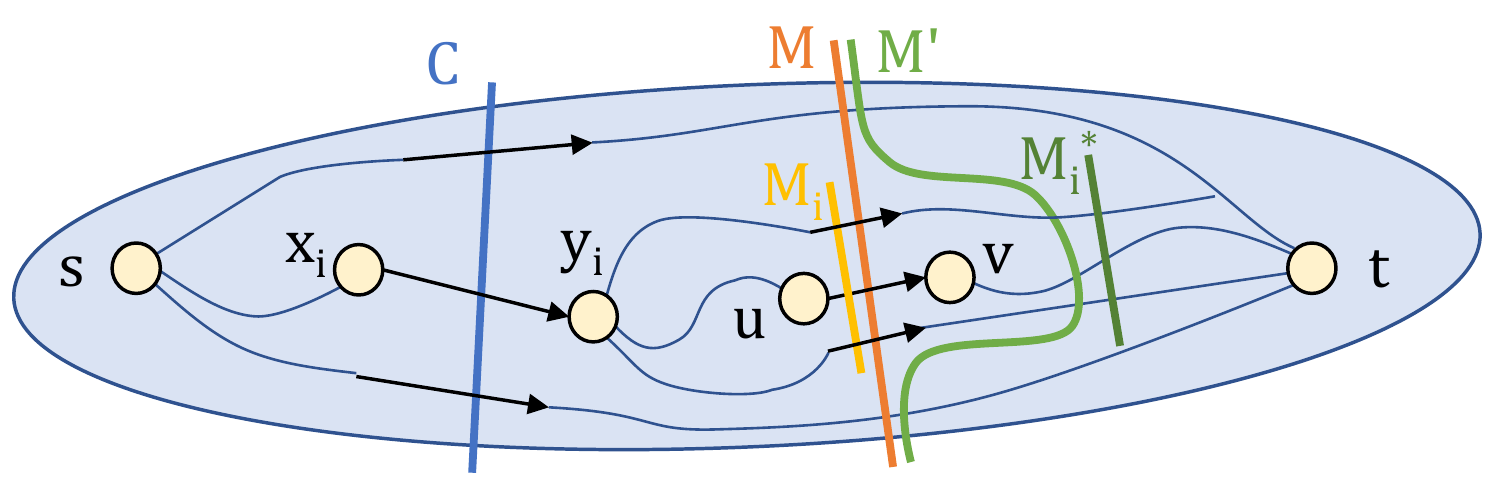}
\end{minipage}
\caption{(left) Illustration for the argument in Theorem~\ref{th:arcsplitcuts1}.
For the arc split $A_1$, $A_2$, the arcs in $A_1$ are shown in blue, the arcs in $A_2$ in black.
Bold arcs represent parallel arcs that cannot be cut.
$M$ is an $s$-$t$ min-cut, but it does not satisfy Property~\ref{property1} for vertex $v$.
Replacing $M_v^1$ by ${M'}_v^1$ gives $M'$ which does satisfy Property~\ref{property1}.
(right) Illustration for the argument in Theorem~\ref{th:arcsplitcuts2}.
$C$ is any cut that is not later than any cut in $\mathcal{F}_{s,t}$.
For any $s$-$t$-latest cut $M$ without Property~\ref{property2}, we can find a subset $M_i$, that is a $y_i$-$t$ cut and that can be replaced by some later $y_i$-$t$ cut $M_i^*$. This gives us a later $s$-$t$ cut $M'$ of equal or smaller size, contradicting $M$ being $s$-$t$ latest.
}
\label{fig:property_proofs}
\end{figure}

\begin{property}[Split-covering sets]
We say that a set $M \subseteq A$ is \emph{split-covering} with respect to arc split $A_1,A_2$ and vertices $s$,$t$ iff for any $v \in V$, there exists $M_v \subseteq M$ such that at least one of the following conditions holds
\begin{itemize}
\item $s\not=v\not=t$ and $v$ is either unreachable from $s$ in $(V,A_1)$ or $v$ does not reach $t$ in $(V,A_2)$. 
\item $M_v$ is an $s$-$v$-earliest cut in $(V,A_1)$,
\item $M_v$ is a $v$-$t$-latest cut in $(V,A_2)$.
\end{itemize}
\label{property1}
\end{property}

\begin{theorem}
\label{th:arcsplitcuts1}
Fix an arbitrary arc split $A_1,A_2$ and vertices $s$,$t$.
Any split-covering set (w.r.t. $A_1,A_2$ and $s$,$t$) is an $s$-$t$-cut in $G$. Moreover, there exists an $s$-$t$ min-cut that is split-covering (w.r.t. $A_1,A_2$ and $s$,$t$).
\end{theorem}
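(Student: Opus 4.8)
The plan is to prove the two assertions separately, in the order they are stated.

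\textbf{Part 1: every split-covering set is an $s$-$t$-cut.} Let $M$ be split-covering with respect to $A_1, A_2$ and $s,t$. I would argue by contradiction: suppose $\reach{s}{t}$ in $G \setminus M$, and fix a witnessing path $\pi = (s = w_0, w_1, \dots, w_\ell = t)$. Because $A_1, A_2$ is an arc split, $\pi$ decomposes as an $A_1$-prefix followed by an $A_2$-suffix; let $v = w_j$ be the vertex where the switch happens (taking $v=s$ if $\pi$ uses only $A_2$-arcs and $v = t$ if it uses only $A_1$-arcs). Then the prefix $s \leadsto v$ lives entirely in $(V, A_1) \setminus M$ and the suffix $v \leadsto t$ lives entirely in $(V, A_2) \setminus M$; in particular $v$ is reachable from $s$ in $(V,A_1)$ and reaches $t$ in $(V,A_2)$, so the first bullet of Property~\ref{property1} fails for $v$, and hence $M$ must contain a subset $M_v$ that is either an $s$-$v$-earliest cut in $(V,A_1)$ or a $v$-$t$-latest cut in $(V,A_2)$. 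In the first case the $A_1$-prefix of $\pi$ is an $s$-$v$ path in $(V,A_1)\setminus M_v$, contradicting that $M_v$ is an $s$-$v$-cut in $(V,A_1)$; in the second case the $A_2$-suffix of $\pi$ is a $v$-$t$ path in $(V,A_2)\setminus M_v$, contradicting that $M_v$ is a $v$-$t$-cut in $(V,A_2)$. Either way we reach a contradiction, so $\nreach{s}{t}$ in $G\setminus M$, i.e. $M$ is an $s$-$t$-cut.

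\textbf{Part 2: some $s$-$t$ min-cut is split-covering.} Here I would start from \emph{any} $s$-$t$ min-cut and fix its violations of Property~\ref{property1} one vertex at a time, using the figure's hint: for a vertex $v$ that violates the property, the min-cut $M$ splits as $M = M_v^1 \cup M_v^2$ where $M_v^1 = M \cap A_1$ restricted to the part separating $s$ from $v$ in $(V,A_1)$ and $M_v^2$ is the rest, and I replace $M_v^1$ by the (unique, by Corollary~\ref{lem:latestmincut}) $s$-$v$-earliest cut of size $\le |M_v^1|$ in $(V,A_1)$ — and symmetrically handle the $A_2$ side by swapping in a $v$-$t$-latest cut. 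Concretely, if $v$ is reachable from $s$ in $(V,A_1)$ and reaches $t$ in $(V,A_2)$, then $M$ being an $s$-$t$-cut forces $M \cap A_1$ to contain an $s$-$v$-cut in $(V,A_1)$ \emph{or} $M \cap A_2$ to contain a $v$-$t$-cut in $(V,A_2)$ (otherwise one could glue an $s$-$v$ path in $(V,A_1)\setminus M$ to a $v$-$t$ path in $(V,A_2)\setminus M$); in the first subcase, let $M^1$ be the earliest $s$-$v$-cut in $(V,A_1)$ of minimum size, which has size at most that of the $s$-$v$-cut already sitting inside $M\cap A_1$, and form $M' = (M \setminus (M\cap A_1)) \cup M^1 \cup (\text{the part of } M\cap A_1 \text{ on the } v\text{-side})$ — I would argue $M'$ is still an $s$-$t$-cut (any $s$-$t$ path still gets cut, since its $A_1$-prefix either reaches $v$, where $M^1$ blocks it, or is already blocked elsewhere) and $|M'| \le |M|$, hence $M'$ is again a min-cut. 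The subtlety is to show this replacement makes \emph{monotone progress} — e.g. by a potential argument such as $\sum_v (\text{size of source side reachable past the }A_1\text{-part})$, or by processing vertices in topological/level order so that fixing $v$ never un-fixes an already-handled vertex — so that after finitely many replacements we obtain a min-cut that is split-covering at every $v$.

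\textbf{Main obstacle.} Part 1 is essentially bookkeeping with the arc-split decomposition. The real work is Part 2: arguing that the local replacements (i) preserve being an $s$-$t$-cut of the same value, and (ii) terminate — i.e. that repairing Property~\ref{property1} at one vertex does not destroy it at others. I expect the cleanest route is to pick a specific witness min-cut to begin with (for instance, for each $v$ reachable from $s$ in $(V,A_1)$, take $S$ to be the union over such $v$ of the source sides of the earliest $s$-$v$-cuts in $(V,A_1)$, together with the source side of the appropriate cut in $(V, A_2 \cup \text{cross arcs})$, and let $M$ be the set of arcs leaving that vertex set), so that the property holds by construction and no iterative repair is needed; the verification that this canonical cut is (a) a cut, (b) of minimum size, and (c) split-covering is then the heart of the argument, and showing minimality is the delicate point because one must relate the size of the glued cut back to the size of an arbitrary min-cut via a submodularity/uncrossing inequality as in the proof of Lemma~\ref{lem:arcreplacement}.
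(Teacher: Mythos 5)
Your Part~1 is correct and is essentially the paper's own argument: you locate the switch vertex of a surviving path (the paper takes the last vertex of the path reached via $A_1$-arcs, which is the same vertex), note that the first bullet of Property~\ref{property1} cannot apply to it, and get a contradiction from the covered earliest/latest cut that the prefix or suffix of the path would have to avoid.

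Part~2, however, has a genuine gap at exactly the step you yourself flag as ``the real work'': you never establish that the local replacement preserves being an $s$-$t$ cut, and the one-line justification you offer (``its $A_1$-prefix either reaches $v$, where $M^1$ blocks it, or is already blocked elsewhere'') fails in the only interesting case — an $s$-$t$ path that was blocked \emph{only} by arcs you removed and whose $A_1$-prefix never passes through $v$; nothing in your sketch blocks such a path. The paper closes this case, and the choice of what to swap matters: it takes $M_v^1$ to be a \emph{minimal} subset of $M$ that is an $s$-$v$ cut in $(V,A_1)$ (not all of $M\cap A_1$ minus a vaguely specified ``$v$-side part,'' as in your formulation), replaces it by an earlier $s$-$v$ cut ${M'}_v^1$ with $\lvert {M'}_v^1\rvert \le \lvert M_v^1\rvert$, and argues as follows: any $s$-$t$ path $P$ avoiding $M'=(M\setminus M_v^1)\cup {M'}_v^1$ must use some arc $a\in M_v^1$; by the arc-split property the prefix $P'$ of $P$ up to $a$ lies entirely in $A_1$; by \emph{minimality} of $M_v^1$ there is a continuation $Q$ from $a$ to $v$ inside $A_1$, and since ${M'}_v^1$ is earlier than $M_v^1$, $Q$ can be chosen to avoid ${M'}_v^1$; concatenating $P'$ and $Q$ gives an $s$-$v$ path in $(V,A_1)$ avoiding ${M'}_v^1$, contradicting that ${M'}_v^1$ is an $s$-$v$ cut. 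The minimality of $M_v^1$ is precisely what guarantees the $a$-to-$v$ continuation, and it is absent from your version of the swap. As for termination, the paper is no more detailed than you are — it iterates the replacement and asserts that finitely many repetitions suffice — so that concern is not a gap relative to the paper's own standard; your fallback plan (a canonical cut built from unions of source sides, verified via submodularity/uncrossing) is a genuinely different route, but it is left entirely unverified and is not needed once the replacement step is argued as above.
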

\begin{proof}
To argue that any set $M$ that is split-covering is an $s$-$t$-cut in $G$, consider any $s$-$t$ path $P$ in $G$.
We only consider vertices $v$ that are both reachable from $s$ in $(V,A_1)$ (or $v=s$) and reach $t$ in $(V,A_2)$ (or $v=t$).
Let $v$ be the last vertex in $P$ that is reached using arcs in $A_1$ ($v=s$ if $P$ has no edges from $A_1$).
Now the set $M_v \subseteq M$ is either an $s$-$v$-earliest cut (which implies $s \neq v$) and then $M_v$ intersects $P$ before $v$ or $M_v$ is a $v$-$t$-latest cut (which implies $v \neq t$) and then $M_v$ intersects $P$ after $v$.

To see that there is an $s$-$t$ min-cut that is split-covering, take any $s$-$t$ min-cut $M$.
Intuitively speaking, we now argue that we can incrementally push its arcs in $A_1$ towards $s$ and its arcs in $A_2$ towards $t$ until all its sub-cuts $M_v$ become earliest/latest. Throughout these changes $M$ is always an $s$-$t$ cut and never increases in size. We refer to Figure~\ref{fig:property_proofs} (left) for an illustration.

For any $v \in V$, we define $M_v^1$ as a minimum subset of $M$ such that $M_v^1$ is an $s$-$v$ cut in $(V, A_1)$ and $M_v^2$ as a minimum subset of $M$ such that $M_v^2$ is a $v$-$t$ cut in $(V, A_2)$.
Note that while only one of the two sets might exist, at least one does, otherwise $M$ would not be an $s$-$t$ cut.
Assume w.l.o.g. that $M_v^1$ exists. If $M_v^1$ is an earliest $s$-$v$ cut in $(V, A_1)$, vertex $v$ already satisfies Property~\ref{property1} and no change is required for $v$. Otherwise there is another cut ${M'}_v^1$ that is earlier than $M_v^1$ and satisfies $\lvert{M'}_v^1\rvert \leq \lvert{M}_v^1\rvert$. Now we define a new set $M' = (M \setminus M_v^1) \cup {M'}_v^1$ which satisfies $\lvert M' \rvert \leq \lvert M \rvert$.

We claim that $M'$ is still an $s$-$t$ min-cut in $G$.
For the sake of reaching a contradiction, assume otherwise. Any $s$-$t$ path $P$ that avoids $M'$ must use an arc $a$ in ${M}_v^1$. Let $P'$ denote the prefix of $P$ from $s$ to $a$. As $a \in A_1$, we have $P' \subseteq A_1$.
By definition of ${M}_v^1$, there is a path $Q$ from $a$ to $v$ in $A_1$ (otherwise $a$ would not need to be in ${M}_v^1$).
As ${M'}_v^1 \leq M_v^1$, we can pick $Q$ such that it avoids ${M'}_v^1$.
Thus concatenating $P'$ and $Q$ gives an $s$-$v$ path in $A_1 \setminus {M'}_v^1$, a contradiction.
Hence, $M'$ is also an $s$-$t$ cut and as $\lvert M' \rvert \leq \lvert M \rvert$ it is also an $s$-$t$ min-cut.
Applying this argument repeatedly for all $v$ without an earliest/latest $M_v$, will end with an $s$-$t$ min-cut that satisfies Property~\ref{property1} after finitely many repetitions.
\end{proof}

\begin{property}[Late-covering sets]
Let $C = \{(x_1,y_1),\ldots,(x_j,y_j)\}$ be any $s$-$t$ cut such that $M' \geq C$ for all $M' \in \mathcal{F}_{s,t}$.
We say that a set $M \subseteq A$ is \emph{late-covering} w.r.t. $C$ if for each $1 \le i \le j$, at least one of the following conditions holds
\begin{itemize}
\item $(x_i,y_i) \in M$,
\item there is $M_i \subseteq M$ such that $M_i$ is a $y_i$-$t$-latest cut.
\end{itemize}
\label{property2}
\end{property}

In order to show that there always exists a set $C$ satisfying Property \ref{property2} consider the set $C=\{(s,v): v \textit{ reaches } t \textit{ in } G\}$ which is an $s$-$t$ cut where for each $s$-$t$ latest cut $M$ it holds $M\geq C$.

\begin{theorem}
\label{th:arcsplitcuts2} Fix $s$-$t$ cut $C$ as in Property~\ref{property2}.
Any late-covering set w.r.t. $C$ is an $s$-$t$-cut. Every $s$-$t$-latest cut is late covering  w.r.t. $C$.
\end{theorem}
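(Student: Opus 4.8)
The plan is to prove the two assertions separately. For the first one, that every late-covering set is an $s$-$t$-cut, I would reuse the path-chasing argument from the first half of Theorem~\ref{th:arcsplitcuts1}: take any $s$-$t$-path $P$; since $C$ is an $s$-$t$-cut, $P$ traverses some arc $(x_i,y_i)\in C$, so the node $y_i$ lies on $P$. If $(x_i,y_i)\in M$ then $M$ already meets $P$; otherwise, being late-covering, $M$ contains a subset $M_i$ that is a $y_i$-$t$-latest cut, in particular a $y_i$-$t$-cut, so $M_i\subseteq M$ meets the portion of $P$ running from $y_i$ to $t$. Either way $M$ meets every $s$-$t$-path, hence $M$ is an $s$-$t$-cut. (The degenerate case $y_i=t$ is immediate, since $(x_i,t)$ is then an arc entering $t$ and must lie in every $s$-$t$-cut, hence in $M$.)

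For the second assertion I would argue by contradiction, following the template of Lemma~\ref{lem:arcreplacement} and the picture in Figure~\ref{fig:property_proofs}(right). Suppose some $M\in\mathcal{F}_{s,t}$ is not late-covering, and fix an index $i$ witnessing this, so $(x_i,y_i)\notin M$ and no subset of $M$ is a $y_i$-$t$-latest cut. The first step is to show that $M$ nonetheless contains a $y_i$-$t$-cut as a subset: since $M\in\mathcal{F}_{s,t}$ we have $M\ge C$, i.e.\ $T_M\subseteq T_C$; if $y_i$ could still reach $t$ in $G\setminus M$, then because $(x_i,y_i)\notin M$ the tail $x_i$ would reach $t$ in $G\setminus M$ as well, so $x_i\in T_M\subseteq T_C$, contradicting the fact that $x_i$ lies on the source side of $C$. (We use here that $C$ may be taken inclusion-minimal, as is the canonical choice $C=\{(s,v):v\leadsto t\}$ noted before the theorem; for a minimal cut $S_C$ and $T_C$ are disjoint, and the tail of every cut arc lies in $S_C$.) Hence $y_i\not\leadsto t$ in $G\setminus M$, so we may pick $M_i\subseteq M$ to be an inclusion-minimal $y_i$-$t$-cut that is latest, in the order `$\ge$', among all $y_i$-$t$-cuts contained in $M$. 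By assumption $M_i$ is not $y_i$-$t$-latest, so there is a $y_i$-$t$-cut $M_i^*$ with $M_i^*>M_i$ and $|M_i^*|\le|M_i|$; take it inclusion-minimal as well.

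The final step — which I expect to be the main obstacle — is to show that $M':=(M\setminus M_i)\cup M_i^*$ is an $s$-$t$-cut with $|M'|\le|M|$ that is strictly later than $M$, contradicting $M\in\mathcal{F}_{s,t}$. The size bound is immediate from $|M_i^*|\le|M_i|$. That $M'$ is an $s$-$t$-cut follows from the same kind of exchange argument as in Theorem~\ref{th:arcsplitcuts1}: an $s$-$t$-path $P$ avoiding $M'$ avoids $M_i^*$, hence (as $M_i^*$ is a $y_i$-$t$-cut) misses $y_i$; since $P$ also avoids $M\setminus M_i$ it must traverse a first arc $(u,w)\in M_i$, its prefix avoids $M$, minimality of $M_i$ gives that $u$ is reachable from $y_i$ in $G\setminus M_i$ and hence in $G\setminus M_i^*$ (a later minimal cut has a larger source side), and concatenating a $y_i$-$u$ walk avoiding $M_i^*$ with $(u,w)$ and the suffix of $P$ yields a $y_i$-$t$ walk avoiding $M_i^*$, a contradiction. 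The genuinely delicate point is verifying $M'\ge M$ with the right strictness, so that $M'$ really contradicts the latest-ness of $M$: for this I would exploit that $M_i$ was chosen latest among the $y_i$-$t$-cuts inside $M$ (so swapping it for the strictly later $M_i^*$ cannot re-enable any $t$-reachability that $M$ had blocked), and where $T_{M'}$ and $T_M$ turn out incomparable I would fall back on submodularity of the unit-capacity cut function, exactly in the style of the $M^+/M^-$ construction of Lemma~\ref{lem:arcreplacement}; the one place needing real care is the size bookkeeping when $|M_i^*|=|M_i|$. Finally, the symmetric statement for $s$-$t$-earliest cuts follows by reversing all arcs, i.e.\ interchanging ``source side'' with ``target side'' and ``earliest'' with ``latest'' throughout.
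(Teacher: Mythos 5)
Your first assertion is proved exactly as in the paper, and the skeleton of your second part is also the paper's: find an index $i$ with $(x_i,y_i)\notin M$, extract a minimal $y_i$-$t$ cut $M_i\subseteq M$ that is not $y_i$-$t$-latest, replace it by a later, not larger $M_i^*$, and contradict $M$ being $s$-$t$-latest via $M'=(M\setminus M_i)\cup M_i^*$. Your derivation that $M$ must contain a $y_i$-$t$ cut (via $T_M\subseteq T_C$ and the position of $x_i$ relative to $C$) is a legitimate variant of the paper's step ``$x_i\in S_M$''; the paper is equally implicit about the tails of $C$ lying on its source side, so your minimality caveat on $C$ is not the issue. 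The genuine gap is precisely the step you defer: showing that $M'$ is \emph{later} than $M$, with strictness. Neither of the two strategies you gesture at is in shape to close it. The uncrossing fallback ``in the style of Lemma~\ref{lem:arcreplacement}'' does not transfer: in that lemma the bookkeeping worked because all four cuts could be compared in size against the latest \emph{min}-cut $M^*$ of the contracted graph, whereas here you have no lower bound on $|M^-|$, which is exactly the ``size bookkeeping when $|M_i^*|=|M_i|$'' you concede you cannot yet do. And your first idea --- choosing $M_i$ latest among the $y_i$-$t$ cuts contained in $M$, ``so swapping it cannot re-enable any $t$-reachability'' --- is a restatement of the desired conclusion $T_{M'}\subseteq T_M$, not an argument for it.

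The paper closes this step with a short direct argument that needs neither the special choice of $M_i$ nor uncrossing: the same prefix/suffix concatenation you already used to show that $M'$ is an $s$-$t$ cut also yields $T_{M'}\subseteq T_M$. Indeed, if some $v$ reaches $t$ in $G\setminus M'$ but not in $G\setminus M$, its $v$-$t$ path must cross $M$ in an arc $a\in M_i\setminus M_i^*$; prepending a path from $y_i$ to the tail of $a$ that avoids $M_i^*$ (it exists because $a\in M_i$, $M_i$ is minimal, and $M_i^*>M_i$) and appending the suffix of the path after $a$ produces a $y_i$-$t$ path in $G\setminus M_i^*$, contradicting that $M_i^*$ is a $y_i$-$t$ cut. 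Strictness then comes essentially for free: pick any arc $(u,v)\in M_i\setminus M_i^*$ (nonempty, since $M_i^*>M_i$ and $|M_i^*|\le|M_i|$); by minimality of $M$ its head $v$ lies in $T_M$, while $v\notin T_{M'}$ by the same concatenation, so $T_{M'}\subsetneq T_M$ and $M'>M$ with $|M'|\le|M|$. Until you supply an argument of this kind, the second half of the theorem remains unproved in your write-up.
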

\begin{proof}
To argue that any late-covering set $M$ is an $s$-$t$-cut, consider any $s$-$t$ path $P$ in $G$.
As $C$ is an $s$-$t$ cut, $P$ must use some arc $(x_i, y_i)$ in $C$.
Since $M$ is late-covering, either $(x_i, y_i)$ or some $y_i$-$t$ cut is part of $M$, hence $M$ intersects $P$ either at $(x_i, y_i)$ or in some later arc.
Therefore, $P$ is not in $G \setminus M$, and since $P$ was arbitrary, $M$ is an $s$-$t$-cut in $G$.

To see that any $s$-$t$-latest cut is late-covering,
assume, for the sake of reaching a contradiction, that $M$ is an $s$-$t$-latest cut that is not late-covering.
As $M \neq C$ ($C$ is trivially late-covering) and $M$ is $s$-$t$-latest, we have $C < M$.
Hence $S_M \supseteq S_{C}$ and so $x_i \in S_M$ for all $i$.
For any $(x_i, y_i) \notin M$, $M$ must contain some $y_i$-$t$ cut $M_i$ (to ensure that $M$ is an $s$-$t$ cut).
As $M$ is not late-covering, there exists $i$ such that $M_i$ can not be chosen to be $y_i$-$t$-latest.
So for this $M_i$, there is another $y_i$-$t$ cut $M_i^*$ of size $|M_i^*| \leq |M_i|$ that is later than $M_i$.
We refer to Figure~\ref{fig:property_proofs} (right) for an illustration.

We now argue that $M' = (M \setminus M_i) \cup M_i^*$ is an $s$-$t$ cut that is later and not larger than $M$, which contradicts $M$ being $s$-$t$-latest.
Clearly, $|M'| \leq |M|$ as $|M_i^*| \leq |M_i|$.

We first argue that $M'$ is an $s$-$t$ cut. Take any $s$-$t$ path $P$ that avoids $M'$. $P$ must use an arc $a$ in $M_i$.
Let $P'$ denote the suffix of $P$ from $a$ to $t$. As $P'$ avoids $M'$, $P'$ is in $G \setminus M_i^*$.
As $M_i^* > M_i$, there is a $y_i$-$a$ path $Q$ in $G \setminus M_i^*$.
Concatenating $Q$ and $P'$ gives a $y_i$-$t$ path in $G \setminus M_i^*$, hence $P$ cannot exist.

It remains to argue that $M'$ is later than $M$, so $\lbar{T_{M'}} \subset \lbar{T_M}$.
Let us first argue that $\lbar{T_{M'}} \subseteq \lbar{T_M}$ by considering any $v \in \lbar{T_{M'}} \setminus \lbar{T_M}$ and reach a contradiction. Any $v$-$t$ path $P$ in $G \setminus M'$ must contain an arc $a \in M$ (as $v \notin \lbar{T_M}$) and we have $a \in M_i \setminus M_i^*$ (as $M_i \setminus M_i^*$ is where $M$ and $M'$ differ).
Let $P'$ be the suffix of $P$ from $a$ to $t$ and let $Q$ be any $y_i$-$a$ path in $G\setminus M_i^*$ (exists as $a \in M_i$ and $M_i^* > M_i$).
The concatenation of $Q$ and $P'$ would form a $y_i$-$t$ path in $G \setminus M_i^*$, a contradication.
Finally, $\lbar{T_{M'}} \neq \lbar{T_M}$, as for any $(u,v) \in M_i \setminus M_i^*$, we have $v \in \lbar{T_M}$ (otherwise $(u,v)$ would make $M$ non-minimal) and $v \notin \lbar{T_{M'}}$ (otherwise there would be a $y_i$-$t$ path in $G \setminus M_i^*$).
Hence $M' > M$ and $|M'| \leq |M|$, so $M$ is not $s$-$t$-latest.
\end{proof}

\begin{corollary}
\label{cor:atmostk}
In Theorem~\ref{th:arcsplitcuts1}, if $s$-$t$ min-cuts are of size at most $k$, then there is an $s$-$t$ min-cut that is split-covering in a way that every $M_v$ is either $s$-$v$-earliest $\leq$$k$-cut or $v$-$t$-latest $\leq$$k$-cut.
In Theorem~\ref{th:arcsplitcuts2}, every $s$-$t$-latest $\le k$-cut  is late covering in a way that every $M_i$ is a $y_i$-$t$-latest $\le k$-cut.
\end{corollary}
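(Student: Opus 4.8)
The plan is to observe that Corollary~\ref{cor:atmostk} is a direct specialization of Theorems~\ref{th:arcsplitcuts1} and~\ref{th:arcsplitcuts2}: the only extra content is a bound on the size of each sub-cut, and that bound is immediate because every sub-cut produced in the proofs of those theorems is a \emph{subset} of a cut whose size is already at most $k$.

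For the first statement I would apply Theorem~\ref{th:arcsplitcuts1} to obtain an $s$-$t$ min-cut $M$ that is split-covering with respect to $A_1,A_2$ and $s,t$. By Property~\ref{property1}, for each $v$ either $v$ falls into the trivial unreachable/non-reaching case, or there is $M_v\subseteq M$ that is an $s$-$v$-earliest cut in $(V,A_1)$ or a $v$-$t$-latest cut in $(V,A_2)$. Since $M$ is a min-cut, the hypothesis that all $s$-$t$ min-cuts have size at most $k$ gives $|M|\le k$, hence $|M_v|\le |M|\le k$, so each $M_v$ is an $s$-$v$-earliest $\le k$-cut or a $v$-$t$-latest $\le k$-cut.

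For the second statement I would fix an $s$-$t$-latest $\le k$-cut $M$ and a cut $C$ as in Property~\ref{property2}; by Theorem~\ref{th:arcsplitcuts2}, $M$ is late-covering with respect to $C$, so for each $i$ either $(x_i,y_i)\in M$ or some $M_i\subseteq M$ is a $y_i$-$t$-latest cut. As before $|M_i|\le |M|\le k$, so each such $M_i$ is a $y_i$-$t$-latest $\le k$-cut.

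I do not expect any real obstacle; the argument is essentially bookkeeping on top of the two theorems. The one point deserving a sentence of care is minimality of the extracted sub-cuts, needed so that ``$\le k$-cut'' is read in the sense of the families $\mathcal{E}^{\le k}$ and $\mathcal{F}^{\le k}$ (families of minimal cuts): this is inherited from the constructions inside Theorems~\ref{th:arcsplitcuts1} and~\ref{th:arcsplitcuts2}, which only ever extract inclusion-minimal cuts in the relevant subgraphs (the minimum subsets $M_v^1$ and $M_v^2$ there, and their earliest/latest replacements, which can be taken inclusion-minimal of no larger size).
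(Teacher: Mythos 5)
Your proposal is correct and matches the paper's (implicit) reasoning: the paper states Corollary~\ref{cor:atmostk} without proof precisely because, as you argue, each sub-cut $M_v$ (resp.\ $M_i$) is a subset of the covering cut $M$ of size at most $k$, so the $\le k$ bound is automatic, and minimality is already built into the definition of earliest/latest cuts used in Theorems~\ref{th:arcsplitcuts1} and~\ref{th:arcsplitcuts2}. Your extra sentence of care about minimality is fine but not needed beyond that observation.
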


\section{Deterministic Algorithms with Witnesses for DAGs}
\label{sec:kapmc}

\subsection{\texorpdfstring{$k$-Bounded All-Pairs Min-Cut for $k=o(\sqrt{\log n}\,)$}{$k$-Bounded All-Pairs Min-Cut for k = o(sqrt(log n))}}
\label{sec:kapmc-dynamic-program}
We now develop a first algorithm for computing small cuts, which considers the vertices of the DAG one by one.
A first step is to consider a problem highlighted in Theorem~\ref{th:arcsplitcuts1}, namely how to pick a set of arcs that covers at least one earliest or one latest cut for every vertex. 
We formalize this problem, independently of graphs and cuts, as follows:
\begin{problem}[\witness]
\label{problem:sfc}
Given a collection of $c$ set families $\mathcal{F}_1,\mathcal{F}_2,\ldots,\mathcal{F}_c$, where each $\mathcal{F}_i$ is of size at most $K$, and each member of $\mathcal{F}_i$ is a subset of size at most $k$ of some universe~$U$, find all sets $W \subseteq U$ such that:
\begin{itemize}
\item \emph{[cover]} \label{en:prop1} for all $i$, there is $W_i \subseteq W$ such that $W_i \in \mathcal{F}_i$,
\item \emph{[size]} \label{en:prop2} $|W| \le k$,
\item \emph{[minimal]} \label{en:prop3} no proper subset of $W$ satisfies the [cover] condition.
\end{itemize}
\end{problem}

A naive solution to the \witness problem is to iterate through all $K^c$ possible unions of sets, one from each $\mathcal{F}_i$. However, using pruning as soon as the [size] constraint is violated, we can achieve a linear dependency on $c$ while keeping the exponential dependency on $k$.

\begin{algorithm}[t]
\caption{Solving Witness Superset by recursion with pruning}
\label{alg:pruning}
\SetKwFunction{KwRec}{SingleFamilyWitness}
\Def{$\KwRec( i, S)$}
{
	\uIf{$i = c+1$}
	{
		\uIf{no proper subset of $S$ satisfies the [cover] condition}
		{
			output $S$\;
		}	
		\Return\;
	}
	\uIf{$\exists F \in \mathcal{F}_i : F \subseteq S$}
	{
		$\KwRec(i+1,S)$\;
	}
	\uElse
	{
		\For{$F \in \mathcal{F}_i$}
		{
			\uIf{$|S \cup F| \le k$}
			{
				$\KwRec( i+1,S \cup F)$\;
			}
		}
	}
}\;
\Def{$\KwSet(\mathcal{F} = \{\mathcal{F}_1,\mathcal{F}_2,\ldots,\mathcal{F}_c\})$}
{
	
	$\KwRec(1,\emptyset)$\;
}
\end{algorithm}

\begin{lemma}
\label{lem:algoruntime}
Algorithm~\ref{alg:pruning} solves the \witness problem in time $\bigO(K^{k+1} \cdot c \cdot \textrm{poly}(k))$ and outputs a list of $\bigO(K^k)$ sets.
\end{lemma}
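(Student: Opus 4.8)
The plan is to analyze Algorithm~\ref{alg:pruning} directly by bounding the recursion tree. First I would establish correctness: the recursive call \texttt{SingleFamilyWitness}$(i,S)$ maintains the invariant that $S$ is a union of at most one set from each of $\mathcal{F}_1,\dots,\mathcal{F}_{i-1}$, so that at a leaf $i=c+1$ the set $S$ satisfies the [cover] condition for all families, and $|S|\le k$ by the pruning check at every step. The final minimality test at the leaf filters out any $S$ that has a proper subset still satisfying [cover], so the output is exactly the family of sets $W$ required by the \witness problem. Completeness — that \emph{every} valid output $W$ is produced — follows because for each $i$ there is some $W_i\in\mathcal{F}_i$ with $W_i\subseteq W$; the branch that (either skips $\mathcal{F}_i$ because some $F\subseteq S$ already, or) picks this particular $W_i$ keeps $S\subseteq W$, hence $|S|\le|W|\le k$, so that branch is never pruned and reaches the leaf with $S\subseteq W$; repeating over all $i$ shows some leaf has $S\subseteq W$, and since $W$ is a valid (minimal) witness and $S$ covers every family, actually $S=W$.

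Next I would bound the size of the recursion tree. The key observation is that the set $S$ strictly grows (by at least one element) every time the algorithm takes the \texttt{\textbf{else}} branch and recurses through some $F\in\mathcal{F}_i$ with $|S\cup F|\le k$; since $|S|\le k$ throughout, this ``growing'' branching can happen at most $k$ times along any root-to-leaf path. Each such growing step has at most $K=|\mathcal{F}_i|$ children. The other kind of step (the \texttt{\textbf{if}} branch, where some $F\subseteq S$ already) does not branch at all — it has a single child — and merely advances $i$ by one. Hence along any path there are at most $k$ true branching points, each of out-degree $\le K$, so the recursion tree has at most $K^k$ leaves; counting the degree-one chains in between (each of length $\le c$) gives at most $O(K^k\cdot c)$ nodes in total.

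For the running time, at each of the $O(K^k\cdot c)$ nodes the work is: checking whether some $F\in\mathcal{F}_i$ satisfies $F\subseteq S$, and, in the branching case, forming $S\cup F$ and testing $|S\cup F|\le k$, for each of the $\le K$ sets $F$. Since $|S|\le k$ and $|F|\le k$, each subset/union/size test costs $\poly(k)$, so each node costs $O(K\cdot\poly(k))$, giving a total of $O(K^{k+1}\cdot c\cdot\poly(k))$. The only subtlety — and the step I expect to need the most care — is the minimality check at the leaves: naively re-examining all $2^{|S|}$ subsets would cost $2^k$ per leaf, which is fine within the stated bound, but one should make sure the per-leaf cost ($O(2^k\cdot K\cdot\poly(k))$ or, with a smarter incremental check, less) is absorbed into $O(K^{k+1}\cdot c\cdot\poly(k))$; since $2^k\le K^{k}$ whenever $K\ge 2$ (and the case $K\le 1$ is trivial) this is indeed dominated. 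Finally, the number of outputs is at most the number of leaves, i.e.\ $O(K^k)$, which establishes the second claim.
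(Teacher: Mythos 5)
Your overall strategy is the same as the paper's: the recursion-tree analysis (at most $k$ branching nodes per root-to-leaf path, each of out-degree at most $K$, hence at most $K^k$ leaves and $\bigO(K^k\cdot c)$ nodes), the $\bigO(K\cdot\poly(k))$ work per internal node, and your completeness argument (follow the branch that picks a $W_i\subseteq W$, which is never pruned since $S\subseteq W$) all match or slightly elaborate the paper's proof, and are correct.

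The one genuine flaw is your accounting for the minimality test at the leaves. Checking the [cover] condition for a single candidate subset is not a $\poly(k)$ or $\bigO(K\cdot\poly(k))$ operation: it requires scanning all $c$ families, i.e.\ $\bigO(K\cdot c\cdot\poly(k))$ per candidate. So your ``naive'' check of all $2^{|S|}$ proper subsets costs $\bigO(2^k\cdot K\cdot c\cdot\poly(k))$ per leaf, and summed over the $K^k$ leaves this is $\bigO(2^k\cdot K^{k+1}\cdot c\cdot\poly(k))$, which does \emph{not} fit the claimed bound; your justification ``$2^k\le K^k$, hence dominated'' is a non sequitur, since it only yields a bound of the form $K^{2k+1}\cdot c\cdot\poly(k)$, not $K^{k+1}\cdot c\cdot\poly(k)$. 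The missing observation (which is how the paper handles it) is that [cover] is monotone under taking supersets, so $S$ is minimal if and only if no set of the form $S\setminus\{s\}$, $s\in S$, satisfies [cover]; checking these at most $k$ subsets costs only $\bigO(K\cdot c\cdot\poly(k))$ per leaf, and $K^k\cdot\bigO(K\cdot c\cdot\poly(k))=\bigO(K^{k+1}\cdot c\cdot\poly(k))$ overall, as required. With that replacement your proof goes through and coincides with the paper's.
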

\begin{proof}
Whenever the function $\KwRec$ is called on input $i$, we can inductively argue that the candidate set $S$ is guaranteed to cover a set in each of the first $i-1$ families.
For the $i$-th family $\mathcal{F}_i$, $\KwRec$ adds new elements to $S$ only if necessary.
If $S$ already covers some $F \in \mathcal{F}_i$, $S$ remains unchanged.
Otherwise, we try all $F \in \mathcal{F}$ whose addition to $S$ do not break the [size] constraint separately by adding them to $S$ and evaluating recursively.

The correctness of Algorithm~\ref{alg:pruning} follows from the fact that this search skips only over those of the $K^c$ possible solutions that are either larger than $k$ or are not minimal, so those violating conditions [size] and [minimal] of the \witness problem.

To analyze the running time, we analyze the shape of the call tree $T$ of $\KwRec$.
Any root-to-leaf path in $T$ has length $c$ but only visits at most $k$ branching nodes, as each branching node increases the size of $S$ by at least one. Also, each branching node has out-degree at most $K$.
Hence $T$ is a tree of depth $c$ with at most $K^k$ leaves, which establishes our output size and implies $\lvert T \rvert \leq K^k \cdot c$.
The amount of work required to manipulate $S$ in each step is in $\bigO(\textrm{poly}(k))$.
Note that the final check before outputting $S$ can be done in $\bigO(K \cdot c \cdot \textrm{poly}(k))$ by just checking condition [cover] of the \witness problem for $S \setminus \{s\}$ for every $s \in S$ explicitly.
Combining $\lvert T \rvert \cdot \bigO(c \cdot \textrm{poly}(k))$ with $K^k \cdot \bigO(K \cdot c \cdot \textrm{poly}(k))$ gives our running time.
\end{proof}

\begin{theorem}
\label{th:dynprog}
All latest cuts of size at most $k$ for all pairs of vertices of a DAG can be found in $\bigO(2^{\bigO(k^2)}  \cdot m  n)$ total time.
\end{theorem}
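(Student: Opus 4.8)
The plan is to compute, vertex by vertex in reverse topological order, the set $\mathcal{F}_{v,t}^{\le k}$ of all $v$-$t$-latest $\le k$-cuts for every $t$ that comes after $v$ in a fixed topological order; processing in reverse topological order guarantees that when $v$ is handled, every family $\mathcal{F}_{w,t}^{\le k}$ with $w$ a successor of $v$ has already been produced. To process $v$, fix the arc split $A_1,A_2$ of the subgraph induced by $v$ together with its successors, where $A_1$ is the set of arcs leaving $v$ and $A_2$ the arcs among the successors of $v$ (this is a valid arc split, as every path out of $v$ uses one arc of $A_1$ and then stays inside $A_2$). The only nonempty $v$-$w$-earliest cut in $(V,A_1)$ is the bundle of parallel arcs from $v$ to $w$, so for a fixed $t$ I build the \witness instance with one family per out-neighbour $w$ of $v$ that reaches $t$: that family is $\{\text{bundle }v\!\to\! w\}$ (included only if the bundle has $\le k$ arcs) together with all the already-computed $w$-$t$-latest $\le k$-cuts. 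By Lemma~\ref{lem:catalan} each family has size $K\le 4^{k}+1$, each of its members has $\le k$ elements, and there are $c\le\text{outdeg}(v)$ families. Which out-neighbours of $v$ reach which $t$ is read off from an all-pairs reachability table precomputed in $\bigO(mn)$ time.

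Running Algorithm~\ref{alg:pruning} on this instance takes, by Lemma~\ref{lem:algoruntime}, time $\bigO(K^{k+1}\cdot c\cdot\poly(k))=\bigO(2^{\bigO(k^2)}\cdot\text{outdeg}(v))$ and returns $\bigO(K^k)=2^{\bigO(k^2)}$ candidate sets, each of size $\le k$. For correctness I invoke Section~\ref{sec:cuts-structure}. On one side, every candidate covers, for each relevant out-neighbour $w$, either the bundle (an earliest cut in $A_1$) or a $w$-$t$-latest cut in $A_2$, while all remaining vertices are either $v$ itself, or unreachable from $v$ in $(V,A_1)$, or fail to reach $t$ in $(V,A_2)$; hence every candidate is split-covering and, by Theorem~\ref{th:arcsplitcuts1}, is a $v$-$t$-cut of size $\le k$. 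On the other side, taking $C$ to be the set of arcs leaving $v$ that lead to a vertex reaching $t$ (a $v$-$t$ cut dominated by every $v$-$t$-latest cut), Theorem~\ref{th:arcsplitcuts2} with Corollary~\ref{cor:atmostk} says every $v$-$t$-latest $\le k$-cut is late-covering with each $M_i$ a $w$-$t$-latest $\le k$-cut, hence covers every family and therefore appears among the candidates. (Compatibility point: the families $\mathcal{F}_{w,t}^{\le k}$ were first computed inside a possibly smaller subgraph, but a minimal $w$-$t$-cut uses only arcs reachable from $w$ and the extra vertices add the same fixed set to every target side, so the latest $\le k$-cuts are unchanged.) Thus the candidate list contains $\mathcal{F}_{v,t}^{\le k}$ and only $v$-$t$-cuts of size $\le k$.

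The remaining, and I expect most delicate, step is the post-processing that trims the candidate list to exactly $\mathcal{F}_{v,t}^{\le k}$. If a candidate $M$ is not $v$-$t$-latest, there is a $v$-$t$-cut strictly later and not larger, and by the ``push towards $t$'' argument inside the proof of Theorem~\ref{th:arcsplitcuts2} there is even a \emph{latest} such cut of size $\le|M|\le k$, i.e.\ another candidate. So $M$ should be kept precisely when no candidate is strictly later than $M$ and not larger, and the filter reduces to selecting the maximal candidates under this domination order (after discarding any non-minimal ones). The crux is to do this within the budget $\bigO(2^{\bigO(k^2)}\cdot\text{outdeg}(v))$ per pair $(v,t)$, which rules out recomputing a target side by graph search (that would cost $\Omega(m)$ and inflate the total by a factor $n$). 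I would argue structurally: a candidate's arcs in $A_1$ do not affect reachability to $t$ among the successors of $v$, a candidate restricted to $A_2$ is a union of at most $\text{outdeg}(v)$ of the known $w$-$t$-latest cuts, and two candidates stand in the ``later'' relation exactly as their per-$w$ pieces do; so it suffices to also maintain, next to each family $\mathcal{F}_{w,t}^{\le k}$, the transitively reduced ``later'' order on its $2^{\bigO(k)}$ members, which this very sweep produces, and then each pairwise test costs only $\poly(k)\cdot\text{outdeg}(v)$.

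Summing up: the $\bigO(mn)$ reachability preprocessing is negligible; for each $v$ there are at most $n$ relevant $t$, each costing $\bigO(2^{\bigO(k^2)}\cdot\text{outdeg}(v))$ for the \witness call plus the filter, so $v$ is processed in $\bigO(2^{\bigO(k^2)}\cdot n\cdot\text{outdeg}(v))$, and $\sum_v\text{outdeg}(v)=m$ gives the claimed $\bigO(2^{\bigO(k^2)}\cdot mn)$ total (folding the $\poly(k)$ factors into $2^{\bigO(k^2)}$).
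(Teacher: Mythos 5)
Your proposal is correct and follows essentially the same route as the paper's proof: reverse topological sweep, the arc split isolating the arcs leaving $v$, a \witness instance per target whose families consist of the arc(s) to each out-neighbour $w$ together with the already-computed $w$-$t$-latest $\le k$-cuts, Theorems~\ref{th:arcsplitcuts1} and~\ref{th:arcsplitcuts2} for soundness and completeness, Algorithm~\ref{alg:pruning} via Lemma~\ref{lem:algoruntime} for the $2^{\bigO(k^2)}\cdot\mathrm{outdeg}(v)$ cost, and a final filtering to the latest cuts done by comparing candidates through their per-neighbour pieces using the inductively maintained later-order on each $\mathcal{F}_{w,t}^{\le k}$ (the paper's relation~\eqref{eq:fx-order}). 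The minor variations (restricting families to $t$-reaching out-neighbours via a reachability table, and the choice of the cut $C$) do not change the argument.
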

\begin{proof}
We perform dynamic programming by repeatedly combining families of latest cuts while iterating through the vertices in a reverse topological order of $G$.
Let this order be denoted by $v_n, v_{n-1},\ldots , v_1$.
When processing vertex $v_i$ in this order, we compute all small cuts originating from $v_i$, so all $v_i$-$v_j$-latest $\leq$$k$-cuts for all $i \leq j$.
Note that for any $i > j$ all $v_i$-$v_j$ cuts are trivial since $G$ is a DAG.
To find the non-trivial cuts, consider $G' = (V',A')$, the subgraph of $G$ induced by vertices $V' = \{v_i,v_{i+1},\ldots,v_n\}$.
Since $G$ is a DAG, cuts between those vertices are preserved in $G'$.
Consider the arc split of $G'$ with $A_1$ being all arcs leaving $v_i$ and $A_2 = A' \setminus A_1$.

For every $j$ such that $i<j\le n$, we build an instance $\mathcal{I}_{i,j}$ of the set families cover the \witness problem to find all $v_i$-$v_j$-latest $\leq$$k$-cuts.
We set $c$ equal to the out-degree of $v_i$ and let $N^+(v_i) = \{v_{i_1}, \ldots, v_{i_c}\}$ denote all the heads of the arcs leaving $v_i$.
Then, for every $x$ such that $1 \leq x \leq c$, the set family $\mathcal{F}_x$ is composed of all the cuts in $\mathcal{F}_{v_{i_x},v_j}^{\le k}$ and the singleton $\{(v_i,v_{i_x})\}$.

Note that $\{(v_i,v_{i_x})\}$ is the only earliest $v_i$-$v_{i_x}$ cut in $(V', A_1)$.
Furthermore, for all other vertices $v \in V' \setminus N^+(v_i)$, the empty set is the only earliest $v_i$-$v$ cut in $(V', A_1)$.
Since all arcs leaving $v_i$ are in $A_1$, the empty set is the only $v_i$-$t$-latest cut in $(V', A_2)$.
Hence the families $\mathcal{F}_1, \mathcal{F}_2, \dots, \mathcal{F}_c$ contain all earliest cuts in $(V', A_1)$ and all latest cuts in $(V', A_2)$ for all of $V'$.
Therefore, we can apply Theorem~\ref{th:arcsplitcuts1} and any solution to $\mathcal{I}_{i,j}$ corresponds to a $v_i$-$v_j$-cut with Property~\ref{property1}.

As $(S_{A_1}, \lbar{S_{A_1}}) = (\{v_i\}, V'\setminus(\{v_i\}))$ is the earliest possible $v_i$-$v_j$ cut, we can apply Theorem~\ref{th:arcsplitcuts2} and Corollary~\ref{cor:atmostk} as well, using $C = (S_{A_1}, \lbar{S_{A_1}})$.
Notice that by Theorem~\ref{th:arcsplitcuts2} every $v_i$-$v_j$-latest cut is late covering w.r.t. $C$, and as the solution to the \witness problem finds all late covering cuts we know that all $v_i$-$v_j$-latest $\le$$k$-cuts appear among the solutions to $\mathcal{I}_{i,j}$.

Since all solutions to $\mathcal{I}_{i,j}$ are guaranteed to be minimal,
all non-minimal $v_i$-$v_j$-cuts with properties~\ref{property1}~and/or~\ref{property2} get filtered out and only the minimal $v_i$-$v_j$ cuts remain.
But some of the $v_i$-$v_j$ cuts among the solutions to $\mathcal{I}_{i,j}$ might not be $v_i$-$v_j$-\emph{latest}.

To enumerate the solutions to $\mathcal{I}_{i,j}$ we call Algorithm~\ref{alg:pruning}, which takes $\bigO(2^{\bigO(k^2)} \cdot c)$ time, by Lemma~\ref{lem:algoruntime} and the $K \in 2^{\bigO(k)}$ bound on the number of latest $\le$$k$-cuts from Lemma~\ref{lem:catalan}.
This time bound, summed over all source vertices and all target vertices gives the claimed total runtime $\bigO(2^{\bigO(k^2)}  \cdot m  n)$.

To complete the proof, we need to argue that we can filter the solutions to $\mathcal{I}_{i,j}$ resulting only in the $v_i$-$v_j$-latest cuts, without additionally changing the time complexity.
Let $\mathcal{M}$ be the output of Algorithm~\ref{alg:pruning}, so all the $v_i$-$v_j$ cuts satisfy Property~\ref{property2}.
We show inductively that we can compute the later-relation for the cuts within $\mathcal{F}^{\le k}_{v_i,v_j} \subseteq \mathcal{M}$ while doing the filtering.

We explicitly compute the relative order for all the $v_i$-$v_j$ cuts in $\mathcal{M}$.
That is, we test all pairs $M,M' \in \mathcal{M}$, for the following relation:
\begin{align}
\label{eq:fx-order}
M' \ge M \Leftrightarrow \text{ $\forall x \in [c]$, $ \forall M_x,M'_x \in \mathcal{F}_x:$ $M_x \subseteq M$ and $M'_x \subseteq M'$ implies $M'_x \ge M_x$.}
\end{align}

For this order to be well-defined, we need the later-relation $M'_x \ge M_x$ to be defined for all the sets within $\mathcal{F}_x$.
This is not immediate since $\mathcal{F}_x$ also contains the singleton set $\{(v_i,v_{i_x})\}$ which might not disconnect $v_i$ from $v_j$.
To fix this, we just define that $\{(v_i,v_{i_x})\}$ shall be considered earlier than all cuts $M \in\mathcal{F}_{v_{i_x},v_j}^{\le k}$.
This suffices to extend the order to all of $\mathcal{F}_x$ as we already know the relative order for all pairs of $v_{i_x}$-$v_j$-latest $\le$$k$-cuts, by the inductive assumption.

To argue about the correctness of the equivalence in \eqref{eq:fx-order}, observe that for every $x$, and $M_x, M'_x \in \mathcal{F}_x$, $M'_x \ge M_x$ holds if and only if every path from $v_i$ to $v_j$ using $(v_i,v_{i_x})$ as its first arc intersects $M_x$ no later than intersecting $M'_x$.
Thus, this partial later-order on $\mathcal{F}_x$ properly extends to the partial later-order that we want on (minimal) cuts, and to filter $\mathcal{M}$ it is enough to keep those $M \in \mathcal{M}$ such that for no $M' \in \mathcal{M}$ there is $M' > M$ and $|M'| \le |M|$.

The total amount of work for the filtering for a single pair of $i,j$ is thus quadratic in $\lvert \mathcal{M} \rvert = 2^{\bigO(k^2)}$, the number of cuts in $\mathcal{M}$, linear in $c$, the number of set families $\mathcal{F}_x$, and quadratic in $2^{\bigO(k)}$, the size of each set family. As $\textrm{poly}(2^{\bigO(k^2)}) \cdot c \cdot \textrm{poly}(2^{\bigO(k)}) = 2^{\bigO(k^2)} \cdot c$, where $c$ is the outdegree of $v_i$, we get the claimed bound on the runtime.
\end{proof}

\begin{corollary}
All latest cuts of size at most $k = o(\sqrt{\log n}\,)$ for all pairs of vertices of a DAG can be found in $\bigO(m  n^{1+o(1)})$ total time.
\end{corollary}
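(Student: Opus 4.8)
The plan is to derive this as an immediate corollary of Theorem~\ref{th:dynprog}, whose bound $\bigO(2^{\bigO(k^2)}\cdot mn)$ already does all the work; the only thing left is to check that the factor $2^{\bigO(k^2)}$ degrades to $n^{o(1)}$ once $k$ is as small as $o(\sqrt{\log n}\,)$.

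First I would make the hidden constant explicit: let $c_0$ be the constant such that the algorithm of Theorem~\ref{th:dynprog} runs in time at most $2^{c_0 k^2}\cdot mn$ for all sufficiently large $n$. Since $k = o(\sqrt{\log n}\,)$ is exactly the statement that $k^2 = o(\log n)$, we have $c_0 k^2 = o(\log n)$ and therefore $2^{c_0 k^2} = 2^{o(\log n)} = n^{o(1)}$. Substituting into the bound of Theorem~\ref{th:dynprog} gives total time $\bigO(n^{o(1)}\cdot mn) = \bigO(m\,n^{1+o(1)})$, which is the claimed bound; note also that the output size --- the total number of $v_i$-$v_j$-latest $\le$$k$-cuts reported, summed over all pairs --- is $n^{o(1)}$ per pair by Lemma~\ref{lem:catalan}, so it cannot dominate the running time.

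The hard part is essentially nonexistent here, since this amounts to a one-line substitution; the only point deserving a word of care is the bookkeeping of the asymptotics: the $o(1)$ appearing in the exponent of the final running time is the composition of the fixed constant $c_0$ with the rate $\eps(n) := k(n)^2/\log n \to 0$ coming from the hypothesis, and $c_0\,\eps(n)$ is still $o(1)$. If a fully quantitative statement is preferred, then for any $k = k(n)$ with $k^2 \le \eps(n)\log n$ and $\eps(n)\to 0$, the algorithm of Theorem~\ref{th:dynprog} runs in $\bigO(m\,n^{1+c_0\eps(n)}) = \bigO(m\,n^{1+o(1)})$ time, which yields the corollary.
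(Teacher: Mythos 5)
Your proposal is correct and matches the paper's (implicit) argument: the corollary is stated as an immediate consequence of Theorem~\ref{th:dynprog}, obtained exactly by the substitution $k^2 = o(\log n)$, so that $2^{\bigO(k^2)} = n^{o(1)}$ and the bound becomes $\bigO(m\,n^{1+o(1)})$. Your extra remarks on making the constant $c_0$ explicit and on the output size are fine but not needed.
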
 
\subsection{Coding for the Witness Superset Problem}

\subparagraph{Binary codes.}
A binary code $\mathcal{C}$ of length $q$ on a universe of size $u$ is a function $[u] \to 2^{[q]}$.
Note that we employ set formalism to describe codes, i.e., each element of the universe is mapped to a subset of the code set.
This is equivalent to mapping to binary codes of length $q$, used for instance in implementations.
Correspondingly, we talk about the bitwise-OR operation on codewords, which is equivalent under the set formalism to taking the union of two characteristic sets.

\subparagraph{Tensor products and powers.}
Given two binary codes $\mathcal{C}_1$, $\mathcal{C}_2$, we define the \emph{tensor product} $\mathcal{C}_1 \otimes \mathcal{C}_2: [u_1]\times[u_2] \to 2^{[q_1] \times [q_2]}$ of the two codes $\mathcal{C}_1$ and $\mathcal{C}_2$ as the function $(\mathcal{C}_1 \otimes \mathcal{C}_2) (w_1,w_2) = \mathcal{C}_1(w_1) \times \mathcal{C}_2(w_2)$.\footnote{$\mathcal{C}_1 \times \mathcal{C}_2$ is interpreted as a code from natural bijection between $[u_1]\times [u_2]$ and $[u_1 \cdot u_2]$.}
The tensor product resembles the construction of concatenated codes: instead of the outer code $\mathcal{C}_1(w_1)$, each '1' in $\mathcal{C}_1(w_1)$ is replaced with $\mathcal{C}_2(w_2)$, and each `0' with a sequence of '0' of appropriate length.
We call the replacement of each entry of $\mathcal{C}_1(w_1)$ (that is, by $\mathcal{C}_2(w_2)$ or by '0's) a \emph{column} of the codeword $(\mathcal{C}_1 \otimes \mathcal{C}_2) (w_1,w_2)$.

We define the $p$-th \emph{tensor power} $\mathcal{C}^{\otimes p}: [u]^p \to 2^{[q]^p}$ of a code $\mathcal{C}$ as the tensor product of $p$ copies of $\mathcal{C}$:
$$\mathcal{C}^{\otimes p} = \underbrace{\mathcal{C} \otimes \ldots \otimes \mathcal{C}}_{p \text{ times}},$$
which is equivalent to taking the tensor product after applying the code to each argument:
$$(\mathcal{C}^{\otimes p})(w_1,w_2,\ldots,w_p) = \mathcal{C}(w_1) \times \mathcal{C}(w_2) \times \ldots \times \mathcal{C}(w_p).$$

\subparagraph{Superimposed codes.}
To apply the code to a subset $X$ of $[u]$, we write $\mathcal{C}(X) = \bigcup_{x \in X} \mathcal{C}(x)$.
A binary code $\mathcal{C}$ is called $d$-superimposed \cite{mooers1948application}, if the union of at most $d$ codewords is uniquely decodable, or equivalently
$$\forall_{X : |X| \le d} \forall_{y \not\in X} \mathcal{C}(y) \not\subseteq \mathcal{C}(X).$$
We refer to Figure~\ref{fig:superimposed_codes} for an illustration.

\begin{figure}
\begin{minipage}{\textwidth}
\centering\includegraphics[trim={0 9cm 0 0},width=\textwidth]{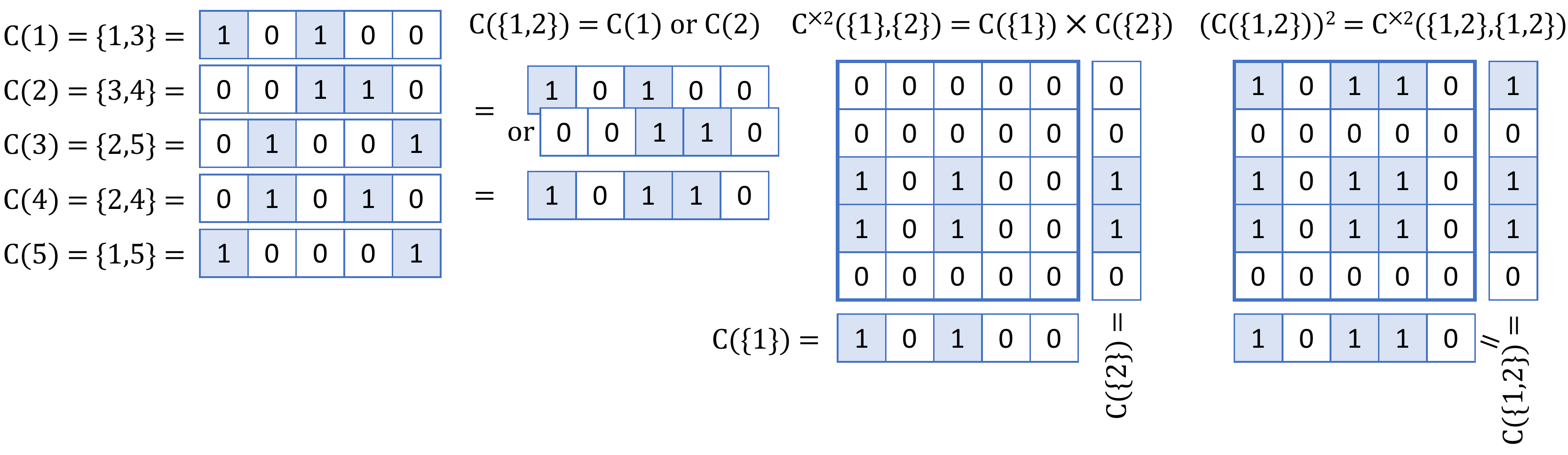}
\end{minipage}
\caption{(left) A $2$-superimposed code $\mathcal{C}$ on the universe $[5] =\{1,2,3,4,5\}$.
Notice how the encoding is unique for any set of size at most $2$.
For instance, the code $\mathcal{C}(\{1,2\})$ shown is uniquely decodable to $\{1,2\}$.
This is not true for larger sets however, e.g. $\mathcal{C}(\{1,2,3\}) = \mathcal{C}(\{2,3,5\})$.
(right) The second tensor power $\mathcal{C}^{\times 2}$ of the code $\mathcal{C}$ illustrated as a two dimensional matrix and applied once to the two singleton sets $\{1\}$ and $\{2\}$, and once to the set $\{1,2\}$ in both dimensions. 
\label{fig:superimposed_codes}}
\end{figure}

A standard deterministic construction based on Reed-Solomon error correction codes (cf.~\cite{ReedSolomon}) gives  $d$-superimposed codes of length $d \log^2 u$ (cf. Kautz and Singleton~\cite{KautzSingleton}). This construction is very close to the information-theoretic lowerbound of $\Omega(d \log_d u)$. By decoding a superimposed code, we understand computing $X=\mathcal{C}^{-1}(Y)$, if $|X|\le d$, or deciding that there is no such $X$. When considering decoding time, Indyk et al.~\cite{IndykNgoRudra} gave the first (randomized) construction of superimposed codes decodable in time $\bigO(\textrm{poly}(d \log u))$ that is close to the lower-bound on length of codes, while Ngo et al.~\cite{Ngo2011} provided a derandomized construction. However, for our purposes any superimposed code of length and decoding time $\bigO(\textrm{poly}(d \log u))$ is sufficient, thus we use the following folklore result (c.f. ''bit tester'' matrix \cite{DBLP:conf/stoc/GilbertSTV07}).
\begin{theorem}
\label{th:fast_superimposed}
Let $\mathcal{C}_{\text{slow}}$ be an arbitrary $d$-superimposed code of length $\bigO(\textrm{poly}(d \log u))$ (for example by the Kautz-Singleton construction), and let $\mathcal{C}_1$ be a $1$-superimposed code decodable in $\bigO(\log u)$ time and space (for example the code adding a parity bit after every bit of the input's binary representation). Then $\mathcal{C}_\text{fast} = \mathcal{C}_\text{slow} \otimes \mathcal{C}_1$ is a $d$-superimposed code decodable in $\bigO(\text{poly}(d \log u))$ time and space.
\end{theorem}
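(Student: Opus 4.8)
The plan is to split the proof into a combinatorial part and an algorithmic part, the former feeding into the latter. For the combinatorial part I would first show that $\mathcal{C}_{\text{fast}}=\mathcal{C}_{\text{slow}}\otimes\mathcal{C}_1$ --- which I view on the diagonal, so that $\mathcal{C}_{\text{fast}}(x)=\mathcal{C}_{\text{slow}}(x)\times\mathcal{C}_1(x)$, of length $q_1q_2=\bigO(\textrm{poly}(d\log u))$ where $q_1=\bigO(\textrm{poly}(d\log u))$ and $q_2=\bigO(\log u)$ are the lengths of the two factors --- is itself $d$-superimposed. This is the ``column'' argument: given $X$ with $|X|\le d$ and $y\notin X$, $d$-superimposedness of $\mathcal{C}_{\text{slow}}$ produces an index $i\in\mathcal{C}_{\text{slow}}(y)\setminus\mathcal{C}_{\text{slow}}(X)$, and on column $i$ the codeword $\mathcal{C}_{\text{fast}}(y)$ contributes the nonempty set $\mathcal{C}_1(y)$ while $\mathcal{C}_{\text{fast}}(X)=\bigcup_{x\in X}\mathcal{C}_{\text{fast}}(x)$ contributes the empty set; hence $\mathcal{C}_{\text{fast}}(y)\not\subseteq\mathcal{C}_{\text{fast}}(X)$. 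As a standard consequence, if $Y$ admits any decoding $X$ with $|X|\le d$ then that decoding is unique --- a fact I will use below.

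For the decoder, given $Y\subseteq[q_1]\times[q_2]$ I would run two phases. In the \emph{candidate phase}, scan the $q_1$ columns of $Y$; on each column run the $\bigO(\log u)$-time decoder of $\mathcal{C}_1$, and whenever it returns some $\hat x$, keep $\hat x$ only if re-encoding confirms that $\mathcal{C}_1(\hat x)$ is exactly that column. Let $\hat X$ be the collected set; trivially $|\hat X|\le q_1=\bigO(\textrm{poly}(d\log u))$. The crucial claim is that $X\subseteq\hat X$ whenever $Y=\mathcal{C}_{\text{fast}}(X)$ with $|X|\le d$: for $x\in X$, applying $d$-superimposedness of $\mathcal{C}_{\text{slow}}$ to the set $X\setminus\{x\}$ of size at most $d-1$ yields an index $i\in\mathcal{C}_{\text{slow}}(x)\setminus\mathcal{C}_{\text{slow}}(X\setminus\{x\})$, and on column $i$ the set $Y$ equals exactly $\mathcal{C}_1(x)$, so $\mathcal{C}_1$'s decoder recovers $x$. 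In the \emph{verification phase}, output $X'=\{\hat x\in\hat X:\mathcal{C}_{\text{fast}}(\hat x)\subseteq Y\}$, and report ``no valid codeword'' unless $|X'|\le d$ and $\mathcal{C}_{\text{fast}}(X')=Y$. Correctness: if a valid $X$ exists then $X\subseteq\hat X$ by the above, and in fact $X'=X$, since every $x\in X$ satisfies $\mathcal{C}_{\text{fast}}(x)\subseteq\mathcal{C}_{\text{fast}}(X)=Y$ while every $\hat x\notin X$ satisfies $\mathcal{C}_{\text{fast}}(\hat x)\not\subseteq\mathcal{C}_{\text{fast}}(X)=Y$ by the combinatorial part; and if no valid $X$ exists, the final consistency check necessarily fails (otherwise $X'$ itself would be one).

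For the running time I would argue: the candidate phase performs $q_1$ invocations of $\mathcal{C}_1$'s decoder plus $\bigO(\textrm{poly}(\log u))$-time re-encodings, so $\bigO(\textrm{poly}(d\log u))$ in total; the verification phase inspects at most $q_1$ candidates, and for each it assembles $\mathcal{C}_{\text{fast}}(\hat x)=\mathcal{C}_{\text{slow}}(\hat x)\times\mathcal{C}_1(\hat x)$ --- the Kautz--Singleton code is encodable in time polynomial in its length $q_1$, and $\mathcal{C}_1$ in time $\bigO(\textrm{poly}(\log u))$ --- and tests the $\le q_1q_2$ resulting entries for membership in $Y$ using a lookup structure built once for $Y$; this is again $\bigO(\textrm{poly}(d\log u))$, and every data structure used occupies $\bigO(\textrm{poly}(d\log u))$ space.

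The main obstacle, I expect, is making the verification step airtight. The candidate phase can emit spurious elements, because a column of $Y$ may be a union of several $\mathcal{C}_1$-codewords that nevertheless coincides with (or is decoded as) a single codeword. Two observations rescue the argument: there are only $q_1$ columns, so $\hat X$ stays polynomial; and $\mathcal{C}_{\text{fast}}$ is itself $d$-superimposed, which is exactly what makes the plain containment test $\mathcal{C}_{\text{fast}}(\hat x)\subseteq Y$ discard every spurious candidate while losing no element of $X$. This is the reason the combinatorial step must precede the analysis of the decoder. A minor point I would state explicitly is that we rely on the outer code being efficiently \emph{encodable}, not merely short, which holds for the Reed--Solomon/Kautz--Singleton construction.
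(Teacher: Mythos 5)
Your proposal is correct and follows essentially the same route as the paper: establish that $\mathcal{C}_\text{fast}$ is $d$-superimposed via the column structure, use the private coordinate guaranteed by $d$-superimposedness of $\mathcal{C}_\text{slow}$ to argue that every element of $X$ appears as an exactly-decodable column, and then verify candidates by re-encoding. Your extra containment filter before the final equality test, and your explicit remark that the outer code must be efficiently encodable, are just slightly more careful renderings of the paper's (terser) candidate-then-verify decoder.
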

\begin{proof}
$\mathcal{C}_{\text{fast}}$ being $d$-superimposed follows from it having non-zero column only in the positions corresponding to non-zero bits of $\mathcal{C}_{\text{slow}}$.
Since $\mathcal{C}_\text{slow}$ is $d$-superimposed, for any $X$ such that $|X| \le d$, for all elements $x$ of $X$, there exists a coordinate that is in the encoding of $X$ only because of $x$, i.e., there is exists $i$ such that $i\in \mathcal{C}_\text{slow}(x)$ and $i \not\in \mathcal{C}_\text{slow}(X \setminus \{x\})$.
Thus, given any $S \subseteq [q_{\text{slow}}]\times [q_1]$ to be decoded by $\mathcal{C}_\text{fast}$, it is enough to first find the set $\mathcal{I}$ of all $i$ such that $i$-th column of $S$ is a proper code-word of $\mathcal{C}_1$, and test if $\mathcal{C}_\text{slow}(\mathcal{I}) \times \mathcal{C}_1(\mathcal{I}) = S$.
\end{proof}
One can see superimposed codes as a short encoding of sets, that are decodable up to a certain size, and preserve set union under set codeword bitwise-OR.
Now we develop codes that allow for encoding of set families, with the goal of decoding solutions to \witness from results of the bitwise-OR of encoded inputs.

More specifically, assume $\mathcal{C}$ is a $k$-superimposed code (on a universe of size $u$), and consider the code $\mathcal{C}^{\otimes K}$. If $\mathcal{F} = \{F_1,F_2,\ldots,F_{K}\}$ is a set family of $K$ sets of at most $k$ elements from $[u]$, then by slightly bending the notation there is $$(\mathcal{C}^{\otimes K})(\mathcal{F}) = \mathcal{C}(F_1) \times \ldots \times \mathcal{C}(F_K),$$
(the order of sets $F_1,\ldots,F_K$ is chosen arbitrarily). If $|\mathcal{F}| < K$ then we append several copies of, i.e., $F_1$ in the encoding. Moreover, we need to guarantee that $\mathcal{F} \not= \emptyset$.

The intuition behind this construction is to assign a separate dimension for each set of the set family. This creates enough redundancy so that the code has some desired properties under bitwise-OR. Simpler codes, for example concatenating instead of taking a tensor product, do not have those properties.

\subparagraph{Taking slices.}
We define taking \emph{slices} of smaller dimension from multidimensional sets, by fixing one (or more) coordinates to specific values, that is taking a subset of the original set that has fixed coordinates equal to the desired ones, and then eliminating those coordinates from the tuples. For example, for a set
$\{(0,1,2),(2,2,0),(3,1,2)\}$
setting its second coordinate to $'1'$ results in the slice
$\{(0,2),(3,2)\}.$ This is a simple operation that reduces the level of redundancy in the code: observe that a slice of $\mathcal{C}^{\otimes K}(\mathcal{F})$ is just an encoding $\mathcal{C}^{\otimes K'}(\mathcal{F}')$ for some $\mathcal{F}' \subset \mathcal{F}$ and $K' < K$.

\subparagraph{Solving \witness.}
We now develop Algorithm~\ref{alg:boxes}, which takes an encoded input $S$ of an instance of \witness and solves it by recursively taking unions $S'$ of at most $k$ different slices of $S$.
This way, each recursive call reduces the dimension by one from $\mathcal{C}^{\otimes K}$ to $\mathcal{C}^{\otimes K-1}$.
Correctness follows from formalizing the following property: for each solution $W$, we are sure that (at least) one recursive call considers a slice-union $S'$ for which each element $x \in W$ is either easily observed already in $S$ or we are sure that $x$ is still necessary within $S'$.
We refer to Figure~\ref{fig:boxalgorithm} for an illustration.

\begin{figure}
\begin{minipage}{\textwidth}
\centering\includegraphics[scale=0.52]{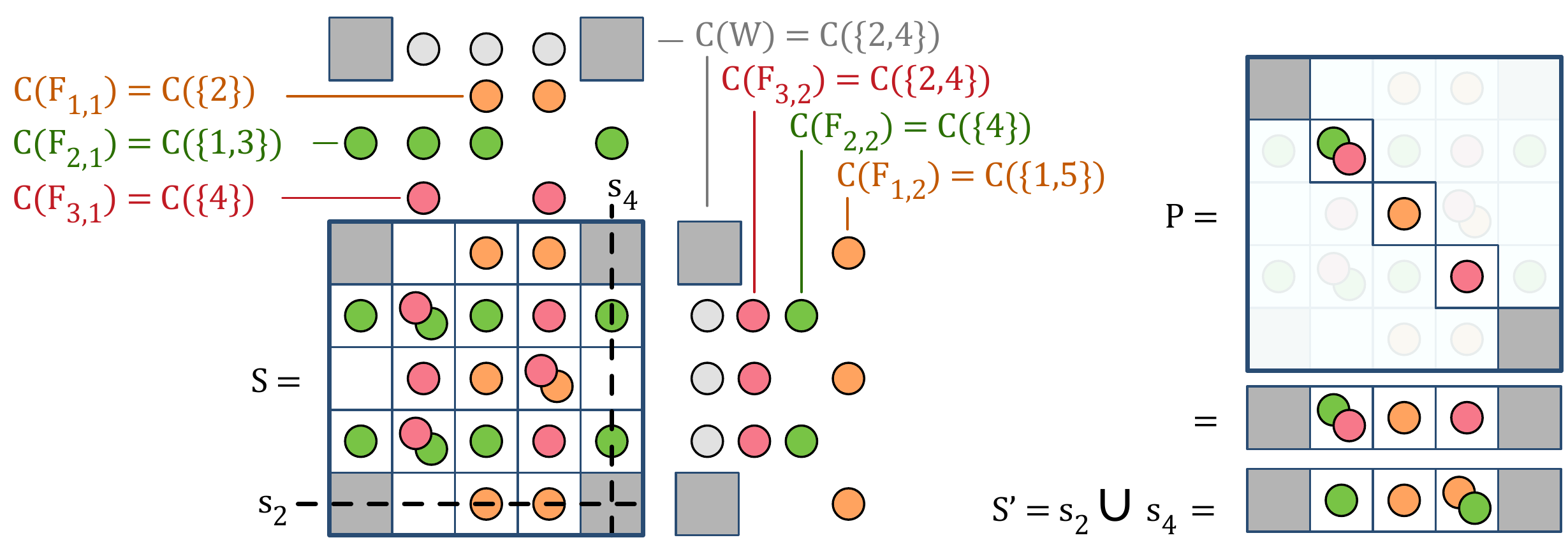}
\end{minipage}
\caption{Illustration of Algorithm~\ref{alg:boxes} solving an instance of \witness, namely the set families $\mathcal{F}_1 = \{\{2\},\{1,5\}\}$, $\mathcal{F}_2 = \{\{1,3\},\{4\}\}$ and $\mathcal{F}_3 = \{\{4\},\{2,4\}\}$, so we have $u=5$, $c=3$, $K=2$ and $k=2$.
We use the same superimposed code $\mathcal{C}$ of length $t=5$ as in Figure~\ref{fig:superimposed_codes} to form the input $S$.
$S$ is the bitwise-OR encoding of the set families, namely
{$S = (\mathcal{C}(F_{1,1})\times \mathcal{C}(F_{1,2})) \cup (\mathcal{C}(F_{2,1})\times \mathcal{C}(F_{2,2})) \cup (\mathcal{C}(F_{3,1})\times \mathcal{C}(F_{3,2}))$}.
The only solution to this instance is $W = \{2,4\}$.
As shown by Lemma~\ref{lem:propertyofsolution}, we have $S \cap ([q] \setminus \mathcal{C}(W))^2 = \emptyset$, i.e., all gray boxes are free from colored dots.
Note that both cases of the argument in Theorem~\ref{th:codes_decoding} apply here.
To see that the first case applies for $x=4$, note that $\mathcal{F}_3$ is $4$-$W$-critical, and that we have $F_{3,1} \subseteq W$ and $F_{3,2} \subseteq W$.
Thus, $\mathcal{C}(4) = \{2,4\} \subseteq P$.
The set $P$ corresponds to the diagonal of the matrix $S$.
In this specific example, we even have $\mathcal{C}(W) \subseteq P$ because $P = \{r \mid (r,r) \in S\} = \{2,3,4\}$ contains element $3$ as well since $(3,3) \in \mathcal{C}(\mathcal{F}_1) \subseteq S$.
For the second case, the two slices $s_2$ and $s_4$ are shown as dotted lines.
For $x=2$, $\mathcal{F}_1$ is $x$-$W$-critical and $F_{1,2} \not\subseteq W$.
Hence, we can still observe $2$ within slice $s_2$.
Analogously for $x=4$, $\mathcal{F}_2$ is $x$-$W$-critical and $F_{2,1} \not\subseteq W$ and thus $s_4$ still enforces that $W \setminus \{4\}$ is not a solution.
In the recursion for $S' = s_2 \cup s_4$, we get $S' = \mathcal{C}(W)$, which corresponds to the base case of the induction.
\label{fig:boxalgorithm}}
\end{figure}

\begin{algorithm}[h!]
\caption{decoding of $\mathcal{C}^{\otimes K}$}
\label{alg:boxes}
\SetKwFunction{KwDecode}{decodeWitness}
\KwIn{$S$, bitwise-OR of set families encoded with $\mathcal{C}^{\otimes K}$}
\KwOut{$\KwDecode(S)$ outputs all possible solutions to the \witness problem on $S$ }
\SetKwFunction{KwCollapse}{collapse}
\Def{$\KwCollapse(S)$}
{
	$ans \gets \emptyset$\;
	\uIf{$K = 1$}
	{
		$ans.insert(S)$\;
	}
	\uElse
	{
		$P \gets \{ r\ |\ (r,r,\ldots,r) \in S\}$\;
		\Comment{$P$ are those coordinates that have to be in any solution}
		\For{$s_1, s_2, \ldots, s_k \in$ all $(K-1)$-dimensional slices of $S$}
		{
			$S' \gets \bigcup_{i} s_i$\;
			\For{ $\mathcal{I} \in \KwCollapse(S')$ }
			{
				$ans.insert(\mathcal{I} \cup P)$\;
			}
		}
	}
	\Return $ans$\;
}\;
\Def{$\KwDecode(S)$}
{
	$solutions \gets \emptyset$\;
	$q \gets$ length of $\mathcal{C}$\;
	\For{$\mathcal{I} \in \KwCollapse(S)$}
	{
		\uIf{$\mathcal{I}$ is decodable by $\mathcal{C}$ into a set of size at most $k$}
		{
			\uIf{$S \cap \Big([q] \setminus \mathcal{I}\Big)^K = \emptyset$  \label{lst:check1}}
			{
				$W \gets \mathcal{C}^{-1}(\mathcal{I})$\;
				\uIf{ $\forall\ W' \subset W$: $S \cap \Big([q] \setminus \mathcal{C}(W')\Big)^K \not= \emptyset$ \label{lst:check2} }
				{
					\Comment{$W$ is minimal}
					$solutions.insert(W)$\;
				}
			}
		}
	}
	\Return $solutions$\;
}
\end{algorithm}

\begin{theorem}
\label{th:codes_decoding}
Algorithm~\ref{alg:boxes} solves the \witness problem from the bitwise-OR of
$$(\mathcal{C}^{\otimes K})(\mathcal{F}_1), \ldots, (\mathcal{C}^{\otimes K})(\mathcal{F}_c)$$
in time $\bigO((K k \log u)^{\bigO(K \cdot k)})$, assuming all set families $\mathcal{F}_i$ are over the universe $[u]$, and $\mathcal{C}$ is of size $\bigO(poly(K \log u))$ and has a fast decoding procedure (i.e.,~the construction from Theorem~\ref{th:fast_superimposed}).
\end{theorem}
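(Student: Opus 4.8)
The plan is to establish, in order: (i) a combinatorial characterization of the solutions of \witness in terms of $\mathcal C$; (ii) soundness of Algorithm~\ref{alg:boxes}; (iii) completeness; and then (iv) the running time, which falls out of bounding the recursion tree. Throughout write $q=|\mathcal C|=\bigO(\poly(K\log u))$ and recall $S=\bigcup_{i=1}^c \mathcal C(F_{i,1})\times\cdots\times\mathcal C(F_{i,K})$, where the $F_{i,j}$ are the (padded) members of $\mathcal F_i$. The first step is the characterization lemma (Lemma~\ref{lem:propertyofsolution}): for any $W\subseteq[u]$ with $|W|\le k$, the set $W$ satisfies the [cover] condition if and only if $S\cap\bigl([q]\setminus\mathcal C(W)\bigr)^K=\emptyset$. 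Its proof is short; the only non-trivial direction uses that $\mathcal C$ is $k$-superimposed, so that $\mathcal C(F_{i,j})\subseteq\mathcal C(W)$ already forces $F_{i,j}\subseteq W$. Granting this, soundness of Algorithm~\ref{alg:boxes} is immediate: a candidate $\mathcal I$ that is decodable into a set of size $\le k$ equals $\mathcal C(W)$ for the unique such $W=\mathcal C^{-1}(\mathcal I)$, so the test on line~\ref{lst:check1} is exactly [cover] for $W$, the test on line~\ref{lst:check2} (applying the characterization to each $W'\subsetneq W$) is exactly [minimal], and $|W|\le k$ is [size].

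For completeness I would show, by induction on $K$, that $\mathcal C(W)\in\texttt{collapse}(S)$ for every minimal solution $W$; together with the soundness tests this means $W$ is output. The base case $K=1$ is direct, since then $S=\bigcup_i\mathcal C(G_i)=\mathcal C\bigl(\bigcup_i G_i\bigr)$ for singleton families $\{G_i\}$, the unique minimal solution is $W=\bigcup_i G_i$, and $\texttt{collapse}(S)=\{S\}=\{\mathcal C(W)\}$. For the inductive step, use minimality of $W$ to pick, for each $x\in W$, a family $\mathcal F_{i(x)}$ that is \emph{$x$-$W$-critical}: it has a member $F^*\ni x$ with $F^*\subseteq W$, and every member of $\mathcal F_{i(x)}$ contained in $W$ contains $x$. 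Partition $W=W_1\cup W_2$, putting $x$ in $W_1$ when some critical family of $x$ has \emph{all} its members inside $W$ — then all of them contain $x$, hence $x$ occupies all $K$ coordinates and $\mathcal C(x)\subseteq P$, where $P=\{r:(r,\dots,r)\in S\}$ is the diagonal; otherwise $x\in W_2$ and its critical family has a coordinate $j_x$ with $F_{i(x),j_x}\not\subseteq W$. For each $x\in W_2$ take the slice $s_x$ fixing coordinate $j_x$ to a value $r_x\in\mathcal C(F_{i(x),j_x})\setminus\mathcal C(W)$ (nonempty since $\mathcal C$ is $k$-superimposed and $|W|\le k$), and set $S'=\bigcup_{x\in W_2}s_x$, padded with repeats to exactly $k$ slices. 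One checks that $S'$ is again a bitwise-OR of $\mathcal C^{\otimes(K-1)}$-encodings of families of size $\le K-1$ (each surviving term is $\mathcal C^{\otimes(K-1)}$ of some $\mathcal F_i$ with one member deleted) and — crucially — that because $r_x\notin\mathcal C(W)$ the member deleted from any surviving family is some $F_{i,j_x}\not\subseteq W$, i.e.\ never a member $W$ uses for covering, so $W$ still covers every family of $S'$. Let $\tilde W\subseteq W$ be a minimal sub-cover of the $S'$-instance; by induction $\mathcal C(\tilde W)\in\texttt{collapse}(S')$, hence $\mathcal C(\tilde W)\cup P\in\texttt{collapse}(S)$. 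Finally I would identify $\mathcal C(\tilde W)\cup P$ with $\mathcal C(W)$: one inclusion holds because $\tilde W\subseteq W$ and $P\subseteq\mathcal C(W)$ (any diagonal point of $S$ lies in $\mathcal C(F^*)\subseteq\mathcal C(W)$ for a covering member $F^*$ of the relevant family); for the other, observe $W\setminus\tilde W\subseteq W_1$, because if $y\in W_2$ then $\mathcal F_{i(y)}$ with its $j_y$-th member deleted is a family of $S'$, $\tilde W$ covers it by some member $F\subseteq\tilde W\subseteq W$, and $F\subseteq W$ forces $y\in F\subseteq\tilde W$ — so every dropped element lies in $W_1$ and thus has $\mathcal C(y)\subseteq P$, giving $\mathcal C(W)\subseteq\mathcal C(\tilde W)\cup P$.

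The step I expect to be the main obstacle is this last part of the induction — the bookkeeping that makes the diagonal $P$ and the recursive slices interlock. In particular one must choose each slice value $r_x$ \emph{outside} $\mathcal C(W)$ so that slicing removes only members irrelevant for covering $W$, and one must prove that the elements of $W$ the recursion is permitted to discard are exactly those already pinned down by $P$, so that $\mathcal C(W)=\mathcal C(\tilde W)\cup P$ holds as an equality and not merely up to inclusion. The degenerate situations ($W_2=\emptyset$, families collapsing to a single member, padding the $k$-tuple of slices) also need checking, but are routine.

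For the running time, \texttt{collapse} recurses to depth $K$ and at each node branches over all $k$-tuples among the at most $Kq$ axis-parallel $(K-1)$-dimensional slices of its argument, i.e.\ over at most $(Kq)^k$ options; so its recursion tree has at most $(Kq)^{kK}$ leaves, and the local work (forming the slice-unions and the diagonal, all on sets of size $\le q^K$) is $q^{\bigO(K)}$ per node and per branch. Hence \texttt{collapse} runs in time $(Kq)^{\bigO(kK)}$ and returns at most $(Kq)^{kK}$ candidates; each is handled by \texttt{decodeWitness} with one fast decoding (Theorem~\ref{th:fast_superimposed}), one emptiness test (line~\ref{lst:check1}), and at most $2^k$ further emptiness tests for minimality (line~\ref{lst:check2}), all within $\poly(q^K)\cdot 2^k$ time. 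Since $q=\bigO(\poly(K\log u))$, the overall running time is $(Kq)^{\bigO(kK)}=(K\log u)^{\bigO(kK)}\le(Kk\log u)^{\bigO(Kk)}$, which is the claimed bound.
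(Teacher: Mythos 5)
Your proposal is correct and follows essentially the same route as the paper's proof: the same characterization lemma ($W$ satisfies [cover] iff $S\cap([q]\setminus\mathcal{C}(W))^K=\emptyset$), the same soundness argument via the two checks, the same induction on $K$ using $x$-$W$-critical families split into the diagonal case ($\mathcal{C}(x)\subseteq P$) and the slice case ($\alpha\in\mathcal{C}(F_{i,j'})\setminus\mathcal{C}(W)$), and the same recursion-tree runtime bound. The only difference is that you spell out the final identification $\mathcal{C}(\tilde W)\cup P=\mathcal{C}(W)$ (via $W\setminus\tilde W\subseteq W_1$) in more detail than the paper, which simply asserts it.
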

\begin{proof}
Let $q = \bigO(\textrm{poly}(k \log u))$ be the length of codewords in $\mathcal{C}$.
Denote $S \subseteq [q]^K$ as the input to be decoded, and denote the (to us unknown) sets from the set families of the input as: $\mathcal{F}_i = \{F_{i,1}, F_{i,2}, \ldots, F_{i,K}\}$.
First, we show that there is a characterization of solutions in the language of tensor products.
\begin{lemma}
\label{lem:propertyofsolution}
For any $W \subseteq [u]$: $W$ satisfies [cover] if and only if $S \cap ([q] \setminus \mathcal{C}(W))^K = \emptyset$.
\end{lemma}
\begin{proof}
We start with the definition of [cover] (1) and the fact that $\mathcal{C}$ properly represents set containment for sets of size at most $k$ (2):
\begin{align*}
\text{$W$ satisfies [cover]}
\stackrel{(1)}{\Leftrightarrow} \text{$\forall i$ $\exists j$: $F_{i,j} \subseteq W$ }
\stackrel{(2)}{\Leftrightarrow} \text{ $\forall i$ $\exists j$: $\mathcal{C}(F_{i,j}) \subseteq \mathcal{C}(W)$. }
\end{align*}

We use $\mathcal{C}(\mathcal{F}_{i}) = \mathcal{C}(F_{i,1}) \times \dots \times \mathcal{C}(F_{i,j})\times \dots \times \mathcal{C}(F_{i,K})$ and $F_{i,j'} \subseteq [q]$ for all $j' \neq j$ (3) and the equivalence $A \subseteq B \Leftrightarrow A \cap ([q] \setminus B) = \emptyset$ (4) to get
\begin{align*}
\text{ $\forall i$ $\exists j$: $\mathcal{C}(F_{i,j}) \subseteq \mathcal{C}(W)$ }
\stackrel{(3)}{\Leftrightarrow}& \text{ $\forall i$ $\exists j$: $\mathcal{C}(\mathcal{F}_{i}) \subseteq [q]^{j-1} \times \mathcal{C}(W) \times [q]^{K-j}$}\\
\stackrel{(4)}{\Leftrightarrow}& \text{ $\forall i$ $\exists j$: $\mathcal{C}(\mathcal{F}_{i}) \cap ([q]^{j-1} \times ([q] \setminus\mathcal{C}(W)) \times [q]^{K-j}) = \emptyset$.}
\end{align*}

We now exploit that the codes $\mathcal{C}(\mathcal{F}_{i})$ are tensor products to get the following equivalence
\begin{align*}
\text{ $\forall i$ $\exists j$: $\mathcal{C}(\mathcal{F}_{i}) \cap ([q]^{j-1} \times ([q] \setminus\mathcal{C}(W)) \times [q]^{K-j}) = \emptyset$}
\stackrel{(5)}{\Leftrightarrow}& \text{ $\forall i$: $\mathcal{C}(\mathcal{F}_{i}) \cap ([q] \setminus\mathcal{C}(W))^K = \emptyset$}\\
\stackrel{(6)}{\Leftrightarrow} \text{ $(\bigcup_i\mathcal{C}(\mathcal{F}_{i})) \cap ([q] \setminus\mathcal{C}(W))^K = \emptyset$ }
\stackrel{(7)}{\Leftrightarrow}& \text{ $S \cap ([q] \setminus\mathcal{C}(W))^K = \emptyset$}.
\end{align*}
Taking the union (6) and applying the definition of S (7) concludes the proof.
\end{proof}

Now, consider any set $W$ that we output. Obviously $|W| \le k$, and applying Lemma~\ref{lem:propertyofsolution} to the checks on lines~\eqref{lst:check1} and~\eqref{lst:check2} of Algorithm~\ref{alg:boxes} ensures that $W$ satisfies [cover] and [minimal]. Thus, $W$ is a solution to the \witness problem.

Next, we reason that any solution $W$ to the \witness problem
is generated by the $\KwCollapse$ function, by induction on $K$. If $K=1$, then $S$ is the bitwise-OR (union) of the encoded sets, and since $\KwCollapse$ outputs $S$, the condition is trivially satisfied.

For the inductive step, we first observe that any $(K-1)$-dimensional slice $s_i$ of $S$ is the bitwise-OR of $\mathcal{C}^{\otimes K-1}$ encoded set families, and thus so is $S'$.

Consider any slice $s$ of $S$ by fixing a single dimension to a value $\alpha \not\in \mathcal{C}(W)$.
Then, similar to the arguments in Lemma~\ref{lem:propertyofsolution}, we have $s \cap ([q] \setminus \mathcal{C}(W) )^{K-1} = \emptyset$. Additionally, observe that $P \subseteq \mathcal{C}(W)$ as otherwise $([q] \setminus \mathcal{C}(W))^K$ would intersect $S$.

Now, fix any $x \in W$. Since $W$ is inclusion minimal, there is at least one family $\mathcal{F}_i$ that requires $x$ to be in $W$, which we call $x$-$W$-critical.
\begin{property}[critical family]
$\mathcal{F}_i$ is $x$-$W$-critical, iff for all $j$, if $F_{i,j} \subseteq W$ then $x \in F_{i,j}$.
\label{prop:critical-family}
\end{property}
	
We now consider two, not necessarily disjoint cases based on all those families in $\mathcal{F}_1, \mathcal{F}_2, \dots, \mathcal{F}_c$ that are $x$-$W$-critical (see Figure~\ref{fig:boxalgorithm} for an illustration):
\begin{itemize}
\item There is a $x$-$W$-critical family $\mathcal{F}_i$ such that for all $j$, $F_{i,j} \subseteq W$. By Property~\ref{prop:critical-family} we have for all $j$, $x \in F_{i,j}$, and thus $(\mathcal{C}(x))^K \subseteq \mathcal{C}(\mathcal{F}_i) \subseteq S$ and hence $\mathcal{C}(x) \subseteq P$.
\item There is at least one $x$-$W$-critical family $\mathcal{F}_i$, such that for at least one value of $j'$ we have $F_{i,j'} \not\subseteq W$.
Hence $\mathcal{F}_{i}\setminus \{F_{i,j'}\}$ is also $x$-$W$-critical.
Now consider the slice $s_x$ of $S$ by fixing its $j'$-th dimension to some arbitrarily chosen $\alpha \in \mathcal{C}(F_{i,j'})\setminus \mathcal{C}(W)$.
Since slice $s_x$ is a hyperplane going through $\mathcal{C}^{\otimes K}(\mathcal{F}_i)$, $s_x$ contains $\mathcal{C}(\mathcal{F}_i \setminus \{F_{i,j'}\})$, and hence $W \setminus \{x\}$ is not a solution for $s_x$, but $W$ is.
\end{itemize}
Thus, we reach the conclusion that for any $x \in W$, there either exists a slice $s_x$ of $S$ that requires $x$ in at least one of its minimal solutions, and has $W$ as a solution, or $x$ is encoded in $P$. Let $W^* = \{x \in W \mid s_x \text{ exists}\}$. Thus, there is $S' = \bigcup_{x\in W^*} s_x$ such that $S'$ has a minimal solution $W'$ and such that $\mathcal{C}(W') \cup P = \mathcal{C}(W)$.
Since Algorithm~\ref{alg:boxes} exhaustively searches all combinations of at most $k$ slices deterministically, $S'$ is used in at least one of the recursive calls of $\KwCollapse$.

To bound the running time, observe that the number of slices on a single level of the recursion is $K \cdot t$, thus the branching factor of $\KwCollapse$ is upper-bounded by $(K t)^{k}$, with recursion depth $K$. All codes are of size $q \in \bigO(t^K)$, and computing $\mathcal{C}$ and $\mathcal{C}^{-1}$ takes time $\mathrm{poly}(t)$. All in all this leads to the claimed time $\bigO((K k \log u)^{\bigO(K k)})$.
\end{proof}
\subsection{\texorpdfstring{$k$-Bounded All-Pairs Min-Cut for $k=o(\log\log n)$}{$k$-Bounded All-Pairs Min-Cut for k=o(log log n)}}

\begin{theorem} \label{thm:latestDense} 
All latest cuts of size at most $k$ for all pairs of vertices of a DAG can be found in $\bigO((k \log n)^{4^{k+o(k)}} \cdot n^{\omega})$ total time.
\end{theorem}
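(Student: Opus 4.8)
The plan is to turn the structural characterisation of extremal cuts (Theorems~\ref{th:arcsplitcuts1} and~\ref{th:arcsplitcuts2}, Corollary~\ref{cor:atmostk}) into a divide-and-conquer over a topological order, and to implement the combine step \emph{algebraically}, so that a single coordinate-wise Boolean matrix multiplication over codeword entries replaces the explicit construction of $\Theta(n^2)$ instances of \witness.

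\textbf{Recursion and reduction to \witness.} First I would fix a topological order of the input DAG $G$ and split $V$ into a prefix half $V_1$ and suffix half $V_2$, with $A_1,A_2$ the arcs induced on $V_1,V_2$ and $A_{1,2}=A\cap(V_1\times V_2)$; this is an arc split in the sense of Definition~\ref{def:arc-split}. Recursively I compute, for all pairs inside $V_1$ and all pairs inside $V_2$, all latest \emph{and} all earliest $\le k$-cuts (singletons are the base case; reversing all arcs makes the same procedure output earliest cuts). For a cross pair $s\in V_1,\ t\in V_2$ I proceed in two stages. In stage (i) I use the arc split $A_{1,2},A_2$: the only non-trivial $s$-$v$-earliest cut in $(V,A_{1,2})$ is the singleton $\{(s,v)\}$ when that arc exists, so by Theorems~\ref{th:arcsplitcuts1}--\ref{th:arcsplitcuts2} and Corollary~\ref{cor:atmostk} the $s$-$t$-latest $\le k$-cuts of $(V,A_{1,2}\cup A_2)$ are obtained from the \witness instance whose family $\mathcal{F}_v$ is $\{\{(s,v)\}\}$ together with the (recursively known) $v$-$t$-latest $\le k$-cuts of $G[V_2]$. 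In stage (ii) I use the arc split $A_1,\ A_{1,2}\cup A_2$: now $\mathcal{F}_v$ is the $s$-$v$-earliest $\le k$-cuts of $G[V_1]$ together with the $v$-$t$-latest $\le k$-cuts of $(V,A_{1,2}\cup A_2)$ just produced in stage (i). As in the proof of Theorem~\ref{th:dynprog}, the \witness solutions contain all $s$-$t$-latest $\le k$-cuts together with possibly some non-latest minimal cuts, and a post-processing filter (using the order maintained inductively inside each $\mathcal{F}_v$) retains exactly the latest ones; a symmetric pass recovers the earliest cuts needed by the recursion.

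\textbf{Algebraic implementation of the combine step.} Naively, there are $\Theta(n^2)$ cross pairs and each \witness instance has up to $n$ families, so even writing the instances down costs $\Omega(n^3)$. To avoid this I encode every $\le k$-cut by a $k$-superimposed code $\mathcal{C}$ over a universe of arc identifiers of size $\poly(n)$, of length $\poly(k\log n)$ and with fast decoding (Theorem~\ref{th:fast_superimposed}); a family of at most $K=\bigO(4^k)$ cuts is encoded by the tensor power $\mathcal{C}^{\otimes K}$, whose length is $(k\log n)^{4^{k+o(k)}}$ (this is where the $k^{\bigO(1)}$ factors in the exponent get absorbed, since $\poly(k)=2^{o(k)}=4^{o(k)}$, and where Lemma~\ref{lem:catalan} supplies the bound $K=\bigO(4^k)$). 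Crucially I reserve the first $K_1\le 4^k$ tensor coordinates for $G[V_1]$-cuts and the last $K_2\le 4^k$ for $G[V_2]$-cuts, with $K=K_1+K_2$, so that the padded tensor encoding $\mathcal{C}^{\otimes K}(\mathcal{F}_v)$ of a split family factors as the coordinate-wise AND of its (lifted) $V_1$-part and its (lifted) $V_2$-part. Then I build a matrix $E$ whose $(s,v)$ entry is the tensor codeword of the $G[V_1]$-half of $\mathcal{F}_v$ and a matrix $L$ whose $(v,t)$ entry is the tensor codeword of the $G[V_2]$-half, and I compute $S[s][t]:=\bigvee_v\bigl(E[s][v]\wedge L[v][t]\bigr)$. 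By the factorisation this equals the bitwise-OR over $v$ of $\mathcal{C}^{\otimes K}(\mathcal{F}_v)$, which is exactly the input expected by Algorithm~\ref{alg:boxes}. Computing $S$ is one Boolean matrix product of $n\times n$ matrices per codeword coordinate --- the coordinate-wise Boolean multiplication used by Fischer and Meyer for transitive closure, lifted to codeword-valued entries --- for a total of $\bigO\bigl((k\log n)^{4^{k+o(k)}}\cdot n^\omega\bigr)$. Finally, for each of the $n^2$ cross pairs I run Algorithm~\ref{alg:boxes} on $S[s][t]$; by Theorem~\ref{th:codes_decoding} each decoding costs $(Kk\log n)^{\bigO(Kk)}=(k\log n)^{4^{k+o(k)}}$, totalling $(k\log n)^{4^{k+o(k)}}\cdot n^2$, which is dominated by the matrix product.

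\textbf{Putting it together and the main obstacle.} Summing the combine cost over the recursion gives $T(n)\le 2T(n/2)+\bigO\bigl((k\log n)^{4^{k+o(k)}}\cdot n^\omega\bigr)$, which solves to $\bigO\bigl((k\log n)^{4^{k+o(k)}}\cdot n^\omega\bigr)$ since $\omega>1$ makes the geometric sum converge, and the at most logarithmic recursion depth is harmless. I expect the main obstacle to be the second step: arranging the tensor code so that the $G[V_1]$- and $G[V_2]$-pieces of every split family sit in disjoint coordinate blocks after a \emph{uniform} padding to $K$ dimensions, so that the plain coordinate-wise AND in the matrix product reconstructs precisely $\mathcal{C}^{\otimes K}(\mathcal{F}_v)$; and verifying that with $K=\bigO(4^k)$ and codeword length $\poly(k\log n)$ both the tensor length $q^K$ and the decoding time $(Kk\log n)^{\bigO(Kk)}$ stay within $(k\log n)^{4^{k+o(k)}}$. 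The remaining care is the filtering step that extracts the latest cuts from the \witness solutions, which is handled exactly as in Theorem~\ref{th:dynprog}.
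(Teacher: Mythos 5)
Your overall architecture is the same as the paper's: a balanced split along a topological order, recursive computation of earliest/latest $\le k$-cuts in $G[V_1]$ and $G[V_2]$, a two-stage combine via arc splits, tensor powers of a $k$-superimposed code with the entries multiplied by a coordinate-wise Boolean matrix product (the padding/lifting to disjoint coordinate blocks is exactly the paper's lift of $X,Y$ to $2K$ dimensions), and per-entry decoding with Algorithm~\ref{alg:boxes}, with the same recurrence and the same absorption of $\poly(k\log n)$ factors into $(k\log n)^{4^{k+o(k)}}$.

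However, there is a genuine gap in the step where you claim that, ``as in the proof of Theorem~\ref{th:dynprog},'' the \witness solutions decoded from the matrix product already contain all $s$-$t$-latest $\le k$-cuts, so that filtering alone extracts them. The instances you build in the balanced split are \emph{split-covering} instances, and Theorem~\ref{th:arcsplitcuts1} (with Corollary~\ref{cor:atmostk}) only guarantees that every solution is an $s$-$t$ cut and that \emph{some min-cut} appears among the solutions; it does not say that every latest $\le k$-cut is split-covering. In Theorem~\ref{th:dynprog} the stronger inclusion held only because $V_1$ was a single vertex, so the same families simultaneously formed a \emph{late-covering} instance with $C$ equal to the out-arcs of the source, and Theorem~\ref{th:arcsplitcuts2} applied. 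In your stage (ii), where the families consist of $s$-$v$-earliest cuts of $G[V_1]$ and $v$-$t$-latest cuts of $(V,A_{1,2}\cup A_2)$, there is no such $C$, and a latest $s$-$t$ cut need not contain any $s$-$v$-\emph{earliest} cut of $G[V_1]$ as a subset, so it may simply not appear among the decoded solutions; your filter then has nothing to select, and the recursion's invariant (all latest and earliest $\le k$-cuts known inside each half) breaks at the next level. The paper closes exactly this hole with an extra ``fixing'' step that you omit: for each pair, take the min-cut obtained from the decoding, let $y_1,\dots,y_d$ be its heads, obtain (recursively, for pairs not yet computed) all $y_i$-$t$-latest $\le k$-cuts, build the late-covering \witness instance justified by Theorem~\ref{th:arcsplitcuts2}, solve it combinatorially with Algorithm~\ref{alg:pruning} in $2^{\bigO(k^2)}$ time per pair, and only then apply the filtering of Theorem~\ref{th:dynprog}; this costs $2^{\bigO(k^2)}n^2$ overall and is what actually produces the latest (and, symmetrically, earliest) cuts needed both for the output and for the recursion. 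You would need to either add this step or prove that all latest $\le k$-cuts are split-covering for your stage-(ii) split, which the cited theorems do not give you.
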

\begin{proof}
We show a divide-and-conquer algorithm. Without loss of generality, assume that $n$ is even (if not, add one unique isolated vertex). Let $V_1 = v_1,v_2,\ldots,v_{n/2}$ and $V_2 = v_{n/2+1}, \ldots, v_{n-1}, v_{n}$, and let $A_1 = A(G[V_1])$, $A_{1,2} = A[V_1,V_2]$ and $A_2 = A(G[V_2])$. It is enough to show how to find all pairs earliest/latest $\le$$k$-cuts in $G$, having recursively computed all earliest/latest $\le$$k$-cuts in $(V_1,A_1)$ and in $(V_2,A_2)$ in the claimed time bound, since the recursive equation for the runtime $T(n) = 2T(n/2) +  \bigO((k\log n)^{4^{k+o(k)}} \cdot n^{\omega})$ has the desired solution.

Notice that the pairwise cuts between vertices in $V_1$ are the same in $G$ as in $(V_1,A_1)$, and the same holds for $V_2$ and $(V_2,A_2)$.
Thus, all we need is to find pairwise cuts from $V_1$ to $V_2$ in $G$. We proceed as follows.
First of all, we merge the information on pairwise cuts in $(V_2,A_2)$ with arcs from $A_{1,2}$ to compute all pairwise cuts from $V_1$ to $V_2$ in the graph $(V,A_{1,2} \cup A_2)$.
The second step is to merge the result of the first step with all pairwise cuts in $(V_1,A_1)$ to compute the desired pairwise cuts in $G$.
Since both procedures involve essentially the same steps, we describe in detail only the first one.

\textit{Encoding:}
Let $K \le 4^k$ be the bound from Lemma~\ref{lem:catalan}.
We use $\mathcal{C}^{\otimes K}$ described in Theorem~\ref{th:codes_decoding} with a universe of size $m$ to represent pairwise latest/earliest cuts\footnote{For multigraphs, we require $m \le 2^{\text{polylog}(n)}$ for our bound to hold.}.
We then build two encoded matrices of dimension $n/2 \times n/2$.
Matrix $Y$, defined by $Y_{i,j} = \mathcal{C}^{\otimes K}(\mathcal{E}^{\le k}_{v_i,v_j})$, encodes all earliest $v_i$-$v_j$ $\le$$k$-cuts in the graph $(V_2,A_2)$.
Matrix $X$ encodes cuts from $V_1$ to $V_2$ in graph $(V,A_{1,2})$, which has a much simpler structure: the cut is the set of all arcs from $v_i$ to $v_j$, or there is no cut if there are more than $k$ parallel arcs.

\textit{Matrix multiplication:} The following procedure is used
\begin{enumerate}
\item Lift $X$ into matrix $X'$, changing each entry from a $K$-dimensional tensor product to $2K$-dimensional, by setting $X'_{i,j} = X_{i,j} \times [t]^K$.
\item Lift $Y$ into $Y'$ by setting $Y'_{i,j} = [t]^K \times Y_{i,j}$.
\item Compute the coordinate-wise Boolean matrix product of $X'$ and $Y'$, resulting in $Z'$.
\end{enumerate}
We denote the above steps as $X \star Y = Z'$.
We note that if for some indices $a$, $b$, $c$, the entry $X_{a,b}$ encodes some set family $\mathcal{E}$ and $Y_{b,c}$ encodes some set family $\mathcal{F}$, then the bitwise product of $X'_{a,b}$ and $Y'_{b,c}$ encodes the set family $\mathcal{E}\cup\mathcal{F}$.
Thus, every entry $Z'_{i,j}$ is a bitwise-OR of all $v_i$-$v_a$-earliest $\le$$k$-cuts in $(V,A_{1,2})$ and all $v_a$-$v_{j+n/2}$-latest $\le$$k$-cuts encoded with $\mathcal{C}^{\otimes 2K}$.
Hence, by Theorem~\ref{th:codes_decoding}, Algorithm~\ref{alg:boxes} applied to each entry of $Z'$ solves \witness.
Since $A_{1,2}, A_2$ is an arc split of $(V,A_{1,2} \cup A_2)$, by Theorem~\ref{th:arcsplitcuts1} each solution in the output is a $v_i$-$v_{j+n/2}$ cut in $(V,A_{1,2} \cup A_2)$ and at least one solution is a min-cut, if the $v_i$-$v_{j+n/2}$ min-cut is of size at most $k$.

\textit{Fixing:}
So far, we have only found all pairwise min-cuts, if smaller than $k$.
As a final step, we describe how to extract all latest $\le$$k$-cuts (earliest $\le$$k$-cuts follow by a symmetrical approach).
For any pair $v_i$,$v_j$, let $y_1,y_2,\ldots,y_d$ be all heads of the $v_i$-$v_j$ min-cut found in the previous step.
We first recursively ask for all latest $y_1$-$v_j$, $y_2$-$v_j$, \ldots, $y_d$-$v_j$ $\le$$k$-cuts, for the particular pairs that were not computed already.
Then, we build the corresponding instance of \witness, which  by Lemma~\ref{lem:algoruntime} can be solved by Algorithm~\ref{alg:pruning} in time $2^{\bigO(k^2)} \cdot \textrm{poly}(k) = 2^{\bigO(k^2)}$, and by Theorem~\ref{th:arcsplitcuts2} contains among its solution all desired latest cuts.
A filtering procedure as in the proof of Theorem~\ref{th:dynprog} is applied as a final step.
As we compute the latest cuts exactly once for each pair $i,j$, in total this post-processing takes at most $2^{\bigO(k^2)} n^2$ steps.

Finally, we note that $C^{\otimes 2K}$ codes are of length $\bigO( (\textrm{poly}(k \log n))^{4^k}) = \bigO((k \log n)^{4^{k+o(k)}})$, and Algorithm~\ref{alg:boxes} runs in time $\bigO( (4^k \textrm{poly}(k \log n) )^{\bigO(4^k \cdot k)}) = \bigO((k \log n)^{4^{k+o(k)}})$, giving the desired runtime per matrix entry.
\end{proof}

\begin{corollary}
All latest cuts of size at most $k = o(\log \log n)$ for all pairs of vertices of DAG can be found in $\bigO(n^{\omega+o(1)})$ total time.
\end{corollary}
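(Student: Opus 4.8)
The plan is to obtain the corollary as an immediate asymptotic specialization of Theorem~\ref{thm:latestDense}: I would run exactly the same divide-and-conquer algorithm, and simply substitute the hypothesis $k=o(\log\log n)$ into its running time $\bigO\bigl((k\log n)^{4^{k+o(k)}}\cdot n^{\omega}\bigr)$, then argue that the prefactor $(k\log n)^{4^{k+o(k)}}$ is only $n^{o(1)}$, so that the bound collapses to $\bigO(n^{\omega+o(1)})$.

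For the key estimate, first I would bound the exponent: since $k=o(\log\log n)$ we have $k+o(k)=\bigO(k)=o(\log\log n)$, hence $4^{k+o(k)}=2^{\bigO(k)}=2^{o(\log\log n)}$, and writing $k\le(\log\log n)/g(n)$ for some $g(n)\to\infty$ gives $2^{\bigO(k)}\le(\log n)^{\bigO(1)/g(n)}=(\log n)^{o(1)}$. Next, since $k=o(\log\log n)=o(\log n)$, for all sufficiently large $n$ we have $k\log n\le\log^{2}n$, so
\[
  (k\log n)^{4^{k+o(k)}}\ \le\ \bigl(\log^{2}n\bigr)^{(\log n)^{o(1)}}\ =\ 2^{\,2(\log\log n)(\log n)^{o(1)}}\ =\ 2^{(\log n)^{o(1)}}\ =\ n^{o(1)},
\]
where the third equality absorbs the extra $\log\log n$ factor into $(\log n)^{o(1)}$ and the last uses that an exponent of the form $(\log n)^{o(1)}$ is $o(\log n)$. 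Substituting back into Theorem~\ref{thm:latestDense} yields total time $\bigO(n^{o(1)}\cdot n^{\omega})=\bigO(n^{\omega+o(1)})$, which is exactly the claim.

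I do not expect a genuine obstacle here; the corollary is a direct consequence, and the only thing to be careful about is that the nested exponentials really do collapse, which the displayed line confirms. Two minor points worth recording: the $o(k)$ inside the exponent merely collects the $\poly(k\log n)$ blow-ups coming from the superimposed-code construction inside the tensor power of dimension $K\le 4^{k}$ (Theorems~\ref{th:fast_superimposed} and~\ref{th:codes_decoding}) together with the $\poly(k)$ factors of the fixing step (Algorithm~\ref{alg:pruning}), all of which contribute at most a further $2^{\bigO(k)}$ and are therefore already subsumed by $4^{k+o(k)}=2^{o(\log\log n)}$; and the estimate uses the universe size $u=m$ only through $\log u$, so it goes through verbatim for simple graphs (where $m\le n^{2}$ makes $\log u=\bigO(\log n)$) and, for multigraphs, under the assumption $m\le 2^{\polylog n}$ already recorded in the footnote of Theorem~\ref{thm:latestDense}.
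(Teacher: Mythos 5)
Your proposal is correct and matches the paper's (implicit) argument: the corollary is stated as a direct consequence of Theorem~\ref{thm:latestDense}, obtained exactly as you do by substituting $k=o(\log\log n)$ and checking that $(k\log n)^{4^{k+o(k)}} = 2^{(\log n)^{o(1)}} = n^{o(1)}$. Your bookkeeping of the nested exponentials (and the remark on the universe size) is accurate, so nothing further is needed.
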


\subparagraph{Acknowledgments.}
We thank Paweł Gawrychowski, Mohsen Ghaffari, Atri Rudra and Peter Widmayer for the valuable discussions on this problem.

{
\ifprocs
\else
\bibliographystyle{alphaurlinit}
\bibliography{references}
}

\end{document}